\documentclass[11pt]{article}
\usepackage[english]{babel}
\usepackage[utf8]{inputenc}
\usepackage[T1]{fontenc}
\usepackage[letterpaper, margin=1in]{geometry}
\usepackage{graphicx}
\usepackage{framed}
\usepackage[framemethod=tikz]{mdframed}
\usepackage[disable]{todonotes}
\usepackage{color}

\usepackage{tikz}
\usetikzlibrary{graphs}

\usepackage[showdeletions]{color-edits}
\addauthor[mihir]{ms}{purple}
\addauthor[sri]{sr}{green}

\definecolor{darkgreen}{rgb}{0,0.5,0}
\definecolor{darkgray}{rgb}{0.2,0.2,0.2}
\usepackage[backref=page]{hyperref}
\hypersetup{
    unicode=false,          % non-Latin characters in Acrobat’s bookmarks
    colorlinks=true,        % false: boxed links; true: colored links
    linkcolor=blue,          % color of internal links (change box color with linkbordercolor)
    citecolor=purple,        % color of links to bibliography
    filecolor=magenta,      % color of file links
    urlcolor=cyan           % color of external links
}

\usepackage{amsthm}
\usepackage{amsmath}
\usepackage{amssymb}
\usepackage{amsfonts}
\usepackage{mathrsfs}
\usepackage{mathtools}
\usepackage{verbatim}
\usepackage{footnote}
\usepackage{algorithm}
\usepackage[noend]{algpseudocode}
\newcommand*{\longDefiningEquals}{\stackrel{\text{def}}{=\joinrel=}}%%% long equality symbol that is used to define stuff
  
\usepackage{lineno}
\usepackage{caption}
\usepackage{framed}
\usepackage{enumerate}

\usepackage{tikzsymbols}
\usepackage{thmtools,thm-restate}
\usepackage{nicefrac}

\usepackage{subcaption}
\usepackage{pdfpages}
\usepackage{bbm}

\usepackage{mdframed}

\usepackage[capitalize, nameinlink]{cleveref}
\usepackage[section]{placeins}
\usepackage{soul}
\crefname{theorem}{Theorem}{Theorems}
\crefname{ineq}{inequality}{inequalities}
\Crefname{lemma}{Lemma}{Lemmas}
\Crefname{invariant}{Invariant}{Invariants}
\Crefname{claim}{Claim}{Claims}
\Crefname{observation}{Observation}{Observations}
\Crefname{algorithm}{Algorithm}{Algorithms}
\Crefname{figure}{Figure}{Figures}
\newtheorem{challenge}{Challenge}
\mdfdefinestyle{challenge}{
    linewidth=1pt,
    leftmargin=0,
    rightmargin=0,
    backgroundcolor=gray!5,
    innertopmargin=0pt,
    innerbottommargin=5pt,
}
\surroundwithmdframed[style=challenge]{challenge}

\newtheorem{theorem}{Theorem}[section]
\newtheorem{lemma}[theorem]{Lemma}

\newtheorem{corollary}[theorem]{Corollary}

\newtheorem{definition}[theorem]{Definition}

\newtheorem{observation}[theorem]{Observation}
\newtheorem{claim}[theorem]{Claim}

\newtheorem*{remark*}{Remark}

\newcommand{\rSC}{r_{\textsc{SC}}}

\newcommand{\eqdef}{\stackrel{\text{\tiny\rm def}}{=}}
\def\wt{\mathsf{wt}}
\def\opt{\mathsf{OPT}}
\def\poly{\text{poly}}
\def\cov{\mathsf{Cov}}

\def\LOCAL{\mathsf{LOCAL}}
\def\stari{i^{\star}}
\def\starj{j^{\star}}
\def\wterr{\widehat{\wt}}
\def\derr{\hat{d}}
\def\hS{\hat{S}}
\def\setCover{\mathsf{SetCover}}
\def\LCA{\mathsf{LCA}}
\def\MIS{\mathsf{MIS}}

\def\eps{\varepsilon}

\newcommand{\rb}[1]{\left( #1 \right)}
\newcommand{\xpect}[1]{\mathbb{E} \left[ #1 \right]}

\newcommand{\E}[1]{\xpect{#1}}
\newcommand{\prob}[1]{\Pr \left[ #1 \right]}
\def\OPT{\mathsf{OPT}}
\newcommand{\cC}{\mathcal{C}}
\newcommand{\cE}{\mathcal{E}}
\newcommand{\cS}{\mathcal{S}}
\newcommand{\cA}{\mathcal{A}}
\usepackage{soul}
\Crefname{ineq}{Inequality}{Ineqs}
\newcommand{\IsSetDense}[1]{\textsc{IsSetDense}(#1)}
\newcommand{\IsEleCov}[1]{\textsc{IsEleCov}(#1)}
\newcommand{\GetWeight}[1]{\textsc{GetWeight}(#1)}
\newcommand{\lcaWeight}[1]{\textsc{LCA-Weight}(#1)}
\newcommand{\true}{\textsc{true}\xspace}
\newcommand{\false}{\textsc{false}\xspace}
\newcommand{\wtfew}{\wt_{\textsc{few}}}
\newcommand{\wtmany}{\wt_{\textsc{many}}}

\newcommand\previ[1]{{#1}^{\text{prev}}}

\def\starj{j^{\star}}

\newcommand\GetDeg[1]{\Call{degreeEstimate}{#1}}
\newcommand\GetWt[1]{\Call{weightEstimate}{#1}}
\title{Improved Local Computation Algorithms for Greedy Set Cover \\
    via Retroactive Updates
}
\author{
Slobodan Mitrović \thanks{e-mail: \texttt{smitrovic@ucdavis.edu}} \thanks{S. Mitrovi\'c and S. Ramachandran were supported by NSF CAREER Award No.~2340048} \\ UC Davis 
\and Srikkanth Ramachandran\thanks{e-mail: \texttt{sramach@ucdavis.edu}} \footnotemark[2] \\ UC Davis
\and Ronitt Rubinfeld \thanks{email: \texttt{ronitt@csail.mit.edu}} \thanks{R. Rubinfeld was supported by  NSF awards DMS-2022448 and CCF-2310818} \\ CSAIL, MIT and TAU
\and Mihir Singhal \thanks{email: \texttt{mihirs@berkeley.edu}. Part of this work was conducted while the authors were visiting the Simons Institute for the Theory of Computing} \\ UC Berkeley
 }
\date{ }

\def \poly {\text{poly}}

\begin{document}

\maketitle

\begin{abstract}
    In this work, we focus on designing an efficient Local Computation Algorithm (LCA) for the set cover problem, which is a core optimization task. 
    The state-of-the-art LCA for computing $O(\log \Delta)$-approximate set cover, developed by Grunau, Mitrović, Rubinfeld, and Vakilian [SODA '20], achieves query complexity of $\Delta^{O(\log \Delta)} \cdot f^{O(\log \Delta \cdot (\log \log \Delta + \log \log f))}$, where $\Delta$ is the maximum set size, and $f$ is the maximum frequency of any element in sets. 
    We present a new LCA that solves this problem using $f^{O(\log \Delta)}$ queries. 
    Specifically, for instances where $f = \text{poly} \log \Delta$, our algorithm improves the query complexity from $\Delta^{O(\log \Delta)}$ to $\Delta^{O(\log \log \Delta)}$.

    Our central technical contribution in designing LCAs is to aggressively sparsify the input instance but to allow for \emph{retroactive updates}.
    Namely, our main LCA sometimes ``corrects'' decisions it made in the previous recursive LCA calls.
    It enables us to achieve stronger concentration guarantees, which in turn allows for more efficient and ``sparser'' LCA execution.
    We believe that this technique will be of independent interest.
\end{abstract}

\setcounter{page}{0}
\thispagestyle{empty}

\newpage

\tableofcontents

\setcounter{page}{0}
\thispagestyle{empty}

\newpage

\section{Introduction}
With the prevalence of large volumes of data, our perception of what makes an algorithm efficient has evolved significantly. 
In scenarios where accessing the entire input is impractical or even infeasible, the focus has shifted toward designing algorithms that use sublinear resources -- particularly time and space -- to solve a given problem.

Alon, Rubinfeld, Tamir, Vardi, and Xie~\cite{RTVX11,ARVX12} introduced the Local Computation Algorithm (LCA) model to formalize the study of algorithms that, rather than processing the entire input, access only a ``small'' fraction of the input to produce a ``small'' fraction of the solution.
The complexity of an LCA is then measured by the number of \emph{queries} it makes to the input.
For example, instead of computing the entire maximal independent set (MIS) of a graph or finding a complete set cover, 
an LCA provides localized solutions by answering questions such as ``Is a given vertex in the MIS?'' or ``Is a specific set part of the set cover?''
Ideally, LCAs are significantly faster than algorithms that compute the entire MIS or set cover, particularly for large inputs.

Since their inception, the development of LCAs has been an active research area.
In particular, extensive research has focused on LCAs for a broad range of graph problems, including local reconstruction of graph properties \cite{KalePS13}, 
maximal independent sets \cite{RTVX11,ARVX12,BarenboimEPS16,LeviRY17,EvenMR14,Ghaffari16,
GhaffariU19,Ghaffari-LCA-FOCS22}, 
coloring \cite{chang2019complexity,dorobisz2023local}, algorithmic Lovász local lemma~\cite{achlioptas2020simple,brandt2021randomized},
approximate maximum matchings \cite{NO08,YYI09,MRVX12,MansourV13,LeviRY17, GhaffariU19,KMNT20,behnezhad2023local}, $k$-spanners~\cite{biswas2024average,LenzenL18,ParterRVY19,ArvivCLP23}, graph orientation~\cite{mitrovic2024locally}, 
correlation clustering~\cite{behnezhad2022almost,dalirrooyfard2024pruned},
local reconstruction of graph properties \cite{CampagnaGR13}, and local generation of large random objects \cite{MRVX12,EvenLMR17}.
LCAs are of interest in broader algorithmic contexts:
Several works demonstrate novel applications of LCAs in developing faster dynamic matching algorithms~\cite{bhattacharya2023dynamic,behnezhad2023dynamic,bhattacharya2024dynamic} and in the study of proper learning of monotone functions~\cite{lange2022properly,lange2023agnostic}.
Important connections have also been established between LCAs and Differential Privacy~\cite{jha2013testing,lange2023local}.

\paragraph{Local simulation of distributed algorithms.}
Developing LCAs by leveraging distributed algorithms as their starting point has been a particularly fruitful line of research.
Parnas and Ron~\cite{parnas2007approximating} observed that, on graphs with maximum degree $d$, the output of a vertex in an $r$-round $\LOCAL$ algorithm depends on at most $d^r$ vertices within its $r$-hop neighborhood. 
This idea directly translates a $\LOCAL$ to an LCA with a query complexity of $O(d^r)$.
Over the last few years, we have seen a proliferation of ideas on \emph{sparsifying} this $r$-hop neighborhood to achieve more efficient LCAs.
Notably, a series of works on designing LCAs for MIS~\cite{NO08,YYI09,RTVX11,ARVX12,reingold2016new,levi2015local,Ghaffari16,levi2017centralized,even2018best,GhaffariU19} 
culminated in a recent work by Ghaffari~\cite{Ghaffari-LCA-FOCS22}, 
which demonstrates an MIS LCA with $2^{O(r)} \cdot \poly \log n$ query complexity. 
This improvement from $d^{O(r)}$ to $2^{O(r)}$ allows for the first polynomial time algorithm in $d$.
Approximate maximum matching and vertex cover have also been extensively studied in this regime, e.g., see~\cite{KMNT20,behnezhad2022time} and references therein.\footnote{The MIS result \cite{Ghaffari-LCA-FOCS22} also implies an algorithm for maximal matching and hence $2$-approximate vertex cover. Nevertheless, $2^{O(r)}$ LCA query complexity was known even before for higher approximate factors.}
The $\LOCAL$ algorithms simulated in those works use $r = O(\log d)$ rounds, and hence those results yield $\poly(d, \log n)$ LCA query complexity.

Our work focuses on the set cover problem, a foundational topic in algorithm design and computational complexity.
A set cover instance consists of a universe of elements $\cE$ and a family of sets $\cS \subseteq 2^\cE$. The set cover problem aims to find the smallest-cardinality collection $\cS' \subseteq \cS$ whose union equals $\cE$. We denote by $\Delta$ the maximum size of a set, and by $f$ we denote the maximum number of sets that contain any given element $e$.
This problem is well-known to be NP-hard.
In fact, unless $P = NP$, no polynomial-time algorithm can achieve better than a (multiplicative) $(\ln \Delta - O(\log \log \Delta))$-approximation for set cover~\cite{feige1998threshold,alon2006algorithmic,moshkovitz2012projection,dinur2014analytical,raz1997sub,trevisan2001non}.

Let $\rSC = O(\log \Delta \cdot \log f)$ be the round complexity of the $O(\log \Delta)$-approximate set cover algorithm of Kuhn, Moscibroda, and Wattenhofer~\cite{kuhn2006price}, which is the best-known $\LOCAL$ algorithm for computing $O(\log \Delta)$-approximate set cover.
From this result, the Parnas-Ron paradigm implies an LCA with $(\Delta f)^{O(\rSC)}$ complexity. 
That complexity was improved by Grunau, Mitrović, Rubinfeld, and Vakilian~\cite{GMRV20} who designed an LCA for approximating set cover that uses $\Delta^{O(\log \Delta)} \cdot 2^{O(\rSC \cdot \log \rSC)}$ queries; 
we refer the reader to the related work section (\cref{sec:prior-work,sec:related-work}) for more results on this topic.
Despite substantial advances in $\LCA$ design for core problems in algorithmic graph theory, the limitations of existing techniques and the types of methods required for achieving $2^{O(\rSC)}$ LCA query complexity for set cover remain unclear.
This motivates our central question:
\begin{center}
    \emph{Does there exist an LCA for $O(\log \Delta)$-approximate set cover that uses $2^{O(\rSC)}$ queries?}
\end{center}
We answer this question affirmatively.
Namely, our LCA has $2^{O(\log \Delta \cdot \log f)} = f^{O(\log \Delta)}$ query complexity deterministically, and the approximation is provided in expectation.
This significantly improves on the prior work \cite{GMRV20}.
Specifically, when $f = \poly \log \Delta$ the query complexity improves from $\Delta^{O(\log \Delta)}$~\cite{GMRV20} to $\Delta^{O(\log \log \Delta)}$.\footnote{Cases where $f \le \log \Delta$ can be handled more efficiently by $f$-approximate algorithms.}
Our approach can be outlined as follows: (1) we start by applying the Parnas-Ron paradigm to a distributed set cover algorithm, similar to that in \cite{kuhn2006price}; (2) we then \emph{sparsify} the volume of input required for the LCA computation.
The theme of LCA sparsification has seen a proliferation in recent years and has 
been used to significantly improve the efficiency of LCAs. 

Our work contributes to this active research area by introducing novel sparsification techniques for set cover LCAs. 
While some previous set-cover LCAs also apply sparsification, our methods diverge significantly.

\paragraph{Open problems.}
Our work makes an important step in LCA sparsification for set cover by showing how to simulate an $\rSC$-round $\LOCAL$ algorithm with $2^{O(\rSC)}$ queries.
Observe that an LCA in which each vertex explores only two of its neighbors, during the simulation of an $\rSC$-round algorithm, already visits $2^{\rSC}$ vertices. In that sense, our LCA quite sparsely explores the $\rSC$-hop neighborhood of a vertex.
It remains open whether an LCA for set cover with $\poly(\Delta, f)$ query complexity exists.
It would be even intriguing to obtain any asymptotic improvement in the exponent of this complexity, e.g., an LCA using $2^{o(\rSC)}$ queries.

\subsection{Overview of our result and approach}
Our work shows how to compute an \emph{integral} $O(\log \Delta)$-approximate set cover via LCA. More formally, we prove the following result.
\begin{theorem}
\label{thm:main}
    There exists an $\LCA$ for $\setCover$ that outputs a randomized cover which is $O(\log \Delta)$-approximate in expectation while using $2^{O(\log \Delta \cdot \log f)}$ probes, where $\Delta$ is the maximum set size and $f$ the maximum element frequency.
\end{theorem}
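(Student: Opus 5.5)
We plan to prove \cref{thm:main} by first fixing a bucketed, KMW-style greedy distributed algorithm and then simulating it locally with a sparsified recursion that allows retroactive corrections.

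\textbf{The base algorithm.} The algorithm runs $O(\log \Delta)$ phases. In phase $i$ the density threshold is $\Delta/2^i$. Each phase consists of $O(\log f)$ iterations. In each iteration, a set is \emph{dense} if its number of uncovered elements is at least roughly the current threshold. Dense sets are then added with probability $p$, and $p$ doubles from $1/f$ up to $1$ across the iterations of the phase. Standard analysis (charging the cost of an added set to the uncovered elements it covers, with the probabilities chosen so that each element is covered by $O(1)$ newly added dense sets in expectation) shows this yields an $O(\log\Delta)$-approximation in expectation. The approximation survives if densities are only known up to a constant factor. The total number of iterations is $\rSC = O(\log\Delta\cdot\log f)$.

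\textbf{The LCA.} We answer the query ``is set $S$ added in iteration $t$'' recursively. To decide whether $S$ is dense at iteration $t$, we do not inspect all $\Delta$ elements. Instead we sample $O(1)$ elements of $S$, using $O(\log f)$ samples only where needed, and for each sampled element $e$ we ask whether $e$ is already covered. That question requires querying the at most $f$ sets containing $e$ at earlier iterations. A naive recursion therefore has branching factor $\Theta(f\log f)$ per iteration and depth $\rSC$, giving $(f\log f)^{\rSC}$ probes, which is too much.

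The key idea is to reduce the branching to $2^{O(1)}$ on average per iteration, or to $f^{O(1)}$ per phase. An element's coverage status changes only when some set containing it is selected, and within a phase the sampling probability grows geometrically. So we keep a tentative estimate from a coarse sample and recompute only when the estimate is near a threshold, or when the previous phase's decisions are invalidated. These are the retroactive updates: a later call may flip an earlier tentative decision about density, and we must argue that such flips occur rarely enough that the approximation is preserved in expectation. Setting up the recursion $Q(t)\le c\,Q(t-1)+\ldots$ with constant effective branching across iterations, and $f^{O(1)}$ per phase, gives $f^{O(\log\Delta)}=2^{O(\log\Delta\log f)}$ probes.

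\textbf{Main obstacle.} The hard part is proving that the sparse, sample-based density estimates, corrected retroactively, are accurate enough. Wrong estimates may add non-dense sets or skip dense ones. We need concentration strong enough that the expected cost of these errors is charged to $O(\log\Delta)\cdot\OPT$, while the number of samples stays constant-per-iteration.

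I would first try a potential argument. An error at iteration $t$ for set $S$ occurs with probability $2^{-\Omega(\text{samples})}$. Because thresholds shrink geometrically, an underestimated set is caught and corrected in the next phase, so it contributes at most a constant factor to its charge.

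Finally, the resulting randomized cover must be a valid cover. To guarantee this, in a final step each still-uncovered element adds an arbitrary containing set. The expected cost of this step is bounded by the same charging argument.
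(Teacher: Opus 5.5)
Your outline has the right vocabulary (phases with doubling selection probabilities, sampled density tests, retroactive corrections, a final naive covering). However, the mechanism that brings the probe count down to $2^{O(\log\Delta\log f)}$ is missing, and the concrete claims you put in its place would fail.

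A density test at phase $i$ cannot use $O(1)$ or $O(\log f)$ samples. $S$ may contain $\Delta$ elements of which only a $2^{-i}$ fraction are uncovered, so $\Theta(2^{i})$ uniform samples are needed just to see one uncovered element. The naive branching is therefore not $\Theta(f\log f)$: it carries a factor $2^{i}$ per level, which is exactly the source of the $\Delta^{O(\log\Delta)}$ term in prior work.

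The paper removes this factor by cross-phase pruning. It samples $2^{i}K$ elements (further boosted by $2^{\delta b}$) and tests them against phases $i'=1,\dots,i-1$ \emph{in increasing order}. Elements found covered are discarded, and the oracle returns $\mathsf{fail}$ if more than $2^{i-i'+O(1)}K$ survive phase $i'$. Since after phase $i'$ only about a $2^{-i'}$ fraction of $S$ should be uncovered, only $2^{O(i-i')}$ samples ever reach the expensive level $i'$. This gives $T(i)\le\sum_{i'<i}2^{O(i-i')}T(i')$, hence a \emph{worst-case} bound of $2^{O(\log\Delta\log f)}$. On the element side, the $2^{j}K$ sampled sets are pruned iteration by iteration in the same way.

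``Recompute only when the estimate is near a threshold'' does nothing to reduce the cost of deciding whether sampled elements are covered. So no recurrence with constant effective branching follows from your description, and an ``on average'' bound would not give the stated probe complexity anyway.

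The pruning in turn relies on every set being sparse after each earlier phase. With cheap tests this holds only with constant probability, and this — not revisiting near-threshold decisions — is what retroactive updates are for. At phase $\stari$, every earlier iteration $(i,j)$ is re-estimated with $2^{O(\stari-i)}$ times more fresh samples, which is affordable inside the recurrence. The randomness of earlier phases is held fixed, and missed weight is only ever \emph{added}. This makes failure probabilities decay like $2^{-\Omega(\stari-i)}$ and independent across phases.

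Your approximation argument also does not close. With $O(1)$ samples, ``error probability $2^{-\Omega(\text{samples})}$'' is a constant and cannot be union-bounded over the recursion. What is needed are the specific tolerances of \cref{thm:approx-promise,thm:eqn-to-approx}:
\begin{itemize}
\item a dense set may be missed with probability at most $\frac{1}{8}\cdot 2^{-(\stari-i)}$;
\item a light set may be falsely declared dense only with probability decaying geometrically in the iteration index, as in \cref{eqn:deg:over-estimate}. A constant false-positive rate at $j=\log f$ would give weight $1$ to a constant fraction of light sets, far more than $O(\OPT)$, and this cost is not charged by ``each element is covered by $O(1)$ dense sets'';
\item a tail bound as in \cref{eqn:set:sparsification}, which pays for the final naive covering (charged to the sets of an optimal cover) and also bounds how often the pruning cutoff fires.
\end{itemize}
Establishing these guarantees for mutually recursive oracles is the substance of the proof, and your proposal does not supply them.
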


Plugging this $\LCA$ in the approach of \cite{YYI09} or \cite{NO08}, yields a new sublinear approximation for Set Cover. For algorithms that return $O(\log \Delta)$ approximation with probability at least $2/3$, our work implies that $\Delta^{O(\log f)}$ queries are sufficient, whereas the approach of \cite{YYI09} requires $\Delta^{O(\log \Delta + \log f)}$ queries. If one were to construct a sublinear approximation algorithm for Set Cover using the $\LCA$ of \cite{GMRV20}, then the query complexity would be $\Delta^{O(\log \Delta + \log f \cdot (\log \log \Delta + \log \log f))}$, which is not better than \cite{YYI09}. 

Furthermore, we can also use techniques from \cite{KMNT20} and reduce the space complexity of our $\LCA$ and the sublinear algorithm to $O((\Delta + f) \cdot \poly(\log n))$. If the common random string needed by our $\LCA$ and the sublinear algorithm is not counted in the space complexity, but instead provided by an oracle, then the space complexity can even be reduced to $O(\poly(\log \Delta, \log f))$. 

\def\topt{\tilde{\opt}}
\begin{corollary}
\label{corollary}
    There exists a sublinear approximation algorithm for $\setCover$ that makes at most $2^{O(\log \Delta \log f)}$ queries, uses at most $O((\Delta f) \poly(\log n))$ space, and returns an estimate $\topt$ such that w.p. at least $2/3$, $|\topt - \opt| \leq O(\log \Delta) \cdot \opt$. 
\end{corollary}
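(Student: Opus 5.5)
We would derive the corollary from \cref{thm:main} by sampling, in the spirit of \cite{NO08,YYI09}, adapted to set cover. Let $\mathcal{A}$ be the $\LCA$ of \cref{thm:main}, run with a fixed shared random string, and let $C$ be the cover it defines. Then $\E{|C|} \le c\log\Delta\cdot\opt$ for some constant $c$, and trivially $|C|\ge\opt$. We never estimate $|C|$ directly, because sampling sets uniformly would introduce an additive error proportional to $|\cS|$. Instead we charge each chosen set to the elements it covers: every element $e$ is assigned to the first set (in some fixed order) of $C$ containing it, and we define $w(e)=1/|\{e' : e' \text{ assigned to the same set}\}|$. Then $\sum_{e\in\cE} w(e) = |C|$. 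Computing $w(e)$ needs $f$ $\LCA$ calls to find the sets of $C$ containing $e$, and then, for the chosen set $S$, $\Delta\cdot f$ further calls to determine which elements of $S$ are assigned to $S$. In total this is $O(\Delta f)\cdot 2^{O(\log\Delta\log f)} = 2^{O(\log\Delta\log f)}$ probes, using $\Delta\le 2^{\log\Delta\log f}$ once $f\ge 2$.

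\textbf{Estimator and concentration.} We sample $s$ uniform elements $e_1,\dots,e_s$ and output $\topt = \frac{n}{s}\sum_i w(e_i)$, where $n=|\cE|$. This is unbiased for $|C|$ given the random string. The main obstacle is that the additive error of uniform sampling is of order $n/\sqrt{s}$, while $\opt$ may be as small as $n/\Delta$. We handle this with a variance bound: since $w\in[1/\Delta,1]$, we have $\sum_e w(e)^2\le\sum_e w(e)=|C|$, so $\mathrm{Var}[\topt]\le n|C|/s$. Using $|C|\ge\opt\ge n/\Delta$, choosing $s=O(\Delta)$ gives standard deviation at most $\sqrt{n|C|/s}\le\sqrt{|C|\cdot\opt}$. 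Chebyshev then gives $|\topt-|C||\le O(\sqrt{|C|\opt})$ with constant probability. Markov applied to $|C|$ gives $|C|\le O(\log\Delta)\opt$ with constant probability. Combining these, $\opt/2\lesssim\topt\le O(\log\Delta)\opt$, i.e.\ $|\topt-\opt|\le O(\log\Delta)\opt$. Constants are tuned, or we take a median over $O(1)$ independent repetitions, to reach probability $2/3$. The total query count is $s\cdot 2^{O(\log\Delta\log f)}=2^{O(\log\Delta\log f)}$.

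\textbf{Space.} The sampler itself only stores $O(s)$ counters. The dominant cost is the $\LCA$'s random string and its recursion stack. Following \cite{KMNT20}, we replace the full shared random string by $O(\log n)$-wise (or bounded-independence) hash functions, which occupy $\poly(\log n)$ bits. We would need to check that the analysis of \cref{thm:main} only uses expectations over limited-independence events; we expect this to be the delicate point. The depth-first recursive evaluation keeps one frame per level, and each frame stores at most $O(\Delta+f)$ neighbor identifiers. Both contributions together fit in $O(\Delta f\,\poly(\log n))$ space.
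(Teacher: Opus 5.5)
Your query-complexity and approximation argument is correct, but it takes a detour the paper does not need. The paper samples $100\Delta f$ sets uniformly, runs the $\LCA$ on each, and rescales the number found in the cover by $|\cS|/(100\Delta f)$. The additive error ``proportional to $|\cS|$'' that you wanted to avoid is harmless, assuming (as the paper implicitly does) that no set is empty. Then $|\cS|\le f|\cE|\le \Delta f\cdot\opt\le \Delta f\,|C|$, so at least $100$ cover sets are sampled in expectation. A Chernoff bound then gives constant relative error, and combined with Markov on $|C|$ that is the whole proof.

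Your element-charging estimator also works: $\sum_e w(e)^2\le\sum_e w(e)$ together with $|\cE|\le\Delta\,\opt$ gives relative variance $O(1/c')$ for $s=c'\Delta$. One small correction: $\sum_e w(e)$ counts only the sets of $C$ that are the first set for some element, not $|C|$. That subfamily is still a cover, so its size lies in $[\opt,|C|]$ and nothing downstream changes. Compared with the paper, your route uses $O(\Delta^2 f)$ $\LCA$ calls instead of $O(\Delta f)$ and needs $|\cE|$ rather than $|\cS|$. In exchange it does not rely on $|\cS|\le f|\cE|$. The paper's route is simpler.

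The space bound is where your proposal is genuinely incomplete, as you note yourself. Replacing the random string by $O(\log n)$-wise independent hashing is not automatically safe. \cref{lem:alg-main:approx} uses Chernoff bounds over samples of size $2^{\Theta(\stari-i+\delta b)}K$, with target failure probabilities as small as $8^{-b}2^{-(\stari-i)}$. Moreover, \cref{thm:eqn-to-approx} and the proof of \cref{proof:eqn:set-sparse} multiply failure probabilities across phases, using independence from the randomness of earlier iterations. Each of those events depends on exponentially many random bits, so none of this follows from bounded independence without a new argument.

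The paper uses a different tool. It observes that the recursion can be run while storing only the current call's identifiers, loop indices and pruning counters, i.e.\ in $\poly(\log\Delta,\log f)$ space besides IDs. It then supplies the $\Delta^{O(\log f)}$ random bits from Nisan's generator. Since a set's output is claimed to depend only on the at most $\Delta f$ sets sharing an element with it, $O(\Delta f\,\poly(\log\Delta,\log f))$ truly random bits suffice. This treats the $\LCA$ as a space-bounded computation to be fooled, rather than re-deriving the concentration arguments under limited independence. Your accounting of the recursion stack itself (depth $O(\log\Delta\log f)$, $O(\Delta+f)$ neighbour identifiers per frame) is fine.
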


\subsubsection{Organization of the paper.}

In \cref{sec:warmup} we describe a simple pruning approach that allows us to obtain an $\LCA$ for fractional set cover with query complexity $\exp{(\tilde{O}(\log \Delta \cdot \log f))}$. These ideas are further refined in \cref{sec:prob-est,sec:main-alg} to remove log-factors in the exponent. This is perhaps the most novel and interesting part of our work, since the existing sparsification pipelines rely on obtaining high probability estimates of the state of the $\LOCAL$ algorithm, which would not be enough to attain the desired query complexity.
In \cref{sec:frac-to-integral} we describe how to modify the algorithm to obtain an integral set cover. In short, whenever the algorithm adds a weight $w$ to some set $S$ in the fractional algorithm, the integral algorithm instead simply adds it with probability $w$ to the set cover. We argue that this results in an integral set cover that is $O(\log \Delta)$-approximate in expectation.
In \cref{sec:proof-of-corollary} we show how to implement our $\LCA$ with low space complexity and describe the sublinear algorithms to prove \cref{corollary}.

We begin by starting with an algorithm that computes a fractional set cover, since it is easier to reason about the approximation ratio of fractional set cover algorithms as compared to integral ones. In \cref{sec:frac-to-integral} we show how to modify our algorithms to obtain an integral set cover. 

\begin{algorithm}
    \begin{algorithmic}[1]
        \Statex \textbf{Input:} A set cover instance $\langle \mathcal{S}, \mathcal{E} \rangle$,
        \State $\Delta \longDefiningEquals \max_{S \in \mathcal{S}} |S|$
        \State $f \longDefiningEquals \max_{e \in \mathcal{E}} |\{S \in \mathcal{S} \mid S \ni e\}|$.
        \State $\cov \gets \emptyset$ \Comment{Elements covered so far}
        \State For every $S \in \mathcal{S}$, set $\wt(S) \gets 0$ \Comment{Initialize set weights $\wt$ to $0$}
        \For{$i = 1, 2, \dots \log \Delta$}
            \For{$j = 1, 2, \dots \log f$}
                \For{every set $S \in \mathcal{S}$ (in parallel)}
                    \If{$d(S) \longDefiningEquals |S \setminus \cov| \geq \Delta / 2^i$} \Comment{$d(S)$ is the effective degree of a set}
                        \State $\wt(S) \gets \min(1, \wt(S) + 2^j / f)$ \label{line:gmrv20:add-set} \Comment{Add $2^j / f$ to $\wt(S)$}
                    \EndIf 
                \EndFor 
                \State $\cov \gets \{e \in \cE \mid \wt(e) \geq 1\}$
            \EndFor 
        \EndFor 
        \State \Return $\wt$
    \end{algorithmic}
    \caption{The set cover algorithm from \cite{berger1994efficient,GMRV20} adapted to computing fractional set cover.}
    \label{gmrv20:alg}
\end{algorithm}

\subsubsection{Starting point: Fractional Set Cover algorithm}
Our starting point is \cref{gmrv20:alg}, a fractional $\LOCAL$ algorithm for the set cover problem that can be attributed to several prior works~\cite{berger1994efficient,kuhn2006price,GMRV20}.
\paragraph{Details of \cref{gmrv20:alg}.} The algorithm consists of $\log \Delta$ \textit{phases}.
In each phase, a subset of elements is covered by adding weights to the sets such that the following holds:
\begin{itemize}
    \item After removing the elements covered in a phase, the maximum size of the sets halves.
    \item During each phase, the total weight added to the sets is at most $O(\opt)$, where $\opt$ denotes the minimum (integral) set cover size.
\end{itemize}
For convenience, we use the term ``effective degree'' of a set $S$, denoted by $d(S)$, that counts the number of elements in $S$ that are uncovered in the current phase.
Observe that the ``effective degree'' reduces throughout the course of the algorithm, reaching zero for every set by the time the computation is over.

We now describe the workings of a phase in which the maximum degree is at most $\Delta_i =\Delta / 2^{i-1}$. We say that a set $S$ is ``dense'' if $d(S) \geq \Delta_i / 2$. Note that a dense set covers as many elements as any other set (upto a constant factor).
Each phase contains $\log f$ \textit{iterations}. In each iteration, a subset of elements is covered, ensuring that (i) weight is added only to \textit{dense} sets and (ii) $\wt(e) \leq O(1)$ for each element during the first iteration in which it becomes covered.
These conditions ensure that the chosen sets cover nearly as many elements as any other family of sets, with only a few sets covering each element.
Thus, this covering solution costs $O(\opt)$. A formal proof is described later (see \cref{lemma:frac-base-approx}). 

\subsubsection{Introducing slack in thresholds and pruning (\cref{sec:warmup})}
\paragraph{Observation 1.} Our first observation is that \cref{gmrv20:alg} allows some slack in the thresholds at the expense of only a constant factor increase in the approximation ratio guarantee. 
In particular, \cref{gmrv20:alg} uses two thresholds:  
(i) in the test of \textit{dense} sets,
i.e., whether ~$d(S) \geq \Delta / 2^i$, and (ii) in the test of whether an element is covered, i.e., whether ~$\wt(e) \geq 1$. 
Both conditions can be relaxed. The relaxation differs in our two algorithms. We describe the first and simpler one below. 
Instead of testing whether $d(S) \geq \Delta / 2^i$, an algorithm can test whether a set of $t$ sampled elements of $S$ contains at least $\frac{1}{2} \cdot t / 2^i$ uncovered elements; the factor of $1/2$ and the variable $t$ are chosen to increase the probability of ``a correct test''. 
A similar relaxation can be performed for the other test. \cref{alg:frac:setcover:base} implements these relaxations.

\paragraph{Why is Observation 1 not sufficient?} 
To test whether a set $S$ has effective degree $d(S) \geq \Delta / 2^i$ at the beginning of some iteration $(i, j)$, we must test whether at least $2^i$ elements containing $S$ are covered at the end of the previous iteration, $(\previ{i}, \previ{j})$. Note that $(\previ{i}, \previ{j}) = (i-1, \log f)$ if $j = 1$ and $(i, j-1)$ otherwise.
Following the natural LCA design,
the query complexity $T(i, j)$ to simulate the $\LOCAL$ algorithm up to iteration $(i, j)$ satisfies the recurrence $T(i, j) = 2^i \cdot T(\previ{i}, \previ{j})$. 
% \rrtodo{have we defined T(i,j) yet? if so, i forgot waht is i,j}
It is important to note that this discussion does not yet consider the concentration required for accurate estimates; thus, even the above recurrence represents a hypothetical scenario.
Even with this approach, the best solution achievable is $T(i, j) = 2^{O(i^2\log f + j)}$, i.e., no better than Parnas-Ron reduction of $\Delta^{O(\log \Delta \log f)}$. 
Our second observation is essential for eliminating this term and thereby reducing the query complexity.

\paragraph{Observation 2.} 
The second observation is that if each phase of \cref{gmrv20:alg} can be simulated with sufficiently high probability, then many recursive calls in the natural LCA based on \cref{gmrv20:alg} can be pruned. 
The natural $\LCA$ design proceeds as follows: to determine the state of a set or element $v$ at the end of some iteration $t$ of the $\LOCAL$ algorithm, one recursively queries the states of \emph{all} $v$'s neighbors at iteration $t-1$ and then simulates iteration $t$ of \cref{gmrv20:alg}.
Our first observation enables querying only a subset of neighbors, as estimates suffice; however, this alone does not yield an efficient $\LCA$. 
The second observation is that for every $S$, its size at the end of phase $i$ drops to $\Delta / 2^i$. 

Let $\hS$ be a sample of $2^i$ elements $e \in S$ each sampled independently and uniformly at random with replacement. Of the $2^i$ elements in $\hS$, we expect at most half of them to be uncovered after each phase.
On average, only $O(1)$ many elements in $\hS$ are expected to survive until phase $i-1$ and, more generally, $2^{O(i - i')}$ many elements in $\hS$ are expected to survive until phase $i' < i$. 
The recurrence resulting from this pruning procedure is discussed in the next subsection.

\subsubsection{A useful query recurrence}
% The pruning for the main algorithm \cref{sec:main-alg} is more intricate than described above.
Let $T(i)$ denote the query complexity required by the $\LCA$ to simulate $i$ iterations of the $\LOCAL$ algorithm \cref{gmrv20:alg}. Based on our pruning discussion we outline the ideal recurrence we would like the algorithm to satisfy. The same recurrence is also used in \cite{Ghaffari-LCA-FOCS22}:
\begin{equation}\label{eq:recurrence}
    T(i) \leq \sum_{j < i} T(j) \cdot 2^{O(i - j)}.
\end{equation}
This recurrence yields a solution of $T(i) \leq T(0) \cdot 2^{O(i)}$ and can be verified by induction.
If our $\LCA$ satisfies this recurrence, then to simulate $\log \Delta \cdot \log f$ rounds, our query complexity would be $2^{O(\log \Delta \log f)}$, as in the claim of \cref{thm:main}.
We now outline one of the main challenges in achieving such a recursive relation. 

The previous approaches relied on near-perfectly simulating the distributed $\LOCAL$ algorithm \cref{gmrv20:alg}. 
More precisely, the simulation of every iteration of \cref{gmrv20:alg} for every element fails with probability at most $1 / \poly(\Delta)$.
This probability is small enough that those elements for which the simulation fails can be covered naively. 
Observe that there are at most $|\cE| / \poly(\Delta)$ simulation-failed elements, while an optimum solution has a size at least $|\cE| / \Delta$. 
% Hence, such low failure probability can be handled trivially.
If we were to follow this approach, at every iteration $(i, j)$ we would need to verify whether half of the elements in a given set were covered before iteration $(i, j)$ with failure probability $1 / \poly(\Delta)$. 
This would require at least $\log \Delta$ recursive calls at every iteration. Unfortunately, this would not allow the recurrence \cref{eq:recurrence}, in which we are allowed at most \textit{constant} many recursive calls to the previous iteration.
One way to address this is by allowing higher failure probability in the tests of the conditions.
%, which cannot provide a failure probability as small as $1 / \poly(\Delta)$. 
\begin{challenge}
\label{challenge:failure-prob}
    How can we obtain an $O(\log \Delta)$ approximation even if mistakes in the tests performed  in every iteration of \cref{gmrv20:alg} are made ``relatively often''?
\end{challenge}

To overcome this difficulty, we move away from this argument by saying that even if some iterations are not simulated perfectly, \cref{gmrv20:alg} still provides a $O(\log \Delta)$-approximation. Finding the characteristics of these mistakes, i.e., how frequently these mistakes can be made, addresses our first challenge.

\subsubsection{Weaker invariants -- addressing \cref{challenge:failure-prob}}
The main question we now aim to address is: ``How much slack do the conditions of \cref{gmrv20:alg} allow for?'' 
Two crucial components in the analysis of \cref{gmrv20:alg} are the following invariants:
\begin{enumerate}[(i)]
    \item At the end of phase $i$, the effective degree of every set is at most $\Delta / 2^i$.
    \item Weight is added in phase $i$ only to those sets whose effective degree is at least $\Delta / 2^i$.
\end{enumerate}
We show that the analysis can be modified so that the following \textbf{weaker} invariants hold, which can be seen as ``smoother'' (or ``fractional'') versions of the original:
\begin{enumerate}[(i)]
    \item At the end of phase $i$, the probability that a set has effective degree $\Delta / 2^{i'}$ is at most $2^{-\Omega(i-i')}$ for every $i' < i$.
    \item The probability that weight is added to a set during iteration $(i, j)$ whose effective degree was $d$ during iteration $(i, j')$ for every $j' < j$ is at most $(d/\Delta_i) \cdot 2^{-\Omega(j-j')}$.
\end{enumerate}

Given the allowed slack identified in \cref{challenge:failure-prob}, one can design a suitable natural $\LCA$, but such an $\LCA$ is hard to analyze. Since we allow the possibility that sets still have a large effective degree, we can no longer prune the $\LCA$s effectively. More specifically, the following guarantee is lost: For every set $S$, at most $O(\Delta/2^{i})$ of its elements are uncovered by phase $i$. The guarantee holds with probability $1$ in \cref{alg:frac:setcover:base}, but our modification to address \cref{challenge:failure-prob} reduces the probability to at most a constant. Hence, for a given set $S$, there is a possibility that almost all of its elements survive up to phase $i$. This would require us to simulate the state of too many neighbors of $S$ in the $\LCA$, and we cannot hope to obtain \cref{eq:recurrence}. Note that we wish to aim for \textit{worst case} bounds for the query complexity of the $\LCA$, so the natural $\LCA$ does not seem suitable. 

\begin{challenge} \label{challenge:lca-design}
    Design a $\setCover$ $\LCA$ that satisfies the weaker invariants and the recurrence \cref{eq:recurrence}.
\end{challenge}

\subsubsection{Retroactive updates -- addressing \cref{challenge:lca-design}}

To overcome this second challenge, we introduce the idea of \textit{retroactive} updates. Let us first inspect the recurrence we aim for and deduce what we can afford to do given this recurrence.
To obtain information about the state of a vertex, i.e., an element or a set, at the end of some iteration $t$, we can afford to obtain information about the state of $2^{O(t - t')}$ neighbors at the end of round $t'$. 
More crucially, we can also afford to \textit{re-execute} simulations of these neighbors up to $2^{O(t - t')}$ times \footnote{$2^{O(t - t')} \cdot 2^{O(t - t')}$ is still $2^{O(t - t')}$, with a larger constant}. Imagine that an oracle provides only probabilistic guarantees about the state. 
Then, re-executing these simulations can enable us to drastically reduce the probability of oracle failure. 
We shall see that the probability of failure can be made \textit{exponential} in $t - t'$, which turns out to be sufficient for our needs.

We now discuss how we implement the above idea. 
We transform the $\LOCAL$ algorithm $\cA$ designed for \cref{challenge:failure-prob} into an algorithm $\cA'$ suitable for $\LCA$ analysis as follows.
Recall that $\cA$ performs $\log \Delta$ phases sequentially.
The transformation is recursive in nature. Let $\cA(i)$ denote the truncation of $\cA$ up to $i$ phases. $\cA'$ also performs the same number of phases. We describe below $\cA'(i)$, defined analogously to $\cA(i)$. The final algorithm we want is $\cA'(\log \Delta)$.
\begin{itemize}
    \item Execute $\cA'(i-1)$. If $i = 1$, we do nothing.
    \item Re-execute $\cA'(i')$,  $2^{O(i - i')}$ times for every $i' < i$, in order from $i' = 1, 2, \dots i-1$. 
    Identify any potential mistakes made and \textit{selectively} account for them \textit{retroactively}.
    
    The purpose of these re-executions is to identify potential mistakes made in the previous phases and \textit{retroactively} correct them. That is, if in phase $i$ we find that in some iteration $(i', j')$ the set $S$ was dense -- but $S$ was incorrectly estimated to be light -- then we shall increase the weight of $S$ by the amount it should have been increased by during iteration $(i', j')$. 
    This will reduce the probability of failure of the invariants we need. After the retroactive updates for the previous phases, the failure probability is low enough that we can afford to execute phase $i$.

    When we want to correct the errors made in some phase $i'$, we must keep the randomness in phases $i{''} < i'$ \textit{fixed}. This is because we want to estimate more accurately the mistakes that were made \textit{only} \ in phase $i'$. 
    Potential mistakes in previous phases are to be carried over and we will account for them. Hence, during the re-executions of $\mathcal{A}'(i')$, we shall only \textit{freshly} sample randomness associated with phase $i'$. We explain this in more detail in the discussion of the main algorithm.
    \item Execute $\cA(i)$.
\end{itemize}

Fortunately, the nature of analysis in our solution to \cref{challenge:failure-prob} allows us to account for the retroactive updates made in the approximation ratio as well. The retroactive updates serve two purposes: (i) they allow us to prune the $\LCA$ designed effectively to realize \cref{eq:recurrence}, and (ii) they allow us to easily reason about the approximation ratio by using Chernoff and union bounds to analyze the failure probabilities. 

\subsubsection{Modifying fractional to integral (\cref{sec:frac-to-integral})}
We modify the analysis of \cite{GMRV20} to show that a simple modification to the fractional algorithm outputs an integral solution. Importantly, in expectation, this modification yields an algorithm that incurs only an additional constant factor loss in the approximation. Note that we do not round the computed fractional cover; rather, a natural modification of the fractional cover solution also yields an integral cover.

\subsection{Comparison to the Most Relevant Prior Work}
\label{sec:prior-work}
The most relevant to our approach are \cite{KMNT20,GMRV20,Ghaffari-LCA-FOCS22}. 
We now discuss the relationship between these results and ours. 

\emph{Remark:} This subsection is tuned toward readers who are at least vaguely familiar with the techniques and ideas used in the prior work mentioned.

\paragraph{Comparison of techniques.} Our main recurrence \cref{eq:recurrence} is inspired from \cite{Ghaffari-LCA-FOCS22}. The main technical portion of \cite{Ghaffari-LCA-FOCS22} is a method to transform the $O(\log \Delta)$ $\LOCAL$ algorithm of \cite{Ghaffari16} so that it is amenable to efficient $\LCA$ simulations whose oracles satisfy \cref{eq:recurrence}. This transformation is similar to our solution to \cref{challenge:lca-design}. The difference is that in \cite{Ghaffari-LCA-FOCS22}, the underlying $\LOCAL$ algorithm operates in $O(\log \Delta)$ rounds with the guarantee that a \textit{constant} fraction of those rounds are ``good''. The guarantees of the ``good'' rounds are identical to each other, so one can afford to \textit{delay} the occurrence of a good round. In our case, each phase has a slightly \textit{different} measure of progress. In phase $i$ we require that the effective sizes of sets ``roughly'' drop to $\Delta / 2^{i}$. So for instance, we cannot afford to run the phases in parallel as they must occur sequentially one after the other. This is why we require the \textit{retroactive} updates.

Our design of the $\LCA$ and analysis of its round complexity borrows some ideas from \cite{KMNT20}. The key idea here is to tie the query complexity to the algorithm's output, making the analysis of the early-stopping rule easier. The argument of the approximation ratio, however, is very different. \cite{KMNT20} study approximations of maximum matching and their approximation ratio arguments are done by relating to vertex cover. 
In addition, their approach is tailored to the maximum matching problem and requires novel concentration bounds. 
Our concentration bounds are simple compositions of Chernoff and the union bounds. Consequently, we hope our analysis is easier to comprehend and apply to other settings.

Our results are to be directly compared to \cite{GMRV20}. We solve the same problem with the same approximation guarantees but with better query complexity. The design of our and the $\LCA$ in \cite{GMRV20} are considerably different. The main optimizations of \cite{GMRV20} are in the simulations of an individual phase, with no direct pruning done across different phases. This is the reason behind the $\Delta^{O(\log \Delta)}$ factor in the query complexity of \cite{GMRV20}. 
A potential lesson here is that adaptively making the recursive calls seems more beneficial than the ``repeated sparsification'' technique in the $\LCA$ setting.

\subsection{Related Work}
\label{sec:related-work}
As discussed in the introduction, LCAs have been studied in many contexts and for many problems. This section lists related work in addition to that mentioned in the introduction.

Results that predate but can be viewed within the LCA model include locally (list)-decodable codes~\cite{BF90,L89,GLR+91,FF93,KT00,Yek10,
GRS00,AS03,STV01,GKZ08,IW97,KS09,BET10}), 
local decompression~ \cite{MuthuSZ,SG,FV07,GN}, local reconstruction and filters for monotone and
Lipshitz functions \cite{ACC+08,SS10,BhattacharyyaGJJRW12,jha2013testing,
AwasthiJMR12}, satisfying assignments for $k$-CNF \cite{RTVX11,ARVX12}, local computation mechanism design \cite{HassidimMV16}, and
local decompression \cite{DLRR13}, 

\paragraph{Pruning.} 
Different variants of pruning are implicit in prior works on LCAs such as \cite{NO08,YYI09,Behnezhad21,dalirrooyfard2024pruned}.
The underlying idea is that before trying to obtain the state of some node at the end of some iteration $t$ in the algorithm, make sure to obtain its state in iterations $1, 2, \dots t-1$, in this order. We may find information about the state in some intermediate iteration $t' \ll t$ that allows us to ignore the state of $v$ after $t'$. Thus, a large chunk of recursive calls can be pruned.
\cite{NO08} suggested this optimization for designing LCAs from the Randomized Greedy MIS algorithm. \cite{YYI09} analyzed the expected query cost of this optimization, and \cite{Behnezhad21} provided analysis for line graphs (i.e., the maximal matching problem). 
The analysis is quite involved and different in the two papers.
\cite{dalirrooyfard2024pruned} provide an optimal LCA for the correlation clustering problem. The pruning in \cite{dalirrooyfard2024pruned} is much simpler: once the recursion tree contains too many calls, stop and make do with the information that you have.

\paragraph{Vertex cover, i.e., the case of $f = 2$.}
The set cover problem with $f = 2$ is equivalent to vertex cover for graphs, which has been extensively studied.
The $\poly(\Delta, \log n)$ round MIS algorithm of Ghaffari~\cite{Ghaffari-LCA-FOCS22} already implies a $\poly(\Delta, \log n)$ algorithm for $2$-approximate vertex cover through the classical reduction to MIS; that is, maximal matching is a $2$-approximation of vertex cover and is also an MIS of the line graph. The line graph of $G$ is the graph $G'$ whose vertices correspond to edges of $G$ and an edges exists between vertices of $G'$ if the corresponding edges in $G$ share a vertex.

Even though it is not explicitly mentioned, it can be shown that the LCA algorithm for computing approximate maximum matching by \cite{KMNT20} outputs a vertex cover as well. 
Compared to \cite{Ghaffari-LCA-FOCS22}, the approach in \cite{KMNT20} incurs a higher approximation ratio, although only linear dependence on $\Delta$ in the query complexity.

\paragraph{Local simulation of random order greedy.}
Let $\Delta$ be the largest set size and $f$ the highest element frequency.
Nguyen and Onak~\cite{NO08} designed a sublinear-time algorithm to approximate set cover by simulating a classical greedy algorithm on randomly ordered sets.
Their algorithm uses $\rb{{2^{(\Delta f)}}^4}^{O(2^\Delta)}$ queries.
This complexity was substantially improved in a seminal work by Yoshida, Yamamoto, and Ito~\cite{YYI09}, who demonstrated how to solve the same problem by $(\Delta f)^{O(\Delta)}$ queries via average case analysis of the randomized greedy $\MIS$ algorithm. We note that their output provides an approximation guarantee of $H_{\Delta} \cdot \opt + \eps n$ where $n$ is the number of elements and $H_{\Delta}$ denotes the harmonic number. For e.g., setting $\eps = 1 / \Delta$, one gets an $H_{\Delta} + 1$ multiplicative approximation. One can also modify their algorithm to obtain $O(\log \Delta)$ approximation within $(\Delta f)^{O(\log \Delta)}$ queries (see \cref{sec:proof-of-corollary}).

\paragraph{Set cover in $O(\log n)$ $\LOCAL$ rounds.}
It is known how to compute approximate set cover in $O(\log n)$ $\LOCAL$ rounds~\cite{kuhn2016local,chang2023complexity}. 
Nevertheless, it is unclear how to leverage those algorithms in designing LCAs. 
In particular, $2^{O(\log n)} = \poly(n)$; hence, the known techniques would not directly lead to sublinear algorithms.
\section{Preliminaries}

\subsection{Set Cover and Local Computation Algorithms}

\begin{definition}[$\setCover$] \label{def:setcover}
    Given a universe of elements $\cE$ and a family of subsets of $\cE$, $\cS \subseteq 2^\cE$, we call $\cC \subseteq \cS$ a cover of $\cS$ if for every $e \in \cE$ there exists at least one $S \in \cS$ containing $e$. $\cS$ is said to be optimal if it has the minimum size. We denote the set cover instance as $\langle \cS, \cE \rangle$ and the size of the optimal set cover by $\opt(\langle \cS, \cE \rangle)$ or simply $\opt$ when the instance is clear in context.
\end{definition}

\begin{definition}[Local Computation Algorithm ($\LCA$)] \label{def:lca}
    Given a problem $\Pi$ with input $I$ and desired output $O$, both presumed to be of large size and ordered, a local computation algorithm for $\Pi$ is a procedure $P$ that takes as input, 
    \begin{enumerate}[(i)]
    \item an oracle that provides access to any element of $I$ upon query, 
    \item an index $i$ with $1 \leq i \leq |O|$,
    \item oracle access to a sufficiently large common random string,
    \end{enumerate}
    and outputs $O[i]$, i.e., the element of $O$ indexed by $i$. The query complexity of $P$ is the size of the portion of $I$ that it obtains from the input oracles to obtain $O[i]$.
\end{definition}

\paragraph{$\LCA$ conventions.} In this paper, our goal is to provide an $\LCA$ for the Set Cover problem. We shall assume that all set cover instances are represented as a bipartite graph whose one partition denotes the sets and the other denotes the elements. An edge exists in the graph between a set $S$ and element $e$ if and only if $e \in S$. We shall assume that this graph is stored in \textbf{adjacency list} representation and the LCA oracles we design can access these lists through input oracles. In particular, the $\LCA$ can provide a vertex $v$ in the graph and an integer $i$ to the input oracle which then returns the vertex $w$ that is stored in the $i^{th}$ position of the adjacency list of $v$. The $\LCA$ must support the following queries on a cover $\mathcal{C}$:
\begin{itemize}
    \item Given a set $S$, does $S$ belong to $\mathcal{C}$?
    \item Given an element $e$, output a set $S$ such that $S \ni e$ and $S \in \mathcal{C}$.
\end{itemize}
The $\LCA$ must support parallel queries, i.e., when the $\LCA$ is executed independently on all the sets, the output should be consistent with a given solution to the set cover instance. Additionally, in our algorithms, the randomness can be presampled. That is, given a large common random string, the $\LCA$ executes a deterministic process. Hence the query complexity is completely deterministic, but the guarantee on the approximation ratio is in expectation over the randomness.

\subsection{Chernoff Bounds}

\begin{theorem}[Chernoff bound]\label{lemma:chernoff}
	Let $X_1, \ldots, X_k$ be independent random variables taking values in $[0, 1]$. Let $X \eqdef \sum_{i = 1}^k X_i$ and $\mu \eqdef \E{X}$. Then,
	\begin{enumerate}[(A)]
		\item\label{item:delta-at-most-1} For any $\delta \in [0, 1]$ it holds that $\prob{|X - \mu| \ge \delta \mu} \le 2 \exp\rb{- \delta^2 \mu / 3}$.
		\item\label{item:delta-at-most-1-le} For any $\delta \in [0, 1]$ it holds that $\prob{X \le (1 - \delta) \mu} \le \exp\rb{- \delta^2 \mu / 2}$.
		\item\label{item:delta-at-most-1-ge} For any $\delta \in [0, 1]$ it holds that $\prob{X \ge (1 + \delta) \mu} \le \exp\rb{- \delta^2 \mu / 3}$.
		\item\label{item:delta-at-least-1} For any $\delta \ge 1$ it holds that $\prob{X \ge (1 + \delta) \mu} \le \exp\rb{- \delta \mu / 3}$.
	\end{enumerate}
\end{theorem}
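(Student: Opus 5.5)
The plan is the standard moment generating function argument. Items (A)--(D) are then read off from two master tail bounds, one per side.

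\emph{Upper tail.} Fix $\lambda>0$. Convexity of $x\mapsto e^{\lambda x}$ on $[0,1]$ gives $e^{\lambda x}\le 1+(e^\lambda-1)x$. Hence, with $p_i \eqdef \E{X_i}$,
\[
\E{e^{\lambda X_i}}\le 1+p_i(e^\lambda-1)\le \exp\rb{p_i(e^\lambda-1)}.
\]
By independence, $\E{e^{\lambda X}}\le \exp\rb{\mu(e^\lambda-1)}$. Markov's inequality applied to $e^{\lambda X}$ then gives
\[
\prob{X\ge(1+\delta)\mu}\le \exp\rb{\mu(e^\lambda-1)-\lambda(1+\delta)\mu}.
\]
Choosing $\lambda=\ln(1+\delta)$ yields the master bound
\[
\prob{X\ge(1+\delta)\mu}\le \rb{\frac{e^\delta}{(1+\delta)^{1+\delta}}}^{\mu},
\]
valid for every $\delta>0$.

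\emph{Lower tail.} Apply the same argument to $e^{-\lambda X}$ with $\lambda=-\ln(1-\delta)$. This gives
\[
\prob{X\le(1-\delta)\mu}\le \rb{\frac{e^{-\delta}}{(1-\delta)^{1-\delta}}}^{\mu}
\]
for $\delta\in(0,1)$. The case $\delta=1$ follows by continuity, or directly from $\prob{X\le 0}\le \E{e^{-\lambda X}}$ with $\lambda\to\infty$.

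\emph{Reduction to calculus.} It remains to bound the exponents by one-variable inequalities.

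For (B), we need $\delta+(1-\delta)\ln(1-\delta)\ge \delta^2/2$ on $[0,1)$. Let $g(\delta)$ be the left side minus the right side. Then $g(0)=0$, $g'(\delta)=-\ln(1-\delta)-\delta\ge 0$, so $g\ge 0$.

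For (C), we need $(1+\delta)\ln(1+\delta)-\delta\ge \delta^2/3$ on $[0,1]$. Let $h(\delta)$ be the left side minus the right side. Then $h(0)=h'(0)=0$ and $h''(\delta)=\frac{1}{1+\delta}-\frac23$. So $h'$ increases on $[0,1/2]$ and decreases on $[1/2,1]$. Since $h'(1)=\ln 2-\frac23>0$, we get $h'\ge 0$ on $[0,1]$, hence $h\ge 0$.

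For (D), we need $(1+\delta)\ln(1+\delta)-\delta\ge \delta/3$ for $\delta\ge1$. The function $\phi(\delta)=(1+\delta)\ln(1+\delta)-\frac43\delta$ satisfies $\phi(1)=2\ln 2-\frac43>0$. Its derivative is $\phi'(\delta)=\ln(1+\delta)-\frac13\ge \ln 2-\frac13>0$, so $\phi$ stays positive for all $\delta\ge 1$.

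Item (A) is a union bound of (B) and (C), using $\exp(-\delta^2\mu/2)\le\exp(-\delta^2\mu/3)$.

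The only step needing care is (C), where $h''$ changes sign. The fix is the endpoint check $h'(1)>0$ together with the unimodality of $h'$, so I expect no real obstacle. Note that the moment generating function bound uses only $X_i\in[0,1]$ and independence, exactly the hypotheses of the statement.
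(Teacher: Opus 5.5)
Your proof is correct. The chord bound $e^{\lambda x}\le 1+(e^\lambda-1)x$ on $[0,1]$ properly extends the moment-generating-function estimate from Bernoulli to $[0,1]$-valued variables, and both master tail bounds follow from it. Each calculus check is also valid, including the sign change of $h''$ in (C), which you handle through $h'(1)=\ln 2-\tfrac23>0$, and the $\delta=1$ endpoint of (B).

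The paper gives no proof of this theorem; it quotes the standard Chernoff bound in its preliminaries. Your argument is the usual textbook moment-generating-function derivation behind that citation, so there is no competing route to compare.
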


\begin{corollary}
\label{clm:weird:chernoff}
    Suppose $X_1, X_2, \dots X_t$ are i.i.d Bernoulli random variables such that $\Pr[X_i = 1] = p$ and $tp \leq \mu$. Then, we have $\Pr[\sum X_i > 2 \mu] \leq \exp{(-\mu / 3)}$.
\end{corollary}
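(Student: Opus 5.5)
The plan is to reduce the statement to part~(\ref{item:delta-at-least-1}) of \cref{lemma:chernoff}, with $\delta = 1$. A monotone coupling handles the fact that the true mean $tp$ may be strictly smaller than $\mu$. Write $X = \sum_{i=1}^t X_i$. Then $\E{X} = tp \le \mu$, and the event we must bound is $X > 2\mu$.

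First, I will pad the sum so that its expectation is exactly $\mu$. Concretely, introduce one extra independent random variable $Y \in [0,1]$, or $\lceil \mu - tp \rceil$ such variables, each with values in $[0,1]$. Choose their distributions so that $\E{X + \sum Y} = \mu$ exactly; for example, take deterministic constants in $[0,1]$ summing to $\mu - tp$, which are trivially independent. Let $X' = X + \sum Y$. Since each $Y \ge 0$, we have $X \le X'$ pointwise. Hence $\Pr[X > 2\mu] \le \Pr[X' \ge 2\mu]$.

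Next, apply \cref{lemma:chernoff}(\ref{item:delta-at-least-1}) to $X'$. Its summands are independent with values in $[0,1]$, and its mean is $\mu' = \mu$. Taking $\delta = 1$ gives $\Pr[X' \ge 2\mu] \le \exp(-\mu/3)$, which is the claim.

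The only step needing care is the padding. Part~(\ref{item:delta-at-least-1}) is stated with respect to the exact mean, so $\mu$ cannot simply be substituted for $tp$. The constant-variable trick handles this cleanly because constants are independent of everything and lie in $[0,1]$ once the gap $\mu - tp$ is split into pieces of size at most $1$. The degenerate case $\mu = 0$ also needs a check: then $p = 0$ or $t = 0$, so $X = 0$ almost surely, and $\Pr[X > 0] = 0 \le 1$.
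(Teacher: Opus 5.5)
Your proof is correct, and it differs from the paper's in how it reaches part~(D) of \cref{lemma:chernoff}.

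The paper leaves the sum untouched and keeps its true mean $\overline{\mu} = tp$. It chooses $\delta$ by $(1+\delta)\overline{\mu} = 2\mu$ and notes that $\delta \ge 1$ because $\overline{\mu} \le \mu$. It then computes $\delta\overline{\mu}/3 = (2\mu - \overline{\mu})/3 \ge \mu/3$.

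You instead fix $\delta = 1$ and raise the mean to exactly $\mu$ by adding deterministic $[0,1]$-valued summands, using the pointwise domination $X \le X'$.

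What the paper's route buys: it needs no auxiliary variables, and it actually yields the slightly sharper bound $\exp(-(2\mu - tp)/3)$. However, it tacitly requires $tp > 0$ for $\delta$ to be defined; the case $tp = 0$ is trivial but left unaddressed.

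What your route buys: the padding avoids both the algebra and that degenerate case. Your separate check of $\mu = 0$ is harmless but unnecessary, since the bound is then $1$. It also proves something more general: the upper-tail bounds of \cref{lemma:chernoff} stay valid when the exact mean is replaced by any upper bound on it.
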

\begin{proof}
Let $\overline{\mu} = \xpect{\sum_i X_i}$.
    We apply the Chernoff upper tail bound setting $\delta$ such that $(1+\delta)\overline{\mu} = 2\mu$. We have  $\delta \geq 1$ since $\overline{\mu} = tp \leq \mu$. We get $\Pr[\sum X_i > 2 \mu] \leq \exp{(-\overline{\mu} \delta / 3)} = \exp{(-(1+\delta)\overline{\mu}/3)} \cdot \exp{(\overline{\mu}/3)} \leq \exp{(-2\mu / 3)} \cdot \exp{(\mu/3)} \leq \exp{(-\mu/3)}$.
\end{proof}
\section{Warmup: A Weaker $\LCA$ for Fractional Set Cover}
\label{sec:warmup}

Let $\Delta$ be the maximum size of a set and $f$ be the maximum number of sets any element is present in. 
The state-of-the-art set cover LCA~\cite{GMRV20} takes $2^{O(\log^2 \Delta + \log f \cdot \log \Delta (\log\log \Delta + \log \log f))}$ queries to output whether a set is in the cover. 
Observe that the case of $f = 2$ reduces to vertex cover, which, by the discussion in the introduction, can be computed by $\poly(\Delta)$ $\LCA$ queries. 

We gradually present our ideas to obtain the advertised query complexity. In this section, we provide a weaker version of our algorithm as a warm-up. 
In the weaker version, the algorithm returns a fractional set cover. 
It can be converted to an integral one, as we explain in \cref{sec:frac-to-integral}, but we defer the analysis for now. The conversion to an integral solution causes an additional loss in the approximation ratio by a factor of at most $9$.

This version already improves the query complexity from $2^{O(\log^2 \Delta + \log f \cdot \log \Delta \cdot \log \log (\Delta f))}$ to a complexity of $2^{O(\log f \cdot \log \Delta \cdot \log \log(\Delta f))}$. Our algorithm is never worse in asymptotic complexity, as the first term is completely shaved off. In particular, if we consider the range where $\log \Delta \in \omega(\log f \log \log(\Delta f))$, we already have a clear asymptotic improvement over the state of the art. 

The starting point of our approach is an algorithm for the fractional set cover problem.
Specifically, given a set cover instance with elements $\cE$ and sets $\cS$, the aim of the set cover problem is to assign weights $\wt : \cS \rightarrow [0, 1]$ to the sets such that the following constraint is satisfied
\begin{equation*}
    \forall e \in \cE: \ \ \ \ \ \wt(e) \longDefiningEquals \sum\limits_{e \in S} \wt(S) \geq 1,
\end{equation*}
and $\|\wt\|_1$ is minimized.
\cref{alg:frac:setcover:base} provides an approach that approximately solves fractional set cover. In that algorithm, it is instructive to think of $\ell = r = 1$ and $j_S = 1$ for every set $S$ and every phase $i$. General values of $\ell, r$ and $j_S$ are needed for later analysis, and hence, we present \cref{alg:frac:setcover:base} in that form.
Intuitively, the parameters $\ell$ and $r$ provide us some slack in computing $\wt$ while designing our $\LCA$. More specifically, in Lines 9, 10 we can add weight to sets that are ``not so dense'', but we must ensure that if weight is added during iteration $j$, then it must be added during further iterations as well. Lines 13 and 14 show that for elements whose weight lies in $[\ell, r]$ we can ``arbitrarily'' afford to pretend that they are covered even if they are not.

\begin{algorithm}[h]
    \begin{algorithmic}[1]
        \Statex \textbf{Input:} Integers $\Delta, f$, positive reals $ \ell \leq 1 \leq r$, and set cover instance $\langle \mathcal{S}, \mathcal{E} \rangle$
        \State $\cov \gets \emptyset$ \Comment{The set of elements covered so far}
        \State $\wt \gets [0]$ \Comment{The weight function so far}
        \For{$i = 1, 2, \dots \log \Delta$}
            {\color{blue}{
            \For{every set $S \in \mathcal{S}$ (in parallel)}
                \State Arbitrarily choose a ``cut-off'' threshold $j_S \in [1, \log f + 1]$
            \EndFor}} \Comment{\cref{gmrv20:alg} operates as setting $j_S = 1$ always}
            \For{$j = 1, 2, \dots \log f$}
                \For{every $S \in \mathcal{S}$ with $d(S) \geq \Delta / 2^i$ (in parallel)} \Comment{As a reminder, $d(S) \longDefiningEquals |S \setminus \cov|$}
                    \State $\wt(S) \gets \min(1, \wt(S) + 2^j / f)$
                \EndFor
                {\color{blue}{
                \For{every set $S$ with $d(S) \in [\Delta / 2^{i+2}, \Delta/2^{i})$ (in parallel)}
                    \If{$j < j_S$} 
                        $\wt(S) \gets \min(1, \wt(S) + 2^j / f)$
                    \EndIf 
                \EndFor}} \Comment{These lines are skipped in \cref{gmrv20:alg}, as $j_S = 1$}
                \For{every $e \in \cE$ (in parallel)} \label{line:rem:e}
                    \State {\color{blue}{Let $\tau_e$ be an arbitrary real from $[\ell, r]$
                    }} 
                    \Comment{In \cref{gmrv20:alg}, $\ell = r = 1$}
                    \If{$\wt(e) \ge \tau_e$}
                        \State $\cov \gets \cov \cup \{e\}$
                    \EndIf
                \EndFor
            \EndFor
        \EndFor
        \State \Return $\wt / \ell$ \Comment{All weights re-scaled by factor $1 / \ell$}
    \end{algorithmic}
    \caption{A $\LOCAL$ $O(\log \Delta)$ approximate fractional set cover algorithm. Differences from \cref{gmrv20:alg} are highlighted in blue.}
    \label{alg:frac:setcover:base}
\end{algorithm}

Using standard techniques, it is not difficult to show that \cref{alg:frac:setcover:base} returns a fractional set cover which is at most $O(\log \Delta)$ factor larger than the optimal solution.
\begin{restatable}{lemma}{lemmafracbaseapprox}
\label{lemma:frac-base-approx}
    Let $\opt$ denote the optimal integral set cover. \cref{alg:frac:setcover:base} returns a fractional set cover $\wt$ such that $$\sum \wt(S) \leq O(r / \ell \cdot \log \Delta) \cdot \opt.$$
\end{restatable}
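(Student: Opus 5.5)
The proof has two parts: feasibility of $\wt/\ell$, and a per-phase charging argument showing each phase adds total weight $O(r)\cdot\opt$. Summing over the $\log\Delta$ phases and rescaling by $1/\ell$ then gives the claimed bound.

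\emph{Feasibility.} Weights never decrease, so every element that enters $\cov$ satisfies $\wt(e) \ge \tau_e \ge \ell$ from then on. Now take an element $e$ that is still uncovered in the final phase $i=\log\Delta$. Any set $S \ni e$ has $d(S) \ge 1 = \Delta/2^{i}$, so $S$ is dense in every iteration of that phase. It therefore receives $\sum_{j\le \log f} 2^j/f \ge 1$ in total, capped, so $\wt(S)=1$. Hence $\wt(e) \ge 1 \ge \ell$ for every $e$, and $\wt/\ell$ is a fractional cover.

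\emph{Monotonicity within a phase.} In phase $i$, the family of sets receiving weight in iteration $j$ is non-increasing in $j$. Effective degrees only decrease. A dense set that drops into the window $[\Delta/2^{i+2},\Delta/2^i)$ is then governed by the test $j<j_S$, which, once it fails, fails for all later $j$. A set that drops below $\Delta/2^{i+2}$ never again receives weight in this phase. Consequently, a set receiving $2^j/f$ in iteration $j\ge 2$ already received $\sum_{j'<j}2^{j'}/f \ge 2^{j}/(2f)$ earlier in the phase, up to the cap at $1$; capped sets receive nothing further.

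\emph{Per-element charge.} Fix an element $e$ that is uncovered at the start of iteration $j$, so $\wt(e)<\tau_e\le r$. By monotonicity, the total weight added to sets containing $e$ in iteration $j$ is at most $2\wt(e)+2 = O(r)$; the additive $2$ covers $j=1$, where at most $f$ sets each get $2/f$. These increments grow geometrically in $j$, so over the whole phase, up to and including the iteration in which $e$ becomes covered, the sets containing $e$ receive $O(r)$ in total.

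\emph{Phase cost.} Each set receiving weight in phase $i$ has at least $\Delta/2^{i+2}$ uncovered elements. Split each increment equally among those elements; every uncovered element of $S$ then receives at most $\tfrac{2^{i+2}}{\Delta}$ times the increment to $S$. Summing over sets and applying the per-element charge, the total weight added in phase $i$ is at most $\tfrac{2^{i+2}}{\Delta}\cdot O(r)\cdot |U_i|$, where $U_i$ is the set of elements uncovered at the start of phase $i$. Finally, assume the invariant that every set has $d(S) < \Delta/2^{i-1}$ at the start of phase $i$. Since an optimal cover covers $U_i$, this gives $|U_i| \le \sum_{S\in\opt} d(S) \le \opt\cdot \Delta/2^{i-1}$. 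So phase $i$ costs $O(r)\cdot\opt$, and summing over phases and dividing by $\ell$ yields $O(r/\ell\cdot\log\Delta)\cdot\opt$.

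\emph{Main obstacle.} The step I expect to be hardest is the invariant $d(S)<\Delta/2^{i-1}$ when $r>1$. A set that stays dense through phase $i-1$ reaches weight $1$, but its elements may still have $\wt(e)<\tau_e$ and so remain uncovered. I would first try to avoid this invariant by bounding the charge only through sets not yet capped. If that fails, I would instead bound $|U_i|$ more carefully: count the elements of $U_i$ lying in a set of weight $1$ separately, charging them against the $O(r)$ mass already paid for those sets in earlier phases.
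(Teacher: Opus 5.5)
Your outline follows the paper's proof step for step: feasibility from the last phase, within-phase monotonicity giving an $O(r)$ charge per element (the paper's $X_e\le 3r$), spreading each increment over the at least $\Delta/2^{i+2}$ uncovered elements of the set, and $|U_i|\le\opt\cdot\Delta/2^{i-1}$. Everything before the invariant is sound. The gap is exactly the one you flagged, and neither of your repairs can close it, because \emph{for $r>1$ the statement is false as written}.

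Here is a counterexample. Take $f=2$, so each phase has one iteration with increment $2/f=1$. Take a set $B$ with $|B|=\Delta$, a singleton $\{e\}$ for every $e\in B$, all $j_S=1$, $\ell=1<r$, and $\tau_e\equiv r$. In phase $1$, $B$ gets weight $1$, but every element has $\wt(e)=1<r$. So nothing enters $\cov$, and $d(B)=\Delta$ until the last phase; your invariant fails from phase $2$ on. The singletons first become dense in phase $\log\Delta$, where $d=1\ge\Delta/2^{\log\Delta}$, and each then gets weight $1$. The output therefore costs $\Delta+1$ while $\opt=1$. In this instance the charged sets are uncapped singletons, so your first repair changes nothing: the bad quantity is $|U_i|=\Delta$ itself. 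Your second repair would have to pay for $\Delta$ uncovered elements of $B$ with $B$'s unit mass.

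The paper's proof has the same hole. Restricting to $\cS_i=\{S:\sum_{k<i}\wt_k(S)<\ell\}$ does make all residual set sizes at most $\Delta_i$. But the concluding step ``and hence the original instance'' needs $|\cE_i|\le\Delta_i\cdot\opt$, which is again your invariant. In the example, $B\notin\cS_i$ and $\opt(\langle\cS_i,\cE_i\rangle)=\Delta$ for every $i\ge 2$.

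What your argument does prove completely is the case $r=1$. Suppose $d(S)\ge\Delta/2^{i-1}$ at the start of phase $i$. Then $S$ was dense throughout phase $i-1$, so it reached weight $1\ge\tau_e$ (a phase's increments sum to $2-2/f\ge 1$). All its elements then entered $\cov$, a contradiction. This gives $O(\log\Delta/\ell)\cdot\opt$.

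For $r>1$ the algorithm itself must change. For example, multiply the cap and every increment by $r$. That is the $r=1$ algorithm run on $\wt/r$ with thresholds $\tau_e/r\in[\ell/r,1]$, and your argument then yields the claimed $O(r/\ell\cdot\log\Delta)\cdot\opt$. This matters because the warm-up invokes the lemma with $r=2.5$ in the proof of \cref{sim:claim}.
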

We defer the proof of this claim to \cref{sec:proof-frac-base-approx}.

\subsection{(Approximate) \texorpdfstring{$\LCA$}{LCA} Simulation of \cref{alg:frac:setcover:base}}

In this section, we design a $\LCA$ that almost perfectly simulates \cref{alg:frac:setcover:base} while using much fewer queries than a naive simulation. The key component of our $\LCA$ is the design of two $\LCA$ oracles $\IsSetDense{}$ and $\IsEleCov{}$ that perform probabilistic tests of whether a set $S$ (resp. element $e$) is dense (resp. covered) during (resp. at the end of) iteration $(i, j)$. See \cref{alg:Is-Ele-Cov,alg:Is-Set-Dense}.

\paragraph{High-level description of oracles.} The $\IsSetDense$ oracle takes as input (i) an iteration $(i, j)$ and (ii) a set $S$, then performs a probabilistic test of whether the set $S$ was dense \textit{at the beginning of} iteration $(i, j)$. Similarly, the $\IsEleCov$ oracle takes as input (i) an iteration $(i, j)$ and an element $e$, then performs a probabilistic test of whether $e$ is covered \textit{at the end of} iteration $(i, j)$. These oracles recursively invoke each other as follows. $\IsSetDense{}$ queries a subset of elements in $S$ and invokes $\IsEleCov{}$ on these elements at \textit{previous iterations} to test whether $S$ is dense. Details of which elements are selected and at what iterations they are tested are discussed later. Similarly, $\IsEleCov{i, j, e}$ queries a subset of the sets $S$ containing $e$ and then invokes $\IsSetDense$ on these sets at iterations \textit{on or before} $(i, j)$. The details of how the sets are chosen and the iterations in which they are tested are discussed later. Similar oracles are constructed in \cite{GMRV20}. The key difference is our observation that several recursive invocations can be pruned.

Using the oracles $\IsSetDense{}$ and $\IsEleCov{}$ we can simulate \cref{alg:frac:setcover:base}. We design two $\LCA$ oracles $\GetWeight{}$ and $\lcaWeight{}$.
$\GetWeight{}$ takes as input a set $S$ and returns an \textit{estimate} of the weight $\wt(S)$ assigned by \cref{alg:frac:setcover:base} to $S$. The estimate is such that for every set $S$, the probability that $\wt(S)$ computed by the two algorithms differs is $o(1 / \Delta f)$. Due to this, it is not guaranteed that $\wt(S)$ returned is a valid cover. 
The oracle $\lcaWeight{}$ further modifies the weights so that $\wt$ is a valid cover.

\paragraph{Description of \GetWeight{$S$} and \lcaWeight{$S$} (\cref{alg:get-set-weight}).}
The goal of procedure \GetWeight{$S$} is to compute the fractional weight of a given set $S$ after a probabilistic simulation of \cref{alg:frac:setcover:base}.
As in \cref{alg:frac:setcover:base}, \GetWeight{$S$} runs iterations $(1, 1)$ through $(\log \Delta, \log f)$. 
In iteration $(i, j)$, if $S$ is dense enough in that iteration and $\wt(S) \le 1$, \GetWeight{$S$} increases $\wt(S)$ by $2^j / f$.
However, instead of testing the density by asking whether $d(S) \ge \Delta / 2^i$ as done in \cref{alg:frac:setcover:base}, \GetWeight{$S$} performs that test by invoking \IsSetDense{$i, j, S$}, e.g., see \cref{line:is-set-dense} of \GetWeight{$S$}. Note that the $\IsSetDense{}, \IsEleCov{}$ oracles are purely a function of the topology of the graph and the shared random bits of the algorithm. 

Since our $\LCA$ simulation of \cref{alg:frac:setcover:base} is noisy, it can happen that after the set weights $\wt$ are computed, it still holds some elements are not covered. To ensure that all elements are covered, we compute the weight of every element $e$. If it so happens that $e$ is still uncovered, then we cover it ``naively'', i.e., by choosing an arbitrary set that contains it.
Hence, we are guaranteed that the output of \lcaWeight{$S$} makes sure such elements are covered.

\begin{algorithm}[h]
    \begin{algorithmic}[1]

        \Procedure{GetWeight}{$S$}
            \State $\wt(S) \gets 0$
            \For{$i = 1, 2, \dots \log \Delta$}
                \For{$j = 1, 2, \dots \log f$}
                    \State $\mathsf{density} \gets \Call{IsSetDense}{i, j, S}$
                    \If{$\mathsf{density} = \mathsf{Dense}$ \label{line:is-set-dense}} 
                        \State $\wt(S) \gets \min(1, \wt(S) + 2^j / f)$ 
                        % \Comment{For the integral version, sample $S$ with probability $2^j / f$}
                    \ElsIf{$\mathsf{density} = \mathsf{Bad}$} \Comment{With very low probability, the tests fail}
                        \State $\wt(S) \gets 1$
                    \Else 
                        \State \textbf{break} \Comment{Ensure that density tests for $S$ in a given phase is monotonous}
                        \Statex \Comment{i.e., move to the next phase $i$, once $\Call{IsSetDense}{i, j, S}$ returns $\mathsf{False}$}
                \EndIf
                \EndFor 
            \EndFor
            \State \Return $\wt(S)$
        \EndProcedure
        \Procedure{LCA-weight}{$S$}
            \State $\wt(S) \gets \Call{GetWeight}{S}$
            \For{$e \in S$}
                \State $\wt(e) \gets \sum_{S' \ni e} \GetWeight{S'}$
                \If{$\wt(e) < 1$ and $S$ has the smallest ID among the sets containing $e$}
                    \State $\wt(S) \gets 1$ \Comment{Ensure that $e$ is covered}
                \EndIf
            \EndFor
            \State \Return $\wt(S)$
        \EndProcedure
    \end{algorithmic}
    \caption{An LCA for computing a fractional set cover}
    \label{alg:get-set-weight}
\end{algorithm}  
\paragraph{Description of \IsSetDense{$i, j, S$} (\cref{alg:Is-Set-Dense}).}
Given indices $i$ and $j$, and a set $S$, this procedure attempts to answer the question: ``Is set $S$ dense at the beginning of iteration $(i, j)$?''
A direct way of answering this question is to count the uncovered elements of $S$ by the beginning of iteration $(i, j)$.
We aim to perform the count more efficiently as follows: the algorithm samples $t$ elements on \cref{line:dense-est-sample} and counts how many of the sampled ones are uncovered. It is instructive to think of $t$ as significantly smaller than the size of $S$.
The idea here is that if $S$ is dense, then among the $t$ elements, sufficiently many of them will be uncovered.

\begin{algorithm}
\begin{algorithmic}[1]
    \Procedure{Sample}{$A$, $t$, $m$}
        \Statex Given an ordered set $A$ of size at most $m$, sample a multiset $S$ of at most $t$ elements from $A$ such that for each $a \in A$, $\Pr[S_i = a] = 1/m$.
        \State Initialize a multiset $S \gets \emptyset$
        \For{each $i = 0, 1, 2, \dots t-1$}
            \State Sample an integer $i \in [m]$
            \State If $i \leq |A|$, then add the $i^{th}$ element of $A$ to $S$
        \EndFor 
        \State \Return $S$.
    \EndProcedure 
\end{algorithmic}
\caption{A generic sampling procedure that we use.}
\label{alg:sample}
\end{algorithm}

\begin{algorithm}[h]
    \begin{algorithmic}[1]
        \Statex $L \longDefiningEquals \log \Delta \cdot \log f$
        \Procedure{IsSetDense}{$i, j, S$}
            \State \textbf{If} $|S| < \Delta / 2^i$ \textbf{Return } $\mathsf{Light}$
            \State $E_{i, j, S} \gets \Call{Sample}{S, 2^i L^3, \Delta}$ (See \cref{alg:sample}) \label{line:dense-est-sample} 
            \Statex \Comment{a multiset of $2^i \cdot L^3$ elements in $S$, each sampled u.a.r. from $S$ with repetition} 
            \Statex \Comment{The randomness used for this sampling is the same for a given $i, j, S$.}
            \For{$i'=1, 2, \dots i-1$ \label{line:for-i'-is-set-dense}}
                \For{every distinct $e \in E_{i, j, S}$} 
                \Statex \Comment{Crucial to visit only distinct elements for the desired query complexity}
                    \If{$\Call{IsEleCov}{i', \log f, e}$}
                    \State Remove every occurrence of $e$ from $E_{i, j, S}$ \label{line:is-set-dense-remove-e}
                    \EndIf
                \EndFor
                \If{ \# of distinct elements in $E_{i, j, S}> 2^{i-i'+2} \cdot L^3$\label{line:bad-set}} 
                    \State \Return $\mathsf{Bad}$ 
                \EndIf
            \EndFor
        \For{every distinct $e \in E_{i, j, S}$\label{line:for-over-Eis}}
            \If{$\Call{IsEleCov}{i, j-1, e}$}
                \State Remove every occurrence of $e$ from $E_{i, j, S}$
            \EndIf
        \EndFor
        \If{$|E_{i, j, S}| \geq L^3/2$}\label{line:test-set-density} 
        \Statex \Comment{$|E_{i, j, S}|$ is the size of the multiset, i.e., repeated elements are counted.}
            \State \Return $\mathsf{Dense}$
        \Else
            \State \Return $\mathsf{Light}$
        \EndIf
        \EndProcedure

    \end{algorithmic}
    \caption{A procedure that returns whether an input set $S$ is dense at beginning of $(i, j)$.}
    \label{alg:Is-Set-Dense}
\end{algorithm}  

The test of whether an element $e$ is covered by the end of iteration $(i, j)$ is carried out by the procedure \IsEleCov{$i, j, e$}.
The dependence of the number of queries performed by \IsEleCov{$i, j, e$} is expected to be exponential in $i$. 
That is why \cref{alg:Is-Set-Dense} makes sure to test the smaller values $i' < i$ first, before moving to $i$; 
see \cref{line:for-i'-is-set-dense}.
All covered elements are removed by \cref{line:is-set-dense-remove-e} and are not tested for larger values of $i'$.

Since the calculation is based on a sample of elements, it can sample too many or too few uncovered elements of $S$.
However, testing too many elements can lead to high query complexity. We prevent this by the if-block on \cref{line:bad-set}. When that if-condition evaluates to \true, it effectively means that somewhere in the computation process, our algorithm has made an error compared to the computation of \cref{alg:frac:setcover:base}. We show that such an error occurs relatively rarely.

\paragraph{Description of \IsEleCov{$i, j, e$} (\cref{alg:Is-Ele-Cov}).}
The task of procedure \IsEleCov{$i, j, e$} is to return whether an element $e$ is covered by the end of iteration $(i, j)$.
It is done by computing -- or more precisely, approximating -- the sum of fractional weights of the sets containing $e$; that sum is denoted by $\wt(e)$.
The same as in \IsSetDense{$i, j, S$}, $\wt(e)$ is approximated from a sample of the sets containing $e$; this sample is taken on \cref{line:sample-cS} of \cref{alg:Is-Ele-Cov}.
The rest of this procedure consists of gradually computing $\wt(e)$ by adding to it fractional weights of dense sets.
\begin{algorithm}[h]
    \begin{algorithmic}[1]
        \Statex $L \longDefiningEquals \log \Delta \cdot \log f$
        \Procedure{IsEleCov}{$i, j, e$}
            \If{$\Call{IsEleCov}{i, j-1, e}$\label{line:is-ele-cov-j-1}}
                \Return \true
            \EndIf
            \State $\cS_{i, j, e} \gets \Call{Sample}{N(e), 2^j L^3, f}$ \Comment{$N(e)$ denotes the sets containing $e$}
            \Statex \Comment{a multiset of $2^j L^3$ sets containing $e$ chosen u.a.r. with repetition.}
            \Statex \Comment{This sampling is common for every call for a given $i, j, e$.}
            \label{line:sample-cS}
            \State $\mathsf{wt}(e) \gets 0$
            \For{every $S \in \cS_{i, j, e}$} \Comment{Multiple occurrences of $S$ are visited multiple times.}
                \State $k \gets 1$
                \While{$k \leq j$}
                \If{$\Call{IsSetDense}{i, k, S}=\mathsf{Dense}$}
                    \State $\wt(e) \gets \wt(e) + 2^{k-j}/L^3$
                \If{$\wt(e) \geq 1$}
                    \State \Return \true \label{line:wte-ge-1}
                \EndIf
                    \State $k \gets k + 1$
                \ElsIf{$\Call{IsSetDense}{i,k,S}=\mathsf{Bad}$}
                    \State \Return \true
                \Else 
                    \State \textbf{break}
                \EndIf
                \EndWhile
            \EndFor
            \State \Return \false
        \EndProcedure

    \end{algorithmic}
    \caption{A procedure that returns whether an element $e$ is covered at the end of iteration $(i, j)$.}
    \label{alg:Is-Ele-Cov}
\end{algorithm}

The aim of the rest of this section is to prove the following claim.
\begin{theorem}\label{thm:alg-get-set-weight}
    \cref{alg:get-set-weight} returns an $O(\log \Delta)$ approximate fractional set cover in expectation using at most $\Delta^{O(\log f \cdot (\log \log \Delta + \log \log f))} = L^{O(L)}$ queries, where $L$ is the round complexity of \cref{alg:frac:setcover:base}.
\end{theorem}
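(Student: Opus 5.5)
The proof has two independent parts: an approximation guarantee and a deterministic query bound. We treat them separately.

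\textbf{Approximation.} We couple the LCA with an execution of \cref{alg:frac:setcover:base} for suitable $\ell, r, j_S, \tau_e$, and invoke \cref{lemma:frac-base-approx}. Call a test of \IsSetDense{$i,j,S$} \emph{good} if it returns $\mathsf{Dense}$ whenever $d(S) \ge \Delta/2^i$ at the start of $(i,j)$, and $\mathsf{Light}$ whenever $d(S) < \Delta/2^{i+2}$. Call \IsEleCov{$i,j,e$} \emph{good} if it returns \true{} whenever the true weight satisfies $\wt(e) \ge r$, and \false{} whenever $\wt(e) < \ell$; we take, say, $\ell = 1/2$ and $r = 2$. Both kinds of tests compare a sample mean against a threshold with a constant-factor gap. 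For sets the expected count is $\Theta(L^3)$. For elements the estimator $\sum 2^{k-j}/L^3$ has mean $\wt(e)\cdot(\text{const})$ and summands bounded by $1/L^3$. Chernoff (\cref{lemma:chernoff}) therefore gives failure probability $\exp(-\Omega(L^3)) \le 1/\poly(\Delta f)$ per test, provided all earlier tests it depends on were good.

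On the event that every test is good, the $\mathsf{Bad}$ branch never fires. Indeed, if all prior tests were correct, then every set has true effective degree $O(\Delta/2^{i'})$ after phase $i'$, so the number of distinct surviving samples is at most $2^{i-i'+1}L^3$ with high probability. The break rule in \GetWeight{} then makes the sets accepted at density level $[\Delta/2^{i+2},\Delta/2^i)$ form a prefix $j < j_S$, so the run matches \cref{alg:frac:setcover:base} and costs $O(\log \Delta)\opt$.

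A union bound over the $\poly(\Delta f)$ tests influencing one set or element bounds the per-object failure probability by $o(1/(\Delta f))$. A failure at $S$, or at an element of $S$, adds at most $O(\Delta f)$ extra weight charged locally: through the $\mathsf{Bad}$ rule or the naive covering in \lcaWeight{}. Since $\opt \ge |\cE|/\Delta \ge |\cS|/(\Delta f)$, the expected extra cost is $o(\opt)$.

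\textbf{Query complexity.} Let $Q_S(i,j)$ and $Q_E(i,j)$ bound the cost of \IsSetDense{} and \IsEleCov{}. \IsEleCov{$i,j,e$} first recurses to $(i,j-1)$ and then makes $2^jL^3 \cdot j$ calls to \IsSetDense{$i,k,\cdot$} with $k \le j$. This gives
\[
Q_E(i,j) \le Q_E(i,j-1) + 2^j L^3 \log f \cdot Q_S(i,j).
\]
\IsSetDense{$i,j,S$} calls \IsEleCov{$i',\log f,\cdot$} on at most $2^{i-i'+3}L^3$ distinct elements. This is the point of the $\mathsf{Bad}$ cutoff: it checks the survivors of stage $i'-1$, which were capped. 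The final loop adds $O(L^3)\,Q_E(i,j-1)$. Hence
\[
Q_S(i,j) \le \sum_{i'<i} 2^{i-i'+3}L^3\, Q_E(i',\log f) + O(L^3)\, Q_E(i,j-1).
\]

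Unrolling along the $L$ iterations ordered lexicographically, each step multiplies the cost by a factor $O(L^3 \cdot 2^{O(\text{step gap})})$. The exponential terms telescope to $2^{O(\log \Delta + \log f)}$ per phase, or more crudely at most $\poly(\Delta f)$ overall. The $L^{O(1)}$ factor is paid once per iteration, giving $L^{O(L)} = \Delta^{O(\log f(\log\log\Delta+\log\log f))}$. A final multiplicative factor of $\Delta f$ accounts for \lcaWeight{}.

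\textbf{Main obstacle.} The delicate step is solving this recurrence cleanly. The $2^j$ sample sizes inside a phase and the $2^{i-i'}$ pruning across phases must be shown to contribute only $2^{O(L)}$ in total, which is absorbed into $L^{O(L)}$, rather than compounding to $2^{O(i^2)}$. I would first prove by induction that $Q(i,j) \le (cL^3)^{(i-1)\log f + j} \cdot 2^{c(i+j)}$ and check that the sums over $i'$ are geometric.
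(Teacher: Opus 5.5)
Your approximation argument is essentially the paper's: bad events for the density and coverage tests, Chernoff at scale $L^3$, a union bound over the tests feeding one object, and charging failures to $\opt\ge|\cE|/\Delta$. The query-complexity argument, however, has a genuine gap.

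Your recurrence $Q_E(i,j)\le Q_E(i,j-1)+2^jL^3\log f\cdot Q_S(i,j)$ charges each of the up to $j\cdot 2^jL^3$ density tests the price of the most expensive one, $\IsSetDense{i,j,S}$. After substituting your bound on $Q_S$, the term $2^jL^3\log f\cdot O(L^3)\,Q_E(i,j-1)$ means the step from $(i,j-1)$ to $(i,j)$ multiplies the cost by $2^jL^{O(1)}$, not by $2^{O(1)}L^{O(1)}$. Over one phase this compounds to $\prod_{j\le\log f}2^j=2^{\Theta(\log^2 f)}$. Over all phases your recurrence therefore only yields $\Delta^{\Theta(\log^2 f)}\cdot L^{O(L)}$. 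That is not $L^{O(L)}=\Delta^{O(\log f(\log\log\Delta+\log\log f))}$ once $\log f=\omega(\log\log\Delta)$, e.g.\ for $f=\Delta$. So nothing telescopes. Your hypothesis $(cL^3)^{(i-1)\log f+j}\,2^{c(i+j)}$ also breaks at the $j$-step, since it would require $2^jL^{O(1)}\le c\,2^c L^3$ for all $j\le\log f$.

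The missing idea is the early-termination structure of \cref{alg:Is-Ele-Cov}. For $k\ge2$, $\IsSetDense{i,k,S}$ is called only after $\IsSetDense{i,k-1,S}$ returned $\mathsf{Dense}$. That call added $2^{k-1-j}/L^3$ to the running estimate, and the procedure returns as soon as the estimate reaches $1$. Writing $\mu_k$ for the number of level-$k$ calls, this gives $\sum_{k\ge2}\mu_k2^{k-1-j}/L^3\le1$, hence $\mu_k\le2^{j-k+1}L^3$ for every $k$. So the $2^jL^3$ multiplicity occurs only at level $k=1$, which is cheaper by a factor $C^{j-1}$, while the expensive levels near $j$ are reached only $O(L^3)$ times. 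With the hypothesis $Q_E(i,k-1)\le C^{(i-1)\log f+k-1}$ one gets
\[
\sum_{k\le j}\mu_k\,Q_E(i,k-1)\le 2L^3C^{(i-1)\log f}\sum_{k\le j}2^{j-k}C^{k-1}\le 4L^3C^{(i-1)\log f+j-1}\qquad(C\ge4).
\]
The $2^j$ cancels, so each iteration costs only an $L^{O(1)}$ factor. Taking $C=\Theta(L^6)$ closes the induction and gives $C^L=L^{O(L)}$; this is exactly how the paper's proof of \cref{lemma:is-ele-cov-queries} goes.

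Two minor slips elsewhere, both harmless because each test fails with probability $\exp(-\Omega(L^3))$:
\begin{enumerate}[(i)]
\item A failure adds $O(1)$ weight per affected set or element, not $O(\Delta f)$. With $O(\Delta f)$ and failure probability $o(1/(\Delta f))$ you would only get $o(|\cS|)$, not $o(\opt)$.
\item The union bound runs over $L^{O(L)}$ tests, not $\poly(\Delta f)$.
\end{enumerate}
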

In \cref{sec:alg-get-weight-query}, we analyze the query complexity, while \cref{sec:alg-get-weight-approx} analyzes the approximation.

\subsection{Query complexity}
\label{sec:alg-get-weight-query}
The main aim of this section is to analyze the query complexity of \cref{alg:Is-Ele-Cov,alg:Is-Set-Dense} and prove the following claim.
\begin{lemma}\label{lemma:is-ele-cov-queries}
    $\IsEleCov{i, j, e}$ (\cref{alg:Is-Ele-Cov}) and $\IsSetDense{i, j, S}$  (\cref{alg:Is-Set-Dense}) performs at most $C^{(i-1) \log f + j}$ queries, where we let $C = \poly(\log \Delta \log f)$. 
\end{lemma}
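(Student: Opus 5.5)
We argue by strong induction on the iteration index, linearly ordered as $t(i,j) = (i-1)\log f + j$. Write $Q_E(i,j)$ and $Q_D(i,j)$ for the worst-case query counts of $\IsEleCov{i,j,e}$ and $\IsSetDense{i,j,S}$. We prove both bounds simultaneously: $Q_E(i,j) \le C^{t(i,j)}$ and $Q_D(i,j) \le C^{t(i,j)}/2$, with $C = (\log\Delta\log f)^{c}$ for a large absolute constant $c$. Calls with $j=0$ are interpreted as returning \false (or as $(i-1,\log f)$), which serves as the base of the induction.

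First we bound $\IsSetDense{i,j,S}$. The sample $E_{i,j,S}$ costs $2^i L^3$ probes. The key point is the pruning check on \cref{line:bad-set}. At the start of round $i'$ of the loop, the number of distinct surviving elements is at most $2^{i-(i'-1)+2}L^3 = 2^{i-i'+3}L^3$: for $i'=1$ this follows from $|E_{i,j,S}| \le 2^iL^3$, and for $i'>1$ it holds because the previous round did not return $\mathsf{Bad}$. Hence round $i'$ makes at most $2^{i-i'+3}L^3$ calls to $\IsEleCov{i',\log f,\cdot}$, each costing at most $C^{i'\log f}$ by induction. The final loop makes at most $2^3L^3$ calls to $\IsEleCov{i,j-1,\cdot}$, since after round $i-1$ at most $2^3L^3$ distinct elements remain; each such call costs at most $C^{t(i,j)-1}$. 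Summing,
\begin{equation*}
Q_D(i,j) \le 2^iL^3 + \sum_{i'<i} 2^{i-i'+3}L^3\, C^{i'\log f} + 8L^3 C^{t(i,j)-1}.
\end{equation*}
Every iteration with $i'<i$ satisfies $i'\log f \le t(i,j)-1$. Since $\log f\ge1$ and $C \ge 4$, the factor $2^{i-i'}$ is dominated by $C^{(i-i')\log f - 1}\cdot C$, so the sum is geometric and is bounded by $O(L^3)\,C^{t(i,j)-1}$. This is at most $C^{t(i,j)}/2$ provided $C \gg L^3$. The term $2^iL^3 \le \Delta L^3$ could be troublesome for $i=1$, but there it is only $2L^3$. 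In general $2^i \le C^{i-1}$, so it is also absorbed.

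Next we bound $\IsEleCov{i,j,e}$. It makes one call to $\IsEleCov{i,j-1,e}$, costing at most $C^{t-1}$, and samples $2^jL^3$ sets. For each sampled set, the while-loop calls $\IsSetDense{i,k,S}$ for $k\le j$. Repeated calls with the same arguments can be counted once by memoization, or they at most double the cost. This gives
\begin{equation*}
Q_E(i,j) \le C^{t-1} + 2^jL^3 + 2^jL^3\sum_{k<j} Q_D(i,k) + 2^jL^3\, Q_D(i,j).
\end{equation*}
Here lies the main obstacle: the term $2^jL^3 Q_D(i,j)$ is at the same level $t$, and $2^j$ is not constant. A naive bound gives $Q_E \approx 2^jL^3 C^{t}$, which exceeds $C^t$, and the induction would not close.

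I would resolve this by inducting on a slightly different potential. The last-iteration call $\IsSetDense{i,j,S}$ does not itself recurse at level $t$; its most expensive part is the $8L^3$ calls at level $t-1$, plus the earlier-phase level calls. So I would not use the crude bound $Q_D(i,j)\le C^t/2$. Instead I would use the explicit bound $Q_D(i,j)\le O(L^3)C^{t-1}$ derived above, which is expressed in terms of strictly earlier levels. Substituting it gives
\begin{equation*}
Q_E(i,j) \le C^{t-1}\bigl(1 + O(2^jL^6)\bigr).
\end{equation*}
Since $2^j\le f$, we need $C \ge O(fL^6)$, which is not $\poly(L)$ unless it is handled more carefully.

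To avoid the factor $f$, I would exploit the weight scaling. Each dense call adds $2^{k-j}/L^3$, so only about $L^3$ dense answers at level $k=j$ are needed before the early return on \cref{line:wte-ge-1}. However, a break can occur immediately, so the number of level-$j$ calls is bounded by the sample size $2^jL^3$, not by $L^3$. If this approach fails, I would fall back on the induction measure $C^{(i-1)\log f + j}$ with $C$ allowed to absorb a factor $2^j$ per step through a $j$-dependent potential of the form $C^{t}2^{j(j+1)/2}$. I would then check that $\prod_j 2^j = 2^{O(\log^2 f)}$ is absorbed into $C^{\log f}$ per phase, which holds since $C \ge 2^{\log f}$ would again need $C\ge f$. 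The correct resolution is therefore the counting argument above, and I would settle it first before writing the rest.
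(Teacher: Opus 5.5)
Your bound for $\IsSetDense{i,j,S}$ is fine and essentially matches the paper's recurrence $S(i,j)\le L^3\sum_{i'<i}2^{i-i'+3}E(i',\log f)+8L^3E(i,j-1)$. You also locate the difficulty correctly. However, there is a genuine gap: the proposal never closes the induction for $\IsEleCov{i,j,e}$. You put the multiplier $2^jL^3$ in front of every $Q_D(i,k)$. The term with $k$ near $j$ then contributes $O(L^6)\,C^{t-1}\cdot 2^jC^{-(j-k)}$, which for $k=j$ is too large by a factor of up to $f$. You end by saying this still has to be settled, so as written this is not a proof.

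The missing idea is exactly the counting argument you dismiss. Let $\mu_k$ be the number of calls to $\IsSetDense{i,k,\cdot}$ made inside $\IsEleCov{i,j,e}$.

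\begin{itemize}
\item For $k\ge 2$, the while-loop reaches level $k$ on an occurrence of $S$ only after $\IsSetDense{i,k-1,S}$ returned $\mathsf{Dense}$ on that same occurrence.
\item That answer added $2^{k-1-j}/L^3$ to $\wt(e)$, and the procedure returns as soon as $\wt(e)\ge 1$.
\item Hence $\sum_{k=2}^{j}\mu_k2^{k-1-j}/L^3\le 1$, so $\mu_k\le 2^{j-k+1}L^3$.
\item In particular there are at most $2L^3$ expensive level-$j$ calls, while $\mu_1\le 2^jL^3$ and $\sum_k\mu_k\le 2^{j+1}L^3$.
\end{itemize}

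Your objection that ``a break can occur immediately'' conflates $\mathsf{Dense}$ answers at level $j$ with calls at level $j$. A level-$j$ call is charged to the level-$(j-1)$ $\mathsf{Dense}$ answer that enabled it, so whatever happens after the call is irrelevant.

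With this weighted count, your own estimate $Q_D(i,k)\le O(L^3)C^{t(i,k)-1}$ already suffices:
\begin{itemize}
\item The earlier-phase part of each density call, $O(L^3)C^{(i-1)\log f}$, is multiplied by $\sum_k\mu_k\le 2^{j+1}L^3$. This gives $O(L^6)\,2^jC^{(i-1)\log f}=O(L^6)\,(2/C)^{j-1}C^{t-1}$.
\item The same-phase part gives $8L^3\sum_k\mu_kC^{t(i,k)-1}\le O(L^6)\,C^{t-1}\sum_{m\ge 0}(2/C)^m$.
\item Your per-call sampling terms $2^iL^3$ are absorbed the same way.
\end{itemize}
Thus $Q_E(i,j)\le C^{t-1}\bigl(1+O(L^6)\bigr)\le C^{t}$ once $C=\Theta(L^6)$, which is how the paper closes the induction.
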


Having \cref{lemma:is-ele-cov-queries} in hand, the query complexity of \cref{alg:get-set-weight} follows by setting $i = \log \Delta $ and $j = \log f$.
\begin{lemma}
    $\GetWeight{S}$ (\cref{alg:get-set-weight}) performs $\Delta^{O(\log f \cdot (\log \log \Delta + \log \log f))}$ queries.
\end{lemma}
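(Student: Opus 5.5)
The plan is to derive the bound directly from \cref{lemma:is-ele-cov-queries}, since $\GetWeight{S}$ is just a loop over all iterations $(i,j)$ with one call to $\IsSetDense{i, j, S}$ per iteration, plus $O(1)$ bookkeeping. First, I will observe that \cref{alg:get-set-weight} makes at most $\log \Delta \cdot \log f = L$ calls to $\IsSetDense{}$, since it may \textbf{break} early but never repeats a pair $(i,j)$. By \cref{lemma:is-ele-cov-queries}, the call at iteration $(i,j)$ costs at most $C^{(i-1)\log f + j}$ queries, and this is at most $C^{\log \Delta \cdot \log f} = C^{L}$ because $(i-1)\log f + j \le L$ for $i \le \log\Delta$ and $j \le \log f$.

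Summing gives a total of at most $L \cdot C^{L}$ queries. The only other point to check is that the randomness is shared, so repeated calls (if any) are deterministic functions of the common random string. This affects only consistency, not the count, so no additional queries arise.

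It remains to simplify. With $C = \poly(\log\Delta\log f) = L^{O(1)}$, the bound is $L\cdot L^{O(L)} = L^{O(L)} = 2^{O(L\log L)}$. Here $\log L = \log\log\Delta + \log\log f$, so
\[
2^{O(\log\Delta\cdot\log f\cdot(\log\log\Delta+\log\log f))} = \Delta^{O(\log f\cdot(\log\log\Delta+\log\log f))}.
\]
This matches the claimed bound.

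There is no real obstacle here, since the substance sits in \cref{lemma:is-ele-cov-queries}. The one place needing care is confirming that the exponent bound $(i-1)\log f + j \le L$ holds uniformly over the loop ranges, and that the degree of the polynomial $C$ is an absolute constant, so that it is absorbed into the $O(\cdot)$ in the exponent.
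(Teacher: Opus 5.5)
Your argument takes the same route as the paper: at most $L$ top-level calls to \IsSetDense{i,j,S}, then \cref{lemma:is-ele-cov-queries} applied to each call, then the simplification $L^{O(L)} = 2^{O(L\log L)}$ with $\log L = \log\log\Delta + \log\log f$.

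The step that needs tightening is your claim that no additional queries arise. \cref{lemma:is-ele-cov-queries} is stated in terms of queries, but its proof only bounds the number of recursive invocations $S(i,j)$ and $E(i,j)$ of the two oracles. Each invocation also probes the input to draw its sample: the degree check, plus up to $2^iL^3$ (resp.\ $2^jL^3$) adjacency-list probes. So charging $C^{(i-1)\log f + j}$ queries per top-level call relies on a stronger statement than the lemma's proof establishes. The paper handles this by charging each invocation at most $\Delta + f$ queries, which is enough to read the whole adjacency list. It obtains $(\Delta+f)(\Delta+L)\cdot L^{O(L)}$ and absorbs the polynomial factor, since $\Delta, f \le 2^{L}$.

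With that factor your bound becomes $(\Delta+f)\cdot L\cdot C^{L}$. This is still $L^{O(L)}$, so your conclusion stands once the per-invocation cost is accounted for explicitly.
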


\begin{proof}
    Note that $\GetWeight{S}$ makes at most $L = \log \Delta \log f$ calls to $\Call{IsSetDense}{i, j, S}$ and at most $\Delta$ calls to $\Call{IsEleCov}{i, j, \cdot}$.
    By \cref{lemma:is-ele-cov-queries}, these calls result in $(L + \Delta) \cdot L^{O(L)}$ (recursive) invocations of \cref{alg:Is-Ele-Cov,alg:Is-Set-Dense}.
    Each invocation makes at most $\Delta + f$ queries to sample the respective sets or elements.

    Therefore, the total query complexity of $\Call{GetWeight}{S}$ is $(\Delta + f) \cdot (\Delta + L) \cdot L^{O(L)} \in L^{O(L)}$. 
\end{proof}

\begin{proof}[Proof of \cref{lemma:is-ele-cov-queries}]
    We analyze the worst-case number of recursive calls made by the algorithm. 
    Let $E(i, j)$ and $S(i,j)$ denote the (worst case) number of recursive calls generated by $\Call{IsEleCov}{i,j}$ (\cref{alg:Is-Ele-Cov}) and $\Call{IsSetDense}{i,j}$ (\cref{alg:Is-Set-Dense}), respectively. Our proof proceeds by induction.
    \begin{center}
        \emph{Inductive hypothesis}: $S(i, j), E(i, j) \le C^{(i-1)\log f + j}$, where $C = \poly(\log \Delta \log f)$. 
    \end{center}
    
    The base case is when $i=1, j=1$ and in this case, we have that $S(1, 1) = E(1, 1) = 0$. We use strong induction, i.e., we suppose that the statement is true for every $(i', j')$ such that either $i' < i$ or $i' = i$ and $j' < j$. We then prove the statement for $(i, j)$.

    \paragraph{An upper-bound on $S(i, j)$.}
        Our upper-bound on $S(i, j)$ is can be expressed as
    \[
        S(i, j) \leq L^3 \sum\limits_{i'=1}^{i-1} 2^{i-i'+3} E(i', \log f) + 8L^3 E(i, j-1).
    \]
    To derive this upper-bound, observe that $\Call{IsSetDense}{i, j, S}$ invokes $\Call{IsEleCov}{i', \log f, \cdot }$ for $i' < i$ and also invokes $\Call{IsEleCov}{i, j-1, \cdot}$ within the loop on \cref{line:for-over-Eis}. 
    When $\Call{IsEleCov}{i', \log f, \cdot}$ is invoked, the call $\Call{IsSetDense}{i, j, S}$ must not have terminated in the previous iteration $i'-1$. This implies that $|E_{i, j, S}| \leq 2^{i-(i'-1)+2} \cdot L^3$ and hence the number of calls is at most $2^{i-i'+3} \cdot L^3$.
    Moreover, when the loop over $i'$ is over, it is guaranteed that $|E_{i, j, S}| < 2^{i - (i - 1) + 2} L^3 = 8 L^3$. This, together with the loop on \cref{line:for-over-Eis}, implies the second term in the upper bound on $S(i, j)$.

    \paragraph{An upper-bound on $E(i, j)$.}
    During the call $\Call{IsEleCov}{i, j}$, suppose there are $\mu_k$ invocations to $\Call{IsSetDense}{i, k}$. 
    Then, we can upper-bound $E(i, j)$ as
    \begin{equation*}
        E(i, j) \leq E(i, j-1) + \sum\limits_{k=1}^j \mu_k S(i, k).
    \end{equation*}
    The first term in the upper-bound corresponds to \cref{line:is-ele-cov-j-1} of \cref{alg:Is-Ele-Cov}, while the sum follows by definition and the fact that the only remaining recursive invocations are to \cref{alg:Is-Set-Dense}.
    
    We also have that
    \begin{equation}\label{eq:sum-mu-k-upper-bound}
        \sum\limits_{k=2}^j \mu_k 2^{(k - 1)-j} / L^3 \leq 1
    \end{equation}
    \cref{eq:sum-mu-k-upper-bound} follows since every recursive call to $\Call{IsSetDense}{i, k, S}$ for $k > 1$ must have survived a call to $\Call{IsSetDense}{i, k-1, S}$, contributing $2^{(k-1)-j}/L^3$ to $\wt(e)$. 
    Moreover, because of \cref{line:wte-ge-1} of \cref{alg:Is-Ele-Cov}, note that $\wt(e)$ must have been less than $1$ prior to the call to $\Call{IsSetDense}{i, k, S}$.

    Since $|\cS| \le 2^j L^3$ by construction (see \cref{line:sample-cS} of \cref{alg:Is-Ele-Cov}), we trivially have that $\mu_1 \le 2^j L^3$. 
    From \cref{eq:sum-mu-k-upper-bound}, we derive $\sum\limits_{k=2}^j \mu_k \leq 2^j L^3$.
    Hence, we conclude
    \begin{equation}\label{eq:sum-mu-k}
    \sum\limits_{k=1}^j \mu_k = \mu_1 + \sum\limits_{k=2}^j \mu_k \leq 2^j L^3 + 2^j L^3 \leq 2^{j+1} L^3.
    \end{equation}
    We now substitute the expression for $S(i, j)$ into $E(i, j)$ and obtain the following
    \begin{align}
        E(i, j) &\leq E(i, j-1) + L^3 \sum\limits_{k=1}^j \rb{\mu_k \sum\limits_{i'=1}^{i-1} 2^{i-i'+3} E(i', \log f)} + 8L^3\sum\limits_{k=1}^j \mu_k E(i, k-1) \nonumber \\
        &= E(i, j-1) + L^3 \rb{\sum\limits_{k=1}^j \mu_k} \cdot \rb{\sum\limits_{i'=1}^{i-1} 2^{i-i'+3} E(i', \log f)} + 8L^3\sum\limits_{k=1}^j \mu_k E(i, k-1) \nonumber  \\
        &\stackrel{\textnormal{\cref{eq:sum-mu-k}}}{\leq} E(i, j-1) + 2^{j+1} L^6 \sum\limits_{i'=1}^{i-1} 2^{i-i'+3} E(i', \log f) + 8L^3 \sum\limits_{k=1}^j \mu_k E(i, k-1), \label{eq:E-ij-upper-bound}
    \end{align}
    where, to achieve the equality above, we used the fact that the inner summation, which is over $i'$, does not depend on $k$.
    
    We have that $E(i', \log f) \leq C^{i' \log f}$ from the induction hypothesis. Using this, we upper bound the summation in the second term of \cref{eq:E-ij-upper-bound} as follows,
    \begin{align*}
        \sum\limits_{i'=1}^{i-1} 2^{i-i'+3} E(i', \log f) & \leq 2^{i+3} \sum\limits_{i'=1}^{i-1} (C^{\log f} / 2)^{i'}
        = 2^{i+3} \rb{\frac{(C^{\log f} / 2)^{i} - 1}{C^{\log f} / 2 - 1}} 
        \\
        & \le 2^{i+3} \frac{(C^{\log f} / 2)^{i}}{\frac{1}{2} \cdot C^{\log f} / 2} = 2^{i+4} (C^{\log f} / 2)^{(i-1)} = 32 C^{(i-1)\log f} 
    \end{align*}
    For the third term of \cref{eq:E-ij-upper-bound}, we use induction hypothesis to bound $E(i,k-1) \leq C^{(i-1)\log f + k-1}$ which gives,
    \begin{equation*} \begin{split}
        \sum\limits_{k=1}^j \mu_k E(i, k-1) & \leq \sum\limits_{k=1}^j \mu_k C^{(i-1)\log f + k-1} 
        = C^{(i-1)\log f} \cdot 2^j \cdot \sum\limits_{k=1}^j \rb{\mu_k 2^{k - j - 1} \cdot C^{k-1} 2^{-(k - 1)}} \\
        & \leq C^{(i-1)\log f} \cdot 2^j \cdot L^3 \cdot \rb{\sum\limits_{k=1}^{j} \mu_k 2^{k-j-1} / L^3} \cdot \sum\limits_{k=1}^j (C / 2)^{k-1} \\
        & \leq C^{(i-1) \log f} \cdot 2^j L^3 \cdot 2 \cdot (C/2)^{j-1} \leq 4 \cdot L^3 \cdot C^{(i-1) \log f + j-1}
    \end{split} \end{equation*}

    Using these bounds, we derive
    \[
        E(i, j) \leq C^{(i-1) \log f + j} \cdot (C^{-1} + 64 L^6 (C/2)^{-j} + 32 L^6 C^{-1}) \leq C^{(i-1) \log f + j},
    \]
    whenever $C \geq 256 L^6$. 
    Using the upper bound on $S(i, j)$ we get
    \[
        S(i, j) \leq L^3 \sum\limits_{i'=1}^{i-1} 2^{i-i'+3} E(i', \log f) + 8L^3 E(i, j-1).
    \]
    The first term on the RHS is upper-bounded by $L^3 \cdot 32C^{(i-1) \log f}$. 
    The second term can be upper-bounded by $8L^3 C^{(i-1) \log f + (j-1)}$. 
    Substituting we upper-bound $S(i,j) \leq C^{(i-1)\log f + j} \cdot \rb{8L^3 C^{-1} + 32 L^3 C^{-j}} \leq C^{(i-1)\log f + j}$. 
    
\end{proof}

\subsection{Approximation ratio}
\label{sec:alg-get-weight-approx}
We next show the approximation ratio of \cref{alg:get-set-weight}. Towards this, we define certain ``bad'' events. If none of these events occur, then the LCA simulates \cref{alg:frac:setcover:base} exactly for suitably chosen thresholds $\ell$ and $r$. 
We then show that for every set $S$ with high probability, i.e., with probability at least $1-(\Delta f)^{-1}$, no bad event occurs during the calls $\Call{IsSetDense}{i,j,S'}, \Call{IsEleCov}{i, j, e'}$ for all $i,j$ and any $e', S'$ in the $L$-hop neighborhood of $S$. 
Consequently, even if we include all such sets wherein such bad events happen, the expected total weight of sets remains $O(\log \Delta \cdot \opt)$.

\begin{mdframed}[backgroundcolor=gray!10, linecolor=brown!40!black, roundcorner=5pt]
    \textbf{Bad Event $\cE_1$ -- Too dense set}: \\
    The set $E_{i, j, S}$ obtained on \cref{line:dense-est-sample} of \IsSetDense{$i, j, S$} contains at least $2^{i-i'+2} L^3$ elements that are uncovered at the beginning of iteration $(i', 1)$ for some $i' < i$.
\end{mdframed}
Observe that $\cE_1$ occurs when the condition on \cref{line:bad-set} of \cref{alg:Is-Set-Dense} evaluates to \true.

\begin{mdframed}[backgroundcolor=gray!10, linecolor=brown!40!black, roundcorner=5pt]
    \textbf{Bad Event $\cE_2$ -- Poor element sample}: \\
    Consider some iteration $(i, j)$ and set $S$. Let $d(S)$ denote the number of uncovered elements of $S$ in \cref{alg:get-set-weight} at the beginning of iteration $(i, j)$. 
    Let $E$ be the set $E_{i, j, S}$ just before the return statement of \cref{alg:Is-Set-Dense}.
    We say $\cE_2$ happened during the call $\IsSetDense{i,j, S}$ if either
    \begin{enumerate}[(i)]
        \item $d(S) \geq \Delta / 2^{i}$ and $|E| < L^3 / 2$, or
        \item $d(S) < \Delta / 2^{i+2}$ and $|E| \geq L^3 / 2$.
    \end{enumerate}
\end{mdframed}

\begin{mdframed}[backgroundcolor=gray!10, linecolor=brown!40!black, roundcorner=5pt]
    \textbf{Bad Event $\cE_3$ -- Loose element-weight estimate}:\\
    Consider some iteration $(i, j)$ and element $e$. Let $\widehat{\wt}(e)$ denote the weight of $e$ calculated in \cref{alg:Is-Ele-Cov} during iteration $(i, j)$. 
    % \stodo{Did you want to say in \cref{alg:Is-Ele-Cov} instead?}
    We say $\cE_3$ happened if either 
    \begin{enumerate}[(i)]
        \item $\widehat{\wt}(e) < 1/2$ and $\wt(e) \geq 1$, or
        \item $\widehat{\wt}(e) \geq 2.5$ and $\wt(e) < 1$.
    \end{enumerate}
\end{mdframed}

\begin{lemma}
\label{sim:claim}
For every set $S$ such that no bad events occurred during the calls of the LCA oracles within the $L$-hop neighborhood of $S$, there exists an assignment of thresholds to \cref{alg:frac:setcover:base} such that $\wt(S)$ is identical in both algorithms.

\end{lemma}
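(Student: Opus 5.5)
Our plan is to build the thresholds of \cref{alg:frac:setcover:base} directly from the LCA's own answers, and then show by induction over iterations $(i,j)$, in lexicographic order, that the two algorithms are in the same state. We take $\ell = 1/2$ and $r = 5/2$, matching the constants in $\cE_3$. For each set $S'$ and phase $i$ in the $L$-hop neighborhood of $S$, let $j_{S'}$ be the first $j$ for which $\IsSetDense{i,j,S'}$ returns $\mathsf{Light}$, or $\log f+1$ if there is no such $j$. This is the same $j$ at which $\GetWeight{}$ breaks, and at which the inner while loop of \IsEleCov{} breaks.

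For each element $e'$ and iteration $(i,j)$, we pick $\tau_{e'}$ according to the LCA's decision. If \IsEleCov{$i,j,e'$} returns \true, we set $\tau_{e'} = \ell$. Otherwise we set $\tau_{e'} = r$.

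The inductive claim is: at the start of iteration $(i,j)$, the covered set $\cov$ of \cref{alg:frac:setcover:base} agrees with the LCA's notion of coverage. Concretely, an element in the neighborhood is in $\cov$ after iteration $(i',j')$ if and only if $\IsEleCov{i',j',\cdot}$ returns \true. In addition, the weights $\wt(S')$ agree with those accumulated by \GetWeight{$S'$}.

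\textbf{Set step.} Assume agreement up to iteration $(i,j)$. Since $\cE_1$ did not occur, \IsSetDense{} never returns $\mathsf{Bad}$. The removals it performs with $\IsEleCov{i',\log f,\cdot}$ and $\IsEleCov{i,j-1,\cdot}$ then match the true state at the beginning of $(i,j)$, by the induction hypothesis. Since $\cE_2$ did not occur, two things follow. A set with $d(S') \ge \Delta/2^i$ is declared $\mathsf{Dense}$. A set with $d(S') < \Delta/2^{i+2}$ is declared $\mathsf{Light}$.

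Now consider a set declared $\mathsf{Dense}$ with $d(S') \in [\Delta/2^{i+2}, \Delta/2^i)$. It is exactly a set with $j < j_{S'}$, which is the blue branch of \cref{alg:frac:setcover:base}. A set declared $\mathsf{Light}$ must have $d(S') < \Delta/2^i$, so it receives no weight in either algorithm.

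One subtlety is monotonicity. By definition of $j_{S'}$, once $\mathsf{Light}$ is returned in phase $i$ the LCA adds no more weight in that phase, and neither does \cref{alg:frac:setcover:base} with threshold $j_{S'}$. We must therefore check that the LCA never returns $\mathsf{Dense}$ at a $j$ where $d(S') \ge \Delta/2^i$ after an earlier $\mathsf{Light}$. This holds because $d$ is nonincreasing within a phase: if $d(S') \ge \Delta/2^i$ at the later $j$, then it held at the earlier $j$ too, so $\neg\cE_2$ would have forced $\mathsf{Dense}$ there. Hence the weight increments agree.

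\textbf{Element step.} Here we must show that the $\tau_{e'}$ chosen above is legal, i.e.\ that it lies in $[\ell,r]$ and is consistent with the true $\wt(e')$. This step is the main obstacle. The LCA's estimate $\widehat{\wt}(e')$ is built by rescaled sampling of the sets $S'' \ni e'$. Each sample is tested by \IsSetDense{$i,k,S''$} for $k \le j$, and by the set step those answers equal the true behaviour.

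The delicate point is that \IsEleCov{} only sums weights added in phase $i$, yet it short-circuits via \IsEleCov{$i,j-1,e'$}. We interpret $\widehat{\wt}$ as an estimate of the weight relevant to $e'$, namely the phase-$i$ increments, and must argue that coverage from earlier phases is already recorded because $e'$ was removed earlier. Given that interpretation, $\neg\cE_3$ yields the required consistency:
\begin{itemize}
\item If $\IsEleCov{}$ returns \true, then $\wt(e') \ge \widehat{\wt}(e')\cdot\tfrac{2}{5} \ge \ell$ roughly, so $\tau_{e'} = \ell$ works.
\item If it returns \false, then $\widehat{\wt}(e') < 1$, so $\wt(e') < r$ and the element is correctly left uncovered with $\tau_{e'} = r$.
\end{itemize}
We expect to spend most effort aligning the constants in $\cE_3$ with $\ell$ and $r$, possibly adjusting them, for instance taking $r = 2.5$ and $\ell = 1/2$.

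\textbf{Conclusion.} Finally, \GetWeight{$S$} only consults oracles inside the $L$-hop neighborhood of $S$. The induction therefore gives that its output equals $\wt(S)$ from \cref{alg:frac:setcover:base} under these thresholds.
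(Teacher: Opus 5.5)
Your route is the paper's. You take $j_{S'}$ to be the first non-$\mathsf{Dense}$ answer, get agreement of the increments from the break in \textsc{GetWeight} and in the while loop of \textsc{IsEleCov}, and read $\tau_{e'}$ off from $\neg\cE_3$. Your set step is more careful than the paper's: $\neg\cE_1$ excludes $\mathsf{Bad}$, $\mathsf{Dense}$ answers with $d\in[\Delta/2^{i+2},\Delta/2^i)$ are the blue branch, and $\neg\cE_2$ at $j_{S'}$ plus monotonicity of $d$ within a phase shows nothing is added after $j_{S'}$. Choosing $\tau_{e'}\in\{\ell,r\}$ from the LCA's answer is also cleaner than the paper's ``$\widehat{\wt}(e)=x\cdot\wt(e)$, take $1/x$'', because it needs only one-sided implications at the LCA's decision threshold $1$. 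On your constants: the displayed definition of $\cE_3$ swaps the roles of $\wt$ and $\widehat{\wt}$ relative to what the proof of \cref{lem:1} actually bounds. What is proved is that, absent $\cE_3$, the true weight is at least $1/2$ when the LCA answers covered and below $5/2$ when it answers uncovered. So $\ell=1/2$, $r=5/2$ work exactly, with no ``roughly''.

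The genuine gap is the step you call delicate, and ``$e'$ was removed earlier'' does not close it. The true weight controlled by $\neg\cE_3$ is $\sum_k\mu_k2^k/f$, the uncapped weight added to $e'$ in the current phase only. \cref{alg:frac:setcover:base}, however, compares the cumulative $\wt(e')$, including weight from earlier phases, with $\tau_{e'}$. Removal handles only elements already covered; an element that stayed uncovered still carries its earlier-phase weight. For example, suppose that in two consecutive phases $e'$ gains phase-local weight $3/2$ while the LCA answers uncovered both times, which $\neg\cE_3$ permits. Its cumulative weight is then $3>r$, so \cref{alg:frac:setcover:base} puts $e'$ into $\cov$ for every admissible $\tau_{e'}$ while the LCA does not, and your inductive invariant fails. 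Nothing in the per-phase guarantees bounds this carried-over weight by a constant. Similarly, a set already capped at weight $1$ contributes nothing in \cref{alg:frac:setcover:base} but is still counted by \textsc{IsEleCov}.

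The paper's own proof has the same hole. It asserts a per-iteration factor-$2.5$ relation, but the quantity compared there, as computed in the analysis of $\cE_3$, is again only the phase-$i$ sum. One repair is to let the covering test of \cref{alg:frac:setcover:base} use only the current phase's uncapped increments. That is the quantity $X_e$ on which the per-phase accounting in \cref{lemma:frac-base-approx} is built, and with this change your binary choice of $\tau_{e'}$ goes through verbatim. You should also state the invariant as ``$e'\in\cov$ after $(i',j')$ iff $\IsEleCov{i'',\log f,e'}$ holds for some $i''<i'$ or $\IsEleCov{i',j',e'}$ holds''. This is needed because \textsc{IsEleCov} never consults earlier phases; it is the removal loop of \textsc{IsSetDense} that supplies this.
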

\begin{proof}
For every set $S$ and phase $i$, set $j_S$ to be the first iteration during which $S$ is not considered $\mathsf{Dense}$. Observe that weight is not added to the sets after iteration $j_S$ due to the break statement in \cref{alg:get-set-weight}. Also, observe that in $\IsEleCov{}$, for every set $S$, the density tests for any set $S$ at iteration $(i, j)$ occur only if the test at $(i, j-1)$ passes. Note that if there exists any call $\IsSetDense{i, j', S}$ for $j' > j$ that returns $\mathsf{True}$ whereas $\IsSetDense{i, j, S}$ returns $\mathsf{False}$, then this call is never made in a run of $\Call{getWeight}$.

Since no bad event occurred, we have that $\widehat{\wt}(e)$ and $\wt(e)$ during every iteration are at most $2.5$ factor away from each other. Hence there exists a real number $x \in [1/2.5, 2.5]$ such that $\widehat{\wt}(e) = x \cdot \wt(e)$, so $1/x$ is the suitable threshold.
\end{proof}

\begin{lemma}
\label{lem:1}
     The probability that one of the events $\cE_1, \cE_2, \cE_3$ occurs during the calls to \\ $\Call{IsSetDense}{i, j, S}$ and $\Call{IsEleCov}{i, j, e}$, including all recursive calls arising out of them, is at most $\exp{(-\Theta(L^3))}$.
\end{lemma}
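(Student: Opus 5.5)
We will bound, for a single call, the probability of each bad event by $\exp(-\Theta(L^3))$, and then take a union bound over all recursive calls spawned by $\Call{IsSetDense}{i,j,S}$ or $\Call{IsEleCov}{i,j,e}$. By \cref{lemma:is-ele-cov-queries} there are at most $C^{L} = L^{O(L)} = \exp(O(L\log L))$ such calls, which is $\exp(o(L^3))$. The union bound therefore costs only a negligible factor. This works because the sample sizes carry an $L^3$ factor while the recursion tree has size $\exp(O(L \log L))$.

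The argument is an induction along the iteration order $(1,1),(1,2),\dots$. We condition on the event that no bad event occurred in any call at an earlier iteration. Under this conditioning, \cref{sim:claim} lets us identify the oracle answers for earlier iterations with a run of \cref{alg:frac:setcover:base} for suitable thresholds. Hence ``uncovered at the beginning of $(i',1)$'' and $d(S)$ refer to well-defined quantities that are determined by randomness independent of the fresh samples $E_{i,j,S}$ and $\cS_{i,j,e}$. These samples are drawn with fresh randomness indexed by $(i,j,S)$ or $(i,j,e)$. The subtle point is that ``covered'' is judged by oracles which themselves use randomness shared across calls. We handle this by comparing against the algorithm defined by the oracles' own outputs, which is exactly what \cref{sim:claim} does. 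The sample of the current call is then independent of that algorithm's state before iteration $(i,j)$.

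$\cE_1$: By the condition on \cref{line:bad-set} at iteration $i'-1$ (or trivially, via the invariant of \cref{alg:frac:setcover:base} under no earlier bad events), every set has $d(S) < \Delta/2^{i'-1}$ at the beginning of phase $i'$. Each of the $2^iL^3$ samples is therefore uncovered with probability at most $2^{-(i'-1)}$, so the expected count is at most $2^{i-i'+1}L^3$. Applying \cref{clm:weird:chernoff} with $\mu = 2^{i-i'+1}L^3$ bounds the probability of reaching $2^{i-i'+2}L^3$ by $\exp(-L^3/3)$. We then take a union over $i'\le \log\Delta$.

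$\cE_2$: Each sample is uncovered with probability $d(S)/\Delta$. If $d(S)\ge\Delta/2^i$, the mean is at least $L^3$, and \cref{lemma:chernoff}(B) with $\delta=1/2$ gives failure probability at most $\exp(-L^3/8)$. If $d(S)<\Delta/2^{i+2}$, the mean is at most $L^3/4$, and \cref{clm:weird:chernoff} with $\mu=L^3/4$ gives failure probability at most $\exp(-L^3/12)$.

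$\cE_3$: We write $\wterr(e)$ as a sum over the $2^jL^3$ sampled sets of independent terms. Under no earlier bad events, sampled set $S$ contributes $\sum_{k< j_S} 2^{k-j}/L^3 = \Theta(\wt(S)\cdot f/(2^j L^3))$, a term in $[0, 2/L^3]$ after accounting for the cap. Its expectation is proportional to $\wt(e)$ up to the weight accumulated in previous phases. This is the point I expect to be the main obstacle: \cref{alg:Is-Ele-Cov} restarts $\wt(e)$ from phase $i$ only, and weights from earlier phases are handled via the recursive call on \cref{line:is-ele-cov-j-1} together with the $\min(1,\cdot)$ caps. I would first verify that the per-phase weight matches $\wt$ up to these factors. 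This requires checking that if $e$ were covered in an earlier phase, the chained $\IsEleCov$ calls already returned \true.

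Granting this, multiply $\wterr(e)$ by $L^3/2$ so that the summands lie in $[0,1]$. If $\wt(e)\ge 1$, the mean is $\Omega(L^3)$, and \cref{lemma:chernoff}(B) gives $\Pr[\wterr(e)<1/2]\le e^{-\Theta(L^3)}$. If $\wt(e)<1$, \cref{clm:weird:chernoff} bounds $\Pr[\wterr(e)\ge 2.5]$ by $e^{-\Theta(L^3)}$. Summing all three events over the $\exp(O(L\log L))$ calls completes the proof.
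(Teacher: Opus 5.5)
\textbf{Gap: the scope of the union bound.} Your per-call concentration bounds are fine. Your handling of $\cE_3$ is a legitimate and somewhat cleaner alternative to the paper's split into $\wtfew$ and $\wtmany$: you use one Chernoff bound on the sum of per-sample contributions in $[0,2/L^3]$, whose mean is the phase-$i$ weight of $e$. The paper also works only with the phase-$i$ weight, so the obstacle you flag is not treated more carefully there.

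The problem is what you union-bound over. Your bounds for $\cE_1$ and $\cE_2$ at a call \IsSetDense{$i,j,S$} are conditional on facts about \emph{all} of $S$:
\begin{itemize}
\item fewer than $\Delta/2^{i'-1}$ of its elements are uncovered at the start of phase $i'$;
\item $d(S)$ is the effective degree in the run of \cref{alg:frac:setcover:base} furnished by \cref{sim:claim}.
\end{itemize}
These facts are decided by oracle calls that are in general \emph{not} in the recursion tree of the current call:
\begin{itemize}
\item \IsSetDense{$i'-1,k,S$} for every $k$;
\item \IsEleCov{$i'-1,\log f,e'$} for \emph{every} $e'\in S$, whereas the current call only touches its $2^iL^3$ sampled elements;
\item recursively, the calls on which their correctness depends.
\end{itemize}
If you require ``no earlier bad event'' globally, its failure probability is not bounded at all. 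If you require it only on the recursion tree, it does not imply the invariant you use. So summing over the $\exp(O(L\log L))$ calls of the tree leaves the probability of the conditioning event unaccounted for. Separately, the check on \cref{line:bad-set} only inspects a sample and cannot give $d(S)<\Delta/2^{i'-1}$; only the invariant route is available.

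\textbf{The missing step.} You need to strengthen the inductive statement from the recursion tree to the whole $O(L)$-hop neighborhood, which contains the tree. This is what the paper does. Its hypothesis bounds the probability of a bad event anywhere in the neighborhood through iteration $(i,j)$ by $((2\Delta+2f)^{t_{i,j}L}-1)\cdot L\cdot e^{-L^3/12}$. Each inductive step pays a factor $2(\Delta+f)^L$ for the neighbors before adding the fresh per-call bound.

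This closes because the sample size $L^3$ dominates the logarithm of the neighborhood size. That logarithm is $O(L\log(\Delta f))=O(L^2)$, since $\log\Delta+\log f\le L$ once $\Delta,f\ge 4$. The paper's cruder accounting gives $L^2\log(2\Delta+2f)$, hence its ``large enough $\Delta,f$''. The relevant comparison is therefore with this neighborhood size, not with the logarithm $O(L\log L)$ of the recursion-tree size as your closing sentence suggests.
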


\begin{proof}
    We prove the statement by induction on $(i, j)$. Define $t_{i,j} = (i-1) \log f + j$, i.e., $t_{i, j}$ is the index of iteration $(i, j)$.
    
    We show that the probability of $\cE_1$, $\cE_2$ or $\cE_3$ occurring from iteration $(1, 1)$ through $(i, j)$ is upper bounded by $((2\Delta+2f)^{t_{i,j} \cdot L}-1) \cdot L \cdot \exp{(-L^3/12)}$. 

    First, we upper-bound the probability of the bad events occurring, \textbf{given} that no bad event has occurred in the $L$-hop neighborhood of set $S$ \emph{before} iteration $(i, j)$. We analyze each of the events separately.

    \paragraph{Analyzing event $\cE_1$.}
    Since by our assumption, no bad event occurred \emph{before} iteration $(i, j)$, \cref{alg:frac:setcover:base} is simulated exactly until iteration $(i, j)$; the notion of ``exact simulation'' in this context is formally stated and proved by \cref{sim:claim}. 
    This implies that the number of uncovered elements in $S$ during phase $i' < i$ is at most $\Delta / 2^{i'-1}$; otherwise, at the end of iteration $(i'-1, \log f)$, $S$ would have been dense, and its weight would have increased to $1$. 

    There are at most $\Delta / 2^{i'-1}$ many uncovered elements in $S$ (because no bad event occurred). The set $E_{i,j,S}$ contains $2^i \cdot L^3$ randomly sampled elements. For each sample, the probability that the element is uncovered is at most $2^{1-i'}$. Note that the probability that any two sampled elements are uncovered is independent because of the sampling with repetition.
    The expected number of sampled elements is $ \mu \leq 2^{i-i'+1} \cdot L^3$. 
    
    Let $\overline{E}_{i, j, S}$ denote the subset of $E$ that contains only the uncovered elements in $E_{i, j, S}$.
    By \cref{clm:weird:chernoff}, we have,
    \[
        \Pr[|\overline{E}_{i, j, S}| > 2^{i-i'+2}L^3] \le \prob{|\overline{E}_{i, j, S}| > 2 \mu} \leq \exp(-(2^{i-i'+1}L^3 / 3)) \leq \exp{(-L^3/3)}. 
    \]

    \paragraph{Analyzing event $\cE_2$.}
    We consider two cases. 
    
    Case (i): $d(S) \geq \Delta / 2^i$. As a reminder, $d(S)$ is the number of uncovered elements in $S$ in \cref{alg:get-set-weight} at the beginning of iteration $(i, j)$. 
    Observe that \cref{alg:Is-Set-Dense} essentially removes from $E_{i, j, S}$ the elements covered before iteration $(i, j)$, i.e., it removes the elements not contributing to $d(S)$. So, by the time \cref{alg:Is-Set-Dense} reaches \cref{line:test-set-density}, each element in $E_{i, j, S}$ is not covered by \cref{alg:get-set-weight} by the beginning of iteration $(i, j)$.
    
    The expected number of those uncovered elements sampled to $E_{i, j, S}$ by \cref{alg:Is-Set-Dense} is $d(S) \cdot \frac{2^i L^3}{\Delta} \ge L^3$. 
    By the Chernoff bound, at least $L^3 / 2$ of those uncovered elements are sampled to $E$ with probability at least $1-\exp{(-L^3 / 2)}$. 
        
    Case (ii): $d(S) < \Delta / 2^{i + 2}$. 
    Similar to Case~(i), the expected number of elements in $E_{i, j, S}$ that are not covered by \cref{alg:get-set-weight} by the beginning of iteration $(i, j)$ is at most $L^3 / 4$. 
    By \cref{clm:weird:chernoff}, the probability that at least $L^3 / 2$ of those uncovered elements are sampled to $E_{i, j, S}$ by \cref{alg:Is-Set-Dense} is at most $\exp{(-L^3 / 12)}$.

    \paragraph{Analyzing event $\cE_3$.}

        Let $\widehat{\wt}(e)$ denote the weight of $e$ in \cref{alg:frac:setcover:base} at iteration $(i, j)$. 
        We write $\widehat{\wt}(e) = \sum\limits_{k=1}^j \mu_k 2^{k} / f$ where $\mu_k$ is the number of sets containing $e$ that remain dense from iteration $(i, 1)$ through $(i, k)$ in both \cref{alg:frac:setcover:base} and our LCA simulation; recall that we assume that our LCA simulation has so far simulated \cref{alg:frac:setcover:base} exactly. 
        Let $\wt(e)$ denote the estimate of $\widehat{\wt}(e)$ obtained by the LCA. 
        We split $\wt(e) = \wtfew(e) + \wtmany(e)$ into two parts: (i) $\wtfew(e)$ is the weight contribution of the sets corresponding to $\mu_k < f / 2^{j + 2}$ and (ii) $\wtmany(e)$ is the weight contribution of the sets corresponding to $\mu_k \geq f / 2^{j+2}$.
        
        (Case (i)) By definition, we have
            \[
                \wtfew(e) \le \sum_{k = 1}^j \frac{f}{2^{j + 2}} \cdot \frac{2^{k}}{f} = \frac{1}{2^{j + 2}} \sum_{k = 1}^j 2^k < \frac{1}{2}.
            \]

        (Case (ii))
        To upper-bound $\wtmany(e)$, we first note that $\cS_{i, j}$ obtained on \cref{line:sample-cS} of \cref{alg:Is-Ele-Cov}, in expectation, contains at least $2^j L^3 \cdot \mu_k / f \ge L^3/4$ sets which are dense upto iteration $(i, k)$. 
        
        Therefore, at least half and at most twice the fraction of them are sampled with probability at least $1 - 2\exp(-L^3/12)$. Thus $2 \widehat{\wt}(e) \geq \wt(e) \geq (\widehat{\wt}(e) - 1/2) / 2$.

        Hence except with probability $2\exp(-L^3/12)$, if $\widehat{\wt}(e) < 1/2$ then $\wt(e) < 1$ and if $\widehat{\wt}(e) \geq 2.5$, then $\wt(e) \geq 1$.
        
    \paragraph{Combining the analysis.}
    For large enough $L$, one of the three events occurs with probability at most $L \exp{(-L^3 / 12)}$. 
    For a given set $S$ or element $e$, by the induction hypothesis, the probability that it sees a bad event in its $L$-hop neighborhood until end of iteration prior to $(i, j)$ is at most $((2\Delta+2f)^{(t_{i,j} - 1)L} - 1) \cdot L \cdot \exp{(-L^3 / 12)}$. Our total failure probability can be bound by applying the union bound that such a failure happens in the $L$-hop neighborhood of the set/element plus the failure of one of the events $\cE_1, \cE_2, \cE_3$ discussed above. This gives
    \begin{align*}
        & L \exp{(-L^3 / 12)} + 2 \cdot (\Delta + f)^L \cdot ((2\Delta + 2f)^{(t_{i,j}-1)L}-1) \cdot L \cdot \exp{(-L^3 / 12)} \\
        \leq & ((2\Delta + 2f)^{t_{i,j}L} - 1) \cdot L \cdot \exp{(-L^3/12)}
    \end{align*}

    For the last expression, we used $x / 2^{L-1} \leq x - 1$ when $x \geq 2$ and $L \geq 2$.
    Finally setting $t_{i, j} = L$ and large enough $\Delta, f$, the failure probability is at most $\exp{(-L^3/24)}$.
\end{proof}

From \cref{lem:1}, we can conclude that for large enough $\Delta, f$, in the recursive \cref{alg:get-set-weight} the call $\Call{getWeight}{S}$ which makes at most $\poly(\Delta f)$ queries to $\Call{IsSetDense}$ and $\Call{IsEleCov}$ observes a bad event with probability at most $\exp(-L^3 / 24) \leq 1 / (\Delta f)$ (for large enough $\Delta, f$) and otherwise the weight is identical to that output by \cref{alg:frac:setcover:base} (with suitably chosen thresholds) from \cref{sim:claim}. 
Hence, the expected number of elements that get naively covered at the end is at most $n / \Delta \leq \opt$.
This concludes the proof of \cref{thm:alg-get-set-weight}.

\subsection{Proof of approximation ratio of \cref{alg:frac:setcover:base}}
\label{sec:proof-frac-base-approx}
\lemmafracbaseapprox*
\begin{proof}
    Observe that by \cref{line:rem:e}, at the point an element $e$ just got covered in \cref{alg:frac:setcover:base} it holds that $\wt(e) \geq \ell$. 
    Moreover, $e$ is covered at some iteration as otherwise during iteration $(\log \Delta, \log f)$ every set adjacent to $e$ has weight at least $1 \geq \ell$.
    Suppose $e$ gets covered at the end of iteration $(i, j)$. 
    
    We know that $\wt(e) < r$ at the end of iteration $(i, j-1)$, as $e$ was not added to $\cov$ before iteration $(i, j)$. Consider a set $S$ that contains $e$ whose weight changed during $(i, j)$. 
    % \rtodo{}
    At the end of iteration $(i, j-1)$ we have
    \[
        \wt(S) \geq (1 + 2 + \dots 2^{j-1}) / f = (2^j - 1) / f. 
    \]
    $\wt(S)$ does not increase by more than a factor of $3$ at the end of iteration $(i, j)$. 
    Summing over all the sets containing $e$, we conclude that $\wt(e) \leq 3r$ at the time it was removed in \cref{line:rem:e}.

    We now split $\wt(S)$ into $\log \Delta$ parts $\wt_1(S), \wt_2(S) \dots$, where $\wt_i(S)$ is the portion of $\wt(S)$ added during iterations $(i,1)$ to $(i, \log f)$.

    Our strategy is to show that $\wt_i(S)$, individually for every $i$, is an $O(1)$ approximation of the optimal set cover. 
    To that end, consider a set-cover instance whose elements are $\cE_i = \cE \setminus \cov$ and the sets are $\cS_i = \{S \in \cS \mid \sum\limits_{k=1}^{i-1} \wt_k(S) < \ell\}$. 
    Observe that if $\wt(S) < \ell$ at the end of iteration $i-1$, then it must be the case that $d(S) < \Delta / 2^{i-1}$; otherwise, $\wt(S)$ would be increased each iteration from $(i-1, 1)$ to $(i-1, \log f)$, resulting in $\wt(S) \ge 1 \ge \ell$. 
    The instance $\langle \cS_i, \cE_i\rangle$ has max set size $\Delta_i = \Delta / {2^{i-1}}$ and max frequency $f_i = f$. 
    Let $X_e$ denote the total weight added to $e$ during iterations $j = 1$ to $j = \log f$. 
    By our previous discussions, $X_e \leq 3 r$. Every set $S$ contributes at least $d(S) \cdot \wt_i(S)$ to $\sum X_e$.  We have $d(S) \geq \Delta / 2^{i+2} = \Delta_i / 8$.
    Hence, $\Delta_i \wt_i(S) / 8 \leq \sum\limits_{e \in \cE_i} X_e \leq 3r |\cE_i|$.
    It follows that $\wt_i(S)$ is a $24r$-approximation of the instance $\langle \cS_i, \cE_i \rangle$ and hence the original instance. Finally, the re-scaling loses a $1 / \ell$ factor, giving our result.
\end{proof}
\section{Approximation Ratio Analysis of a Noisier Algorithm}
\label{sec:prob-est}
In this section, we show that in \cref{alg:frac:setcover:base}, it is not necessary to have high probability estimates of the effective degree of sets. 
We show that if the estimates are obtained only with constant probability and some \emph{additional constraints}, we can still achieve an $O(\log \Delta)$ approximation.

\begin{algorithm}
    \begin{algorithmic}[1]
        \State Set $\wt(S) \gets 0$ for every set $S$
        \For{$i=1,2,\dots \log \Delta$}
        \For{$j=1,2,\dots \log f$}
            \For{every set $S$ (in parallel)}
                \State $\derr(S) \gets $
                 an estimate for $d(S) \longDefiningEquals |\{e \in S \mid \wt(e) < 1\}|$ \label{line:est-alg:err1}
                \If{$\derr(S) \geq \Delta / 2^i$}
                    \State $\wt(S) \gets \min(1, \wt(S) + 2^j / f)$
                \EndIf
            \EndFor
        \EndFor
        \EndFor
        \State Increase $\wt(S)$ by a factor $4$.
        \For{every $e$ with $\wt(e) < 1$ (in parallel)}
            \State set $\wt(S) \gets 1$ for an arbitrary $S \ni e$. \label{line:naive-covering-first}
        \EndFor
        \State \Return $\{\wt(S) \mid S \in \mathcal{S}\}$
    \end{algorithmic}
    \caption{In this section, we ask what properties the estimate in \cref{line:est-alg:err1} should satisfy so that we can still obtain $O(\log \Delta)$ approximate set cover.}
    \label{alg:err1}
\end{algorithm}

\cref{alg:err1} is just a generic form of the previous base algorithm where the estimation details are not specified. 
The estimation process in \cref{alg:frac:setcover:base} was the following: A set is considered dense during iteration $(i, j)$ if and only if it was considered dense during all iterations $(i, k)$ for $k < j$ and a significant portion of a sufficiently large random sample of elements in it are uncovered.
The purpose of this section is to identify characteristics of the estimation of $d(S)$ such that \cref{alg:err1} guarantees $O(\log \Delta)$ approximate set cover (in expectation). 

Identifying such characteristics will give us insight into designing $\LCA$s. 
The hope is that the approximation ratio arguments still hold even if the estimates are not so concentrated.
Then, we will look into obtaining these estimates by fewer random samples and design $\LCA$s that are more efficient than those in \cref{sec:warmup}. 

\subsection{First attempt -- conditions leading efficient LCA}

Observe that the approximation ratio argument for the base algorithm is as follows: Let $E^{(i)}$ denote the set of uncovered elements at the beginning of phase $i$. It holds that the maximum degree of a set $S$ is at most $\Delta /2^{i-1}$. During phase $i$, weight is added to the sets in such a way that, the total amount is at most $O(\opt)$. At the end of phase $i$, the goal is that after removing the covered elements, the size of every set drops to $\Delta  /2^i$. 
In this section, we shall show that the following relaxation to these properties still retains $O(\log \Delta)$ approximation. 

\paragraph{Dense sets need only a constant probability of failure.} Suppose we have the guarantee that for every dense set, its weight becomes halved with probability \textit{at least} $3/4$. That is, instead of $d(S)$ dropping deterministically by a factor of $1/2$, it drops with a constant probability.
We desire the probability to be \textit{strictly} greater than $1/2$, for e.g., we took $3/4$. In this case, the loss in the approximation ratio can be amortized to a constant via the geometric progression, for which we outline an intuition below.

If this guarantee holds \textit{independently} for each phase, then we are still okay in terms of approximation because the probability that a set $S$ has size $\Delta / 2^{i - i'}$ would be at most $4^{-i'}$. Intuitively we can think of the set $S$ ``paying'' approximation ratio $2^{i'}$, but only with probability $4^{-i'}$. 
The average approximation then amounts to only $\sum\limits_{i'} 2^{i'} \cdot 4^{-i'} \leq O(1)$. 
This is not a formal argument but a rough intuition as to why we may hope to identify dense sets only with constant probability.
% \stodo{Get back to this and emphasize if there is a challenge ensuring independence.}

\paragraph{Light sets need only $O(d(S) / \Delta \cdot 4^{-j})$ probability of failure.} 
At iteration $j$, our algorithm adds a weight of $2^j / f$ to the sets identified to be dense. 
When $j = \log f$, it adds weight $1$ to essentially \textit{every} set identified as dense. 
Hence, we cannot afford to make mistakes, e.g., marking a light set as dense, with probability $\Omega(1)$ when $j = \log f$.

Below, we outline why we can tolerate if a light set is mistakenly marked as dense with probability $O(d(S) / \Delta \cdot 4^{-j})$.
Suppose that during iteration $(i, j)$ we have a guarantee that a light set $S$, i.e., $d(S) < \Delta / 2^j$, is estimated to be light with probability $1 -  O(d(S) / \Delta \cdot 4^{-j})$. 
This implies that with probability $O(d(S) / \Delta \cdot 4^{-j})$, a weight of $2^j / f$ will be added to $S$.
Hence, the total cost of all sets mistakenly marked as dense during iteration $(i, j)$ equals, in expectation,
\begin{equation}\label{eq:cost-of-failed-sets}
    \sum_{S} O(d(S) / \Delta \cdot 4^{-j}) \cdot 2^j / f = \sum_{S} O(d(S) / \Delta) \cdot 2^{-j} / f.
\end{equation}
On the other hand, the total number of elements in the set cover instance is at most $\opt \cdot \Delta$. 
Also, over all the light sets, the total number of edges incident to these sets is $\sum_S d(S) \le \opt \cdot \Delta \cdot f$. This further implies that $\sum_S d(S) / \Delta \cdot 1 / f \le \opt$.
Combining this with \cref{eq:cost-of-failed-sets}, we derive that in iteration $(i, j)$ the total cost of sets mistakenly marked as dense is in expectation $O(\OPT \cdot 2^{-j})$. 
Over all iterations in a particular phase, the total cost sums to $O(\opt)$.

\paragraph{Constant size sets can be covered arbitrarily.} 
Finally, suppose we have left some elements uncovered after the $\log \Delta$ iterations, but we have the guarantee that every set has at most $O(1)$ of those uncovered elements. 
This is a trivial instance because the following greedy solution is an $O(1)$ approximate: for each element $e$, add an arbitrary set containing $e$ to the cover. 
Hence, if we can guarantee that, in expectation, the size of every set will be reduced to a constant, then we are done.
We can fix an optimal cover and perform linearity of expectation over this cover to upper-bound the expected cost.

\begin{theorem}\label{thm:approx-promise}
    \cref{alg:err1} outputs an expected $O(\log \Delta)$ approximate set cover if the estimates $\derr(S)$ and $\wterr(e)$, in every iteration $(i, j)$ for every set $S$ and element $e$, are computed such that the following equations hold and that for the first two equations, they hold independently of the randomness in previous iterations:

    Let $d_{i, j}(S)$ denote the number of uncovered elements $e \in S$ at the beginning of iteration $(i, j)$.
    \begin{align}
            \Pr\Big[\derr(S) \geq \Delta / 2^i \mid d_{i, j}(S) < \Delta / 2^{i+1} \Big] &\leq \frac{d_{i,j}(S)}{\Delta} \cdot \frac{2^{i-1}}{4^j} \label{cond:1} \\ 
            \Pr\Big[\derr(S) < \Delta / 2^i \mid d_{i,j}(S) \geq \Delta / 2^{i-1} \Big] &\leq \frac{1}{4} \label{cond:2} \\ 
            % \Pr\Big[\wterr(e) \geq 2^k \wt(e) \Big] &\leq 1/4^k \\
            \Pr[|\{e' \in S \mid \wt(e') < 1/4 \}| \geq k \cdot \Delta / 2^i] &\leq 1/k^3 \label{cond:3}
    \end{align}
\end{theorem}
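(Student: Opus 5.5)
The plan is to split the final output of \cref{alg:err1} into three parts and bound each in expectation by $O(\opt)$ per phase, or by $O(\opt)$ overall: (a) weight added in phase $i$ to sets that were truly dense or moderately dense, meaning $d_{i,j}(S)\ge \Delta/2^{i+1}$; (b) weight added to sets that were light, $d_{i,j}(S)<\Delta/2^{i+1}$, through a false positive; and (c) the naive covering in \cref{line:naive-covering-first}. The factor-$4$ rescaling at the end turns every element with $\wt(e)\ge 1/4$ into a covered one, so only elements with $\wt(e)<1/4$ are covered naively. Multiplying weights by $4$ costs only a constant factor.

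Part (b) follows the argument sketched before the theorem. By \cref{cond:1}, a light set receives weight $2^j/f$ in iteration $(i,j)$ with probability at most $\frac{d_{i,j}(S)}{\Delta}\cdot\frac{2^{i-1}}{4^j}$. Summing over $S$ and using $\sum_S d_{i,j}(S)\le f\cdot|\{\text{uncovered }e\}|\le f\cdot \opt\cdot\Delta_{\text{eff}}$ gives the bound. Here I would use the bound on uncovered elements, namely that the number of uncovered elements is at most $\opt$ times the maximum effective set size. That maximum is not deterministically $\Delta/2^{i-1}$, so I would instead bound it via $\sum_{S\in\opt} d(S)$ in expectation, using \cref{cond:3} with $k$ replaced by a geometric scale. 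The expected cost is then $O(\opt\cdot 2^{-j}\cdot 2^{i}\,\mathbb E[\max\text{-ish degree}]/\Delta)$. With \cref{cond:3}, $\mathbb E[d(S)]=O(\Delta/2^{i})$ per optimal set, since $\sum_k 1/k^3\cdot k$ converges. Summing over $j$ gives $O(\opt)$ per phase, and $O(\opt\log\Delta)$ overall. Part (c) also uses \cref{cond:3}. At the end ($i=\log\Delta$), each optimal set has an expected $O(1)$ elements with $\wt<1/4$. Charging each naively covered element to an optimal set containing it gives an expected cost of $O(\opt)$, by linearity over the fixed optimal cover.

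Part (a) adapts \cref{lemma:frac-base-approx}. For phase $i$ I restrict attention to sets receiving weight while $d_{i,j}(S)\ge\Delta/2^{i+1}$. Each such set contributes at least $\wt_i(S)\cdot\Delta/2^{i+1}$ to $\sum_e X_e$, where $X_e$ is the weight an element receives in phase $i$ while still uncovered. As in the lemma, $X_e\le O(1)$ because weight at most triples in the covering iteration. Hence $\sum_S \wt_i(S)\le O(2^{i}/\Delta)\cdot|\cE_i|$, where $\cE_i$ is the set of elements uncovered at the start of phase $i$. I then need $|\cE_i|\le O(\opt\cdot\Delta/2^{i})$ in expectation, which again follows from \cref{cond:3} applied at phase $i-1$ to the sets of a fixed optimal cover, since each of them has expected $O(\Delta/2^{i})$ uncovered elements. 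Summing over phases gives $O(\opt\log\Delta)$.

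I expect the main obstacle to be the interplay between \cref{cond:2} and the others. As stated, \cref{cond:3} appears to be the assumption that carries all the progress. So \cref{cond:2} would only be needed if \cref{cond:3} had to be derived; if I take \cref{cond:3} as given, the proof reduces to the expectation calculations above. The delicate step is making the expectation bounds rigorous when $|\cE_i|$ and the weights are correlated random variables. I would handle this by conditioning on the history before phase $i$, applying \cref{cond:1} under the independence assumption, and using tower expectations throughout.
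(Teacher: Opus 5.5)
\textbf{The gap is in part (a).} You assert $X_e=O(1)$ ``because weight at most triples in the covering iteration,'' but that step does not carry over to \cref{alg:err1}. In \cref{lemma:frac-base-approx} it relies on monotonicity: in \cref{alg:frac:setcover:base}, a set that receives $2^j/f$ in iteration $(i,j)$ has received weight in every earlier iteration of phase $i$. This holds because $d$ never increases, and the cut-off $j_S$ handles the lower band. Hence such a set already holds at least $(2^j-1)/f$. In \cref{alg:err1}, the estimates in different iterations are separate noisy tests, and nothing enforces monotonicity. A set can be reported light in iterations $1,\dots,j-1$ and dense in iteration $j$, so $e$ can jump from weight below $1$ to weight about $2^j$ in one iteration.

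Concretely, let the sets be the axis-parallel lines of the grid $[\Delta]^f$: every set has size $\Delta$, every element has frequency $f$, $\opt=\Delta^{f-1}$, and there are $f\cdot\opt$ sets. Let the estimator be exact, except that in phase $1$ it reports every set as light for $j<\log f$. Then \cref{cond:1} holds trivially, and \cref{cond:3} holds at the end of every phase (the reading you use), since everything is covered after phase $1$. Yet every set gets weight $1$ at $(1,\log f)$, so $X_e=f$ and the cost is $f\cdot\opt$. So your closing remark, that \cref{cond:2} is only needed to derive \cref{cond:3}, is wrong: your argument never uses \cref{cond:2}, but the conclusion fails without it.

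Relatedly, \cref{cond:3} counts elements with $\wt<1/4$, whereas $d(S)$, $\cE_i$ and $\sum_S d_{i,j}(S)$ count elements with $\wt<1$. So \cref{cond:3} does not directly give your bounds on $|\cE_i|$. The paper gets such bounds from \cref{cond:2} and independence across phases (\cref{lem:bound-e}). In your per-optimal-set language: suppose $S^\ast$ still has $d\ge\Delta/2^{k-1}$ at iteration $(k',\log f)$ with $k'\ge k$. Then it is detected there with probability at least $3/4$, receives weight $1$, and is emptied. Hence $\Pr[d_{i,1}(S^\ast)\ge\Delta/2^{k-1}]\le 4^{-(i-k)}$.

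\textbf{What replaces the tripling step.} You need a bound on $\mathbb{E}[X_e]$ conditional on the history before phase $i$; this is the paper's \cref{lem:xp-wt-e}. Suppose that at iteration $(i,t)$ the element $e$ lies in $n_t>16f/2^t$ sets with $d\ge\Delta/2^{i-1}$. By \cref{cond:2} and Markov's inequality, with probability at least $5/7$ more than $f/2^t$ of them receive $2^t/f$, and $e$ is covered in that iteration. So reaching an iteration where the jump $n_t2^t/f$ is large has geometrically small probability. A backward induction on $t$ with $F(t,n_t)\le\max\{D,Cn_t2^t/f\}$ then gives $\mathbb{E}[X_e]=O(1)$. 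With this in place, your part (a) goes through by the tower property, and your part (c) matches \cref{lem:cost-naive-covering}.

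\textbf{Caveat: the moderate band.} This argument only controls sets with $d\ge\Delta/2^{i-1}$. Sets in the band $\Delta/2^{i+1}\le d<\Delta/2^{i-1}$, which your part (a) also includes, are constrained by none of \cref{cond:1,cond:2,cond:3}. So the delayed schedule above is permitted for them even when \cref{cond:2} holds. For example, take the grid with lines of length $\Delta/2$ in phase $1$, plus one set of size $\Delta$ on fresh elements; the cost is again about $f\cdot\opt$. The paper's proof of \cref{lem:xp-wt-e} likewise tracks only dense sets, so neither argument handles the band as the statement is written. You need an extra hypothesis there, for instance that the bound in \cref{cond:1} holds for all $d<\Delta/2^{i-1}$. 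Then your part (b) absorbs the band, and part (a) only has to deal with dense sets.
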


\def\slight{\text{Cost}_{\text{light}}}

\begin{proof}
    Note that after \cref*{line:naive-covering-first} of \cref{alg:err1}, we guarantee that the weights assigned form a fractional cover. 
    It remains only to prove the (expected) approximation ratio. 
    
    We first show that the total weight assigned before \cref*{line:naive-covering-first} is $O(\log \Delta \cdot \OPT)$. 
    We only need guarantees \cref{cond:1,cond:2} for this. 
    Then, using \cref{cond:3}, we will upper-bound the total weight assigned in \cref*{line:naive-covering-first}.
    Let the input set cover instance be $\langle E, \mathcal{S} \rangle$.
    We show that for any $i$, the expected total weight added to sets during phase $i$ is $O(\opt)$.

    We call a set \textit{dense} if $d(S) \geq \Delta / 2^{i-1}$, \textit{moderate} if $d(S) \in [\Delta/2^{i+1}, \Delta / 2^{i-1})$ and \textit{light} otherwise.
    
    Note that the algorithm does not keep track of whether sets are dense, moderate or light. These definitions are only for analysis. 

    \textbf{Notation}. We begin by defining some notation that will be helpful in analysis. 
    
    \begin{itemize}
    \item Let $d_{i,j}(S)$ denote the value of $d(S)$ in the beginning of iteration $(i, j)$. 
    
    \item Let $\cov_{i,j}$ be the set of those elements whose weight is at least $1/4$ at the end of $(i, j)$. Intuitively, we think of $\cov_{i,j}$ as the set of elements that are, within a factor or $4$, already covered by iteration $(i, j)$.

    \item Let $\wt_{i,j}(S)$ denote the weight added to a set $S$ added \textit{during} iteration $(i,j)$.
    \item Let $\hat{\wt}_{i,j}(e)$ denote the estimated weight added to $e$ \textit{during} iteration $(i,j)$.
    \item Let $\Delta_i = \Delta / 2^{i-1}$. 
    \item Let $E_{i,j}= E \setminus \cov_{i, j}$ be the set of elements that are uncovered after iteration $(i, j)$.
    \end{itemize}
    
    In the perfect version of the algorithm, all sets $S$ satisfy $d_{i, j}(S) \leq \Delta_i = \Delta / 2^{i-1}$. However, this is not true for the imperfect version. 
    It is helpful to partition $E_{i, j}$ into subsets as follows,
    \begin{definition}
    \label{defn:uncov-partition}
    We partition the set of uncovered elements $E_{i, j}$ into $\log \Delta$ sets $E^{(k)}_{i, j}$ for $k = 1,2, \dots \log \Delta$ defined as follows,
    \begin{equation}
        E^{(k)}_{i, j} = \left\{e \in E_{i,j} \mid \max\limits_{S \ni e} d(S) \in \left(\Delta / 2^{k}, \Delta / 2^{k-1} \right]\right\}
    \end{equation} 
    We also write $E^{(\geq t)}_{i, j} = \bigcup\limits_{k\geq t} E^{(k)}_{i, j}$.
    \end{definition}
    \begin{observation}
    For a fixed $k$, and increasing iteration $(i, j)$, elements can never enter $E^{(\geq k)}_{i, j}$. Formally, for any two iterations $(i', j') \geq (i, j)$,
    \begin{equation}
        \label{eqn:part-subset-obs}
    E^{(\geq k)}_{i', j'} \subseteq E^{(\geq k)}_{i,j}
    \end{equation}
    \end{observation}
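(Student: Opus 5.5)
The plan is to split membership in $E^{(\geq k)}_{i',j'}$ into its two defining conditions and check each against the earlier iteration $(i,j)$. By \cref{defn:uncov-partition}, $e \in E^{(\geq k)}_{i',j'}$ means two things. First, $e \in E_{i',j'}$, i.e.\ $e \notin \cov_{i',j'}$. Second, $\max_{S \ni e} d(S)$, evaluated at $(i',j')$, lies in $\bigl(\Delta/2^{k'}, \Delta/2^{k'-1}\bigr]$ for some $k' \geq k$; equivalently, this maximum is at most $\Delta/2^{k-1}$. Throughout I read $d(S)$ as the count of uncovered elements of $S$ at the iteration in question, consistent with $d_{i,j}(S)$.

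\textbf{Step 1 (the uncovered part).} In \cref{alg:err1}, set weights are only ever increased: the update is $\min(1,\wt(S)+2^j/f)$, and the final rescaling comes after all iterations. Hence $\wt(e)=\sum_{S\ni e}\wt(S)$ is non-decreasing in the iteration index. So once $\wt(e)\ge 1/4$ it stays so, giving $\cov_{i,j}\subseteq \cov_{i',j'}$ and therefore $E_{i',j'}\subseteq E_{i,j}$. This step is routine.

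\textbf{Step 2 (the degree part).} For the same reason, every $d(S)=|S\setminus \cov|$ is non-increasing in time. Therefore $\max_{S\ni e} d(S)$ at $(i',j')$ is at most its value at $(i,j)$.

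\textbf{Main obstacle.} Combining Steps 1 and 2 is where the difficulty sits, and I expect it to be the crux. Step 2 bounds the later degree by the earlier one. But the inclusion as worded needs the reverse: that $\max_{S\ni e} d(S)\le \Delta/2^{k-1}$ at $(i',j')$ implies the same bound at $(i,j)$. Monotone decrease of degrees alone does not give this.

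What the monotonicity does give cleanly is the complementary family. The uncovered elements with some incident set of degree $>\Delta/2^{k-1}$ at $(i',j')$ also have that property at $(i,j)$, since both conditions (being uncovered, and having a large incident degree) are preserved backwards in time. So the approach I would commit to is this:
\begin{enumerate}
\item First verify which indexing convention the subsequent analysis actually uses.
\item Under the reading consistent with Steps 1 and 2, namely that the bucket index is oriented so that $E^{(\geq k)}$ collects the uncovered elements whose maximal incident degree is large, the observation follows immediately from Steps 1 and 2. In that case I would present it in exactly that form.
\item Under the literal ``$\max d \le \Delta/2^{k-1}$'' reading, I would not try to prove the inclusion. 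Instead I would exhibit the obstruction: an element that is uncovered throughout and whose incident sets all lose uncovered elements between $(i,j)$ and $(i',j')$ moves from a low bucket into a high one, so it enters $E^{(\geq k)}$.
\end{enumerate}
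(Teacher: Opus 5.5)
Your proposal is correct and, under the large-degree orientation of your item 2, it is the paper's own argument: the paper's proof starts from ``there is $S \ni e$ with $d(S,i',j') \geq \Delta/2^k$'', which silently uses that orientation rather than the literal \cref{defn:uncov-partition}, and then concludes exactly by your Step 2, that effective degrees only decrease. Your Step 1 adds the uncovered-ness half that the paper leaves implicit, and you are right that the inclusion fails under the literal bucket indexing, since falling degrees move an uncovered element into higher-index buckets.
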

    \begin{proof}
        Suppose that an element $e \in E^{(\geq k)}_{i', j'}$. This implies that there exists a set $S \ni e$ with $d(S, i', j') \geq \Delta / 2^{k}$. The effective degree of all sets only decreases. Hence, $d(S, i, j) \geq d(S, i', j') \geq \Delta / 2^{k}$. It follows that $e \in E^{(\geq k)}_{i, j}$.
    \end{proof}

    We now prove several properties that we will later combine into a proof of this theorem.
    \begin{lemma}
    \label{lem:bound-e}
    Recall that $E^{(k)}_{i, j}$ denotes the set of uncovered elements at the beginning of iteration $(i, j)$ that are adjacent to sets with effective degree at most $\Delta / 2^{k-1}$ and contain a set with effective degree at least $\Delta / 2^k$. We derive the following bounds on the expected size of these sets:
        \begin{align*}
            |E^{(\geq i)}_{i, j}| &\leq \opt \cdot \Delta_i \\
            \xpect{|E^{(k)}_{i, j}|} &\leq \opt \cdot \Delta_i \cdot \left( \frac{1}{2} \right)^{i-k-1} \ \ \ \ \ \text{ for any } k < i 
        \end{align*}
    \end{lemma}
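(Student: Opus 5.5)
For the first inequality I will use a deterministic charging argument against a fixed optimal cover $\opt$. Every $e \in E^{(\geq i)}_{i,j}$ is an uncovered element, so it lies in some set $S^\ast$ of the optimal cover, and $e$ is counted in $d(S^\ast)$ at the beginning of iteration $(i,j)$. Thus $|E^{(\geq i)}_{i,j}| \le \sum_{S^\ast \in \opt} d_{i,j}(S^\ast)$. By definition, every element of $E^{(\geq i)}_{i,j}$ has all its containing sets of effective degree at most $\Delta/2^{i-1} = \Delta_i$. In particular each optimal set meeting $E^{(\geq i)}_{i,j}$ contributes at most $\Delta_i$ such elements. Summing over the at most $\opt$ optimal sets gives $|E^{(\geq i)}_{i,j}| \le \opt \cdot \Delta_i$. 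Concretely, I restrict the sum to the elements of $S^\ast$ lying in $E^{(\geq i)}_{i,j}$. If $S^\ast$ contains any such element, then $d(S^\ast) \le \Delta_i$, since that element's maximum over containing sets is at most $\Delta_i$.

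The same charging works for the second inequality. For $k<i$, every $e \in E^{(k)}_{i,j}$ lies in some optimal set $S^\ast$ with $d_{i,j}(S^\ast) \le \Delta/2^{k-1}$. So $|E^{(k)}_{i,j}|$ is at most $\sum_{S^\ast \in \opt} (\Delta/2^{k-1}) \cdot \mathbb{1}[S^\ast \text{ has } d_{i,j}(S^\ast) > \Delta/2^{i}]$, or more crudely $\sum_{S^\ast} d_{i,j}(S^\ast)\,\mathbb{1}[d_{i,j}(S^\ast) \le \Delta/2^{k-1}]$. Taking expectations reduces the claim to a tail bound on a single set:
\[
\Pr\bigl[d_{i,j}(S) > \Delta/2^{k}\bigr] \lesssim 4^{-(i-k)}, \quad\text{or at least}\quad \E{d_{i,j}(S)\,\mathbb{1}[d_{i,j}(S)\in(\Delta/2^k,\Delta/2^{k-1}]]} \le \Delta_i\, 2^{-(i-k-1)}.
\]
Since $\Delta/2^{k-1} = \Delta_i \cdot 2^{i-k}$, it suffices to show $\Pr[d_{i,j}(S) > \Delta/2^{k}] \le 4^{-(i-k)}$ up to constants. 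I will get this from \cref{cond:2}.

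The key step is the following. Suppose that in some phase $i'$ with $k < i' < i$ the set $S$ still has $d(S) \ge \Delta/2^{i'-1}$ throughout, which must hold if $d$ exceeds $\Delta/2^k$ at time $(i,j)$, because degrees only decrease. Then in every iteration of phase $i'$, \cref{cond:2} says the estimate marks $S$ dense with probability at least $3/4$, independently of the past. If $S$ is marked dense in all $\log f$ iterations, its weight reaches $1$. Then every element of $S$ has weight at least $1$ and is covered, so $d(S)=0$ from then on. Hence, independently for each of the $i-k-1$ phases strictly between $k$ and $i$, $S$ survives with probability at most about $1/4$. This requires that a single failure per phase is what lets $S$ survive, and the per-phase survival bound must be justified carefully.

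I expect this per-phase survival bound to be the main obstacle. A failure in one iteration does not by itself save $S$, because the weight schedule $2^j/f$ means only the last few iterations matter. I would argue that the weight added in iteration $j=\log f$ alone is $1$. So $S$ survives phase $i'$ only if it is misestimated at $(i',\log f)$ while still satisfying $d \ge \Delta/2^{i'-1}$, an event of probability at most $1/4$ by \cref{cond:2}. The independence clause in \cref{thm:approx-promise} lets me multiply these bounds across phases. This gives survival probability $4^{-(i-k-1)}$. Multiplying by the $\Delta_i 2^{i-k}$ size bound yields $\opt\cdot\Delta_i\cdot 2^{-(i-k-2)}$, which matches the stated $2^{-(i-k-1)}$ bound after adjusting the off-by-one in the phase count, for example by also using phase $i'=k+1$ or absorbing the constant.
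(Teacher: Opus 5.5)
Your first inequality is correct and is the paper's argument. The second inequality has a genuine gap at the reduction to a single-set tail bound.

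\textbf{The gap.} Membership $e\in E^{(k)}_{i,j}$ is certified by the \emph{maximum} of $d_{i,j}$ over all sets containing $e$. The maximizing set $T$ need not belong to the optimal cover. The optimal set $S^\ast\ni e$ only inherits the upper bound $d_{i,j}(S^\ast)\le\Delta/2^{k-1}$, not the lower bound $d_{i,j}(S^\ast)>\Delta/2^{k}$. So your first expression, which keeps only optimal sets of large effective degree, is unjustified. Your ``crude'' expression is valid, but it has no lower threshold, so its expectation is of order $\E{d_{i,j}(S^\ast)}$ per optimal set and does not decay in $i-k$.

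A concrete counterexample to the reduction: take an optimal $S^\ast$ with $|S^\ast|\le\Delta/2^{k}$. Suppose all its elements also lie in a large non-optimal set $T$ that is misestimated at $(i',\log f)$ for every $i'\in\{k+1,\dots,i-1\}$. Then $|S^\ast\cap E^{(k)}_{i,j}|>0$ with positive probability, yet $\Pr[d_{i,j}(S^\ast)>\Delta/2^k]=0$. Your per-set bound $\Pr[d_{i,j}(S)>\Delta/2^k]\le 4^{-(i-k-1)}$ is correct, and looking only at iteration $(i',\log f)$, where the increment is $2^{\log f}/f=1$, is a clean way to get it. It is simply applied to the wrong set.

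\textbf{The fix.} Run the survival argument per element. Fix $e\in E^{(k)}_{i,j}$ and a phase $i'\in\{k+1,\dots,i-1\}$, and condition on the history before iteration $(i',\log f)$. The sets containing $e$ with $d>\Delta/2^k$ are then determined and nonempty, since degrees only decrease; pick any one, $T$. Because $d(T)\ge\Delta/2^{i'-1}$, \cref{cond:2} marks $T$ dense with probability at least $3/4$. In that case $\wt(T)=1$ and $e$ is covered. Hence $e$ survives these phases with probability at most $4^{-(i-k-1)}$.

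You still need to bound the pool of elements alive at the start of phase $k+1$. The paper charges $E^{(k)}_{i,j}$ back to earlier pools $E^{(\geq r)}_{r,1}$ with $r\le k$, whose sizes are controlled by the first inequality. Within your own framework you can instead use
\[
\E{|S^\ast\cap E^{(k)}_{i,j}|}\le 4^{-(i-k-1)}\,\E{d_{k+1,1}(S^\ast)}, \qquad \E{d_{k+1,1}(S^\ast)}\le 5\Delta/2^{k}.
\]
The second bound follows from your per-set tail bound summed over dyadic ranges. Together they give $\E{|E^{(k)}_{i,j}|}\le 5\,\opt\cdot\Delta_i\cdot 2^{-(i-k-1)}$.

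\textbf{A minor point on constants.} Either route loses a constant factor relative to the stated bound. Your own count already does: $4^{-(i-k-1)}\cdot 2^{i-k}=2^{-(i-k-2)}$. This cannot be repaired by ``also using phase $k+1$'', which you already used, nor by using phase $k$, where the witness is only guaranteed $d>\Delta/2^k$ and \cref{cond:2} does not apply. The loss is harmless for \cref{thm:approx-promise}.
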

    \begin{proof}
        Every element in $E^{(\geq i)}_{i, j}$ can only be covered by sets of size at most $\Delta / 2^{i-1}$ in $\mathcal{S}$, and hence
        \[
            \left|E^{(\geq i)}_{i, j}\right| \leq \opt \cdot \Delta / 2^{i-1} = \opt \cdot \Delta_i.
        \]

        Since $d_{i,j}(S)$ does not increase with $i$, every element $e \in E^{(k)}_{i,j}$ for $k < i$ would have been adjacent to a dense or moderate set during phases $[k, i-1]$. 
        During these $i-k$ phases, these elements were not added to $\cov$. 
        Element $e$ is added to $\cov$ if the density of set $S$ was measured correctly at the end of one of the phases. 
        Since, by our assumption, the estimations of $d(S)$ in different phases are \textit{independent}, the probability that $e$ was not added to $\cov$ is at most $(1/4)^{i-k}$.

        \begin{observation}
            \label{obs:bound-e-exact-k}
            It holds that, $E^{(k)}_{i, j} \subseteq \bigcup\limits_{r = 1}^{k} E^{(\geq r)}_{r,1}$
        \end{observation}
        \begin{proof}
            Consider an element $e \in E^{(k)}_{i,j}$. There exists a set $S \ni e$ such that $d_{i,j}(S) \geq \Delta / 2^k$. Since $d(S)$ can only decrease, $d_{k, j}(S) \geq \Delta / 2^k$. 
            
            We would like to say that $e \in E^{(k)}_{k,j}$, but this is not true. It could be that during $(k, j)$ there exists some set $S'$ (possible same as $S$) with $d_{k,j}(S') > \Delta / 2^{k-1}$, say $d(S') \in (\Delta / 2^{k'}, \Delta / 2^{k'-1}]$. More formally, the guarantee we have is that $e \in E^{(\geq k)}_{k, j}$ and if $e \not\in E^{(k)}_{k, j}$, then there exists $k' < k$ such that $e \in E^{(\leq k')}_{k', j}$. It follows by induction that there exists $r < i$ such that $e \in E^{(r)}_{r, j}$.
        \end{proof}

        Applying the union bound on \cref{obs:bound-e-exact-k} along with linearity of expectation over every $e \in E_{r, j}^{(\geq r)}$ we get,
        $$\xpect{|E^{(k)}_{i, j}|} \leq \sum\limits_{r \leq k} |E^{(\geq r)}_{r, 1}| \cdot 4^{r-i} \leq \sum\limits_{r \leq k} \Delta / 2^{r-1} \cdot 4^{r-i} = \Delta / 2^{i-1} \cdot \sum\limits_{r \leq k}  2^{r-i} \leq \Delta_i \cdot 2^{k-i+1}$$ 
    \end{proof}

    We now analyze the expected cost assigned to the light-weight sets.
    \begin{lemma}[Cost of light-weight sets]
    \label{lem:light-weight}
        It holds that
        \[
            \sum\limits_S \xpect{\wt_{i, j}(S) \cdot \mathbbm{1}[S \text{ is light in iteration $(i, j)$}]} \leq 3 \cdot \opt / 2^j.
        \]
    \end{lemma}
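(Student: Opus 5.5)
The plan is to bound the expected weight that light sets (those with $d_{i,j}(S) < \Delta/2^{i+1}$) receive in iteration $(i,j)$ set by set, and then sum over sets, using the fact that the total number of uncovered element--set incidences is controlled by $\opt$. Condition the whole computation on the randomness of all earlier iterations. This fixes $d_{i,j}(S)$ and hence whether $S$ is light. A light set receives weight in iteration $(i,j)$ only if $\derr(S) \geq \Delta/2^i$, and then receives at most $2^j/f$. By \cref{cond:1}, which holds independently of the past, we get
\[
\xpect{\wt_{i,j}(S)\cdot \mathbbm{1}[S \text{ light}] \mid \text{past}} \le \frac{2^j}{f}\cdot \frac{d_{i,j}(S)}{\Delta}\cdot\frac{2^{i-1}}{4^j} = \frac{d_{i,j}(S)}{\Delta_i f}\cdot 2^{-j},
\]
using $\Delta_i = \Delta/2^{i-1}$. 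Summing over light sets and taking expectations reduces the lemma to showing
\[
\xpect{\sum_S d_{i,j}(S)\,\mathbbm{1}[S\text{ light}]} \le 3\,\opt\cdot \Delta_i f .
\]

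To prove this, write $\sum_S d_{i,j}(S) \le f \cdot |E_{i,j}|$, where $E_{i,j}$ is the set of elements uncovered at the beginning of iteration $(i,j)$. This holds because each uncovered element lies in at most $f$ sets. Then bound $\xpect{|E_{i,j}|}$ using \cref{lem:bound-e}. Split $E_{i,j}$ into $E^{(\ge i)}_{i,j}$, which has size at most $\opt\cdot\Delta_i$ deterministically, and the parts $E^{(k)}_{i,j}$ for $k<i$, each with expected size at most $\opt\cdot\Delta_i\cdot 2^{-(i-k-1)}$. Summing the geometric series over $k<i$ gives at most $2\,\opt\cdot\Delta_i$. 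Hence $\xpect{|E_{i,j}|} \le 3\,\opt\cdot\Delta_i$, and substituting yields the claimed $3\opt/2^j$.

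I expect the main obstacle to be a bookkeeping mismatch between the notion of ``uncovered'' in $d(S)$ and the one in \cref{lem:bound-e}. In $d(S)$, an element is uncovered when $\wt(e)<1$; in \cref{defn:uncov-partition}, $E_{i,j}$ uses the threshold $1/4$ at the end of the previous iteration. To handle this, I would work with the counting convention under which \cref{lem:bound-e} is stated and note that the estimator's notion of ``uncovered'' is a subset of it, up to the factor-$4$ rescaling done at the end of \cref{alg:err1}. A second, cruder route also suffices: a light set always satisfies $d_{i,j}(S) < \Delta/2^{i+1}$, so one only needs to count incidences with elements in $E_{i,j}$, which is exactly what the bound above does. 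I would double-check that the conditioning interacts correctly with taking expectation of $|E_{i,j}|$. Since the per-set bound is linear in $d_{i,j}(S)$, the tower property applies directly.
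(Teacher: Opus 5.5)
Your main line is the paper's proof. You bound each light set's contribution by $\frac{2^j}{f}\cdot\frac{d_{i,j}(S)}{\Delta_i 4^j}$ via \cref{cond:1} and sum to $\frac{1}{\Delta_i f 2^j}\sum_S \xpect{d_{i,j}(S)}$. You then bound the incidence count by $f\sum_k |E^{(k)}_{i,j}|$ and finish with \cref{lem:bound-e} and the geometric series, which gives $3\,\opt\cdot\Delta_i$.

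What needs fixing is how you reconcile the threshold mismatch you rightly flagged: your inclusion goes the wrong way. $d(S)$ counts $\{e\in S:\wt(e)<1\}$, which is a superset of $\{e\in S:\wt(e)<1/4\}=S\cap E_{i,j}$. So with the paper's $E_{i,j}$ you only get $\sum_S d_{i,j}(S)\ge\sum_S|S\cap E_{i,j}|$, not the bound you need. The factor-$4$ rescaling at the end of \cref{alg:err1} does not help, because it happens after all iterations and leaves $d_{i,j}(S)$ unchanged. Your ``cruder route'' is circular for the same reason: lightness bounds $d_{i,j}(S)$ per set, but says nothing about whether the counted elements lie in $E_{i,j}$. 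The paper's proof skips this point too; it simply asserts that $\sum_S d_{i,j}(S)$ is the number of edges of $\langle E\setminus\cov_{i,j},\mathcal{S}\rangle$.

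The repair is to note that the proof of \cref{lem:bound-e} never uses the value $1/4$. Its first bound only needs each set of an optimal cover to contain at most $d(S)\le\Delta_i$ of the elements in question. Its second bound only needs the following: if a dense set $S\ni e$ is detected at iteration $(i',\log f)$ of an earlier phase, then $\wt(S)$ becomes $1$, which forces $\wt(e)\ge 1$. Both arguments go through verbatim for $\{e:\wt(e)<1\}$, the set whose incidences $d$ actually counts. For that set, $\sum_S d_{i,j}(S)\le f\cdot|\{e:\wt(e)<1\}|$ is immediate.
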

    \begin{proof}
        Let $S$ be a set.
        For the sake of analysis, we pretend that $\wt_{i, j}(S)$ is either $0$ or $2^j / f$ depending on whether $\hat{d}(S) < \Delta / 2^i$ or not. Observe that actually $\wt_{i, j}(S)$ could be smaller than $2^j / f$, because $\wt(S)$ is capped to $1$. In our analysis, we pay $2^j / f$ even in such cases.
        For a light set $S$, \cref{cond:1} implies that $2^j / f$ is added to $\wt(S)$ with probability at most $d(S) / \Delta_i \cdot \frac{1}{4^j}$. 
        Hence, we have
        \[
            \sum_S \xpect{\wt_{i, j}(S) \cdot \mathbbm{1}[S \text{ is light in iteration $(i, j)$}]} \leq \frac{1}{\Delta_i f 2^j} \sum_S \xpect{d_{i, j}(S)}. 
        \]
        Now we observe that $\sum_S {d_{i, j}(S)}$ is the total number of edges in the bipartite graph of the instance $\langle E \setminus \cov_{i, j}, \mathcal{S} \rangle$ and is upper bounded by $f \cdot \sum_k |E^{(k)}_{i, j}|$. 
        Substituting this along with \cref{lem:bound-e} we get
        \[
            \sum_S \xpect{\wt_{i, j}(S) \cdot \mathbbm{1}[S \text{ is light in iteration $(i, j)$}]} \leq 3 \opt / 2^j.
        \]
    \end{proof}

    To upper-bound the weight of the moderate/dense sets, we follow the proof of \cref{alg:frac:setcover:base}. 
    Define $\wt_i(S) = \sum\limits_j \wt_{i,j} (S)$, and let $\wt_{i}(e) $ denote the total weight of element $e$ that comes from \textbf{only} dense/moderate sets during phase $i$ until the last iteration $(i, j)$ during which $\wt(e) < 1$. 
    We stress that $\wt_i(e)$ does \textbf{not} count the total weight of $e$, rather only a portion of it that is added during iterations $(i, 1)$ through $(i, \log f)$ and only that portion that comes from dense or moderate sets. 
    We have that
    \[
        \sum_{e} \wt_{i}(e) = \sum_{j, \text{not light } S} d_{i,j}(S) \cdot \wt_{i, j}(S) \geq \frac{\Delta_i}{2} \sum_{\text{not light }S} \wt_{i}(S).
    \]
    Now obtaining an upper bound on $\xpect{\wt_i(e)}$, suffices to upper bound the expected value of the RHS. We defer the proof of the upper bound to \cref{sec:proof-of-lem:xp-wt-e}, and just state it as a lemma here.

    \textbf{Note:} The following lemma is the only difference between the fractional and integral versions. 
    
    \begin{lemma}[Element weight from non-light sets]
    \label{lem:xp-wt-e}
        It holds that $\xpect{\wt_{i}(e)} \leq 64$.
    \end{lemma}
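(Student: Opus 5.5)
We bound $\wt_i(e)$ by splitting it according to the iteration $j_e$ at which $e$ stops being counted. By definition, $\wt_i(e)$ only accumulates weight from dense or moderate sets during iterations $(i,1),\dots,(i,j_e)$, where $(i,j_e)$ is the last iteration in phase $i$ at the start of which $\wt(e)<1$. The weight added in the final iteration $(i,j_e)$ is at most twice the weight added during $(i,1),\dots,(i,j_e-1)$ from the same sets. This is because the increments $2^j/f$ double from one iteration to the next, provided each such set received weight in every earlier iteration. That proviso does not hold automatically here, since a set can be estimated light and then dense again. So I will first bound the contribution of each set by geometric sums rather than rely on a monotonicity argument.

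Concretely, write
\[
\wt_i(e)=\sum_{j\le j_e}\sum_{S\ni e \text{ not light}}\wt_{i,j}(S).
\]
For each $j$, let $m_j$ be the number of non-light sets containing $e$ that received weight in iteration $j$, so $\wt_i(e)\le\sum_{j\le j_e}m_j 2^j/f$.

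The key step is to show that the event $\{j_e\ge j\}$ is unlikely when the sets around $e$ are heavily weighted. Suppose some iteration $j'<j$ already has $m_{j'}2^{j'}/f\ge 1$. Then $e$ was already covered, since $\wt(e)\ge 1$, which contradicts $j_e\ge j$. Hence on $\{j_e\ge j\}$ we have $m_{j'}<f/2^{j'}$ for every $j'<j$.

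This alone does not bound $m_{j_e}$, because many sets may first receive weight only at $j_e$. To handle this, I will use \cref{cond:2}. Any set that is dense (i.e.\ $d\ge\Delta_i$) during an iteration where it is not selected must have been misestimated, which happens with probability at most $1/4$, independently across iterations. Roughly, a dense set that is in the configuration at iteration $j$ was selected in each of the previous $j-1-t$ iterations except $t$ misses. Its total contribution up to $j$ is therefore at least $2^{j-t-1}/f$, and it is selected at $j$ with contribution $2^j/f$. Hence the ratio between the weight added at $j$ and the weight already present is at most $2^{t+1}$, and this occurs with probability at most $4^{-t}$. Summing over $t$ gives an expected blow-up of $\sum_t 2^{t+1}4^{-t}=O(1)$.

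Moderate sets need not satisfy this, since for them the estimator is unconstrained. For those sets, the analysis will pair each moderate set at iteration $j$ with its weight already accumulated earlier; the first iteration in which it receives weight I will charge via the cap $\wt(e)<1$ before $j_e$, which is the step I expect to be the main obstacle.

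Combining these bounds, the weight present before $j_e$ is less than $1$. Applying linearity over sets and using the independence across iterations from \cref{cond:2}, the expected multiplicative factor at iteration $j_e$ is a bounded geometric series. Together with the factor-$2$ doubling from the first paragraph and the constants lost in the bookkeeping, this gives $\xpect{\wt_i(e)}\le 64$.
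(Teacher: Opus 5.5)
The step you flag as the main obstacle is a genuine gap, and the plan you sketch for it cannot work. The cap $\wt(e)<1$ only controls weight added strictly before $j_e$. The dangerous weight comes from moderate sets selected at $j_e$ without having been selected at $j_e-1$, in particular those selected for the first time, and such a set has no earlier weight to pair with. Since \cref{cond:1,cond:2,cond:3} say nothing about the estimates of moderate sets, this really happens. If all $f$ sets containing $e$ are moderate throughout phase $i$ and are flagged only at $j=\log f$, then each receives weight $1$ there and $\wt_i(e)=f$. So an additional property of the estimator on moderate sets is indispensable. One such property is that a set flagged in an iteration in which it is moderate was flagged in all earlier iterations of the phase; this is the monotonicity the cutoff $j_S$ of \cref{alg:frac:setcover:base} enforces. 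Under it, your doubling observation handles moderate sets: their weight at $j_e$ is at most twice what they gave $e$ at $j_e-1$, or at most $2$ if $j_e=1$. The paper's proof does not treat moderate sets separately either. Its $F(t,n_t)$ tracks only the $n_t$ dense sets and charges at most $n_t2^t/f$ per iteration, so it tacitly relies on the same kind of structure.

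For the dense sets, your route genuinely differs from the paper's and can be made rigorous, but not as written. The paper argues in aggregate. If $n_t>16f/2^t$ dense sets contain $e$ at $(i,t)$, then by \cref{cond:2} and Markov's inequality, the event that at most $n_t/8$ of them are selected has probability at most $2/7$, so $e$ survives iteration $t$ with probability at most $2/7$. A backward induction $F(t,n_t)\le\max\{D,Cn_t2^t/f\}$ then turns this into a geometric series. Your per-set version has two problems as written:
\begin{enumerate}
\item $t$ misses anywhere among the earlier iterations do not have probability at most $4^{-t}$; the bound is only $\binom{j-1}{t}4^{-t}$. The parameter that governs the ratio is the number $s$ of consecutive misses immediately before $j_e$, and that one does have probability at most $4^{-s}$.
\item ``Linearity over sets'' does not by itself yield an expected multiplicative factor, because $j_e$ is a stopping time correlated with the misses and with the accumulated weights.
\end{enumerate}
What works is to charge $2^{j_e}/f$ to the last iteration $j'<j_e$ in which $S$ was selected, and to condition on the history through $j'$. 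The event $\{j'<j_e\}$ is determined by then. During $j'+1,\dots,j_e-1$ the set $S$ is still dense, since $d$ only decreases, so \cref{cond:2} applies to each consecutive miss there. This gives a conditional expected factor of at most $\sum_{k\ge1}2^k4^{-(k-1)}=4$, integrated against the pre-$j_e$ weight of $e$, which is below $1$. Dense sets never selected before $j_e$ contribute at most $f\cdot\frac1f\sum_{j\ge1}2^j4^{-(j-1)}=4$ in expectation. This bounds the dense-set contribution by a small constant and is arguably cleaner than the paper's induction.
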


    Using \cref{lem:xp-wt-e} we get that,

    \begin{lemma}[Cost of moderate and dense sets]
        \label{lem:heavy-weight}
    \[
        \xpect{\wt_i(S) \cdot \mathbbm{1}[S \text{ is moderate or dense }]} \leq 64 \cdot \xpect{|E_{i,1}|}  / \Delta_i.
    \]
    \end{lemma}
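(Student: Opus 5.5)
The plan is a double-counting argument between set weights and element weights within phase $i$, followed by \cref{lem:xp-wt-e}. The bound is presumably meant summed over $S$ (or as a per-phase total), since the right-hand side is a global quantity. So the target is
\[
\xpect{\textstyle\sum_S \wt_i(S)\cdot \mathbbm{1}[S \text{ moderate or dense}]} \le 64\,\xpect{|E_{i,1}|}/\Delta_i .
\]

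\textbf{Step 1 (deterministic inequality).} Charge each unit of weight added to a non-light set $S$ in iteration $(i,j)$ to its $d_{i,j}(S)$ currently uncovered elements. A non-light set has $d_{i,j}(S)\ge \Delta/2^{i+1}=\Delta_i/4$. This gives
\[
\textstyle\sum_{e} \wt_i(e) \;=\; \sum_{j}\sum_{\text{non-light } S} d_{i,j}(S)\,\wt_{i,j}(S) \;\ge\; \frac{\Delta_i}{4}\sum_{\text{non-light }S}\wt_i(S).
\]
The text before the lemma writes $\Delta_i/2$; the exact constant only affects the final constant, so I would reconcile it with the definition of moderate used there.

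There is one subtlety to handle: ``moderate or dense'' can change across iterations $j$. I would interpret the indicator iteration by iteration, i.e., use $\wt_{i,j}(S)\mathbbm{1}[S \text{ non-light at }(i,j)]$, which is exactly what the double count handles.

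\textbf{Step 2 (restrict the element sum).} The elements receiving charge are uncovered at the beginning of iteration $(i,j)$. Hence they lie in $E_{i,1}$ (more precisely, in the set uncovered at the start of phase $i$, since coverage is monotone). Also, $\wt_i(e)$ is by definition zero for elements outside this set. Therefore $\sum_e \wt_i(e)=\sum_{e\in E_{i,1}}\wt_i(e)$.

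\textbf{Step 3 (take expectations).} This is the main obstacle. \cref{lem:xp-wt-e} bounds $\xpect{\wt_i(e)}\le 64$ for each fixed $e$, but $E_{i,1}$ is itself random. I would condition on the history up to the start of phase $i$, which determines $E_{i,1}$. I would then argue that the proof of \cref{lem:xp-wt-e} (deferred to \cref{sec:proof-of-lem:xp-wt-e}) holds conditionally on that history. This is plausible because conditions \cref{cond:1,cond:2} are assumed to hold independently of the randomness in previous iterations.

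With this, $\xpect{\sum_{e\in E_{i,1}}\wt_i(e)\mid \text{history}}\le 64|E_{i,1}|$. Taking the outer expectation and dividing by the factor from Step 1 yields the claim, up to the constant.

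If the conditional version of \cref{lem:xp-wt-e} is not directly available, I would instead write $\sum_e \wt_i(e)\mathbbm{1}[e\in E_{i,1}]$. I would then check that the proof of \cref{lem:xp-wt-e} in fact bounds $\xpect{\wt_i(e)\mathbbm{1}[e\in E_{i,1}]}\le 64\Pr[e\in E_{i,1}]$.
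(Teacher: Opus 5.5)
Your proposal is the paper's argument: you double-count the phase-$i$ weight on non-light sets against the element weights $\wt_i(e)$ and then invoke \cref{lem:xp-wt-e}. Your conditioning on the history before phase $i$ makes rigorous a step the paper leaves implicit (passing from a per-element bound on $\xpect{\wt_i(e)}$ to a sum over the random set $E_{i,1}$), and it is valid because the proof of \cref{lem:xp-wt-e} uses \cref{cond:2} only conditionally on earlier randomness. Your factor $\Delta_i/4$ is the correct one for the moderate threshold $\Delta/2^{i+1}$ (the paper's $\Delta_i/2$ is a slip), so you obtain the constant $256$ rather than $64$; the paper's own derivation does not give $64$ either (it gives $128$), and only the $O(1)$ factor matters downstream.
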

    We use \cref{lem:bound-e} to derive that
    \[
        \xpect{|E_{i, 1}|} \leq \sum_k \xpect{|E^{(k)}_{i, 1}|} \leq 3 \cdot \Delta_i \cdot \opt,
    \]    
    which gives $\xpect{\wt_i(S) \cdot \mathbbm{1}[S \text{ is dense }]} \leq 192 \cdot \opt$.
    Adding the upper bounds on the weights of light and dense sets, we conclude that $\xpect{\wt_i(S)} \leq 198 \cdot \opt$.

    Finally, we need to upper-bound the weight of sets in \cref*{line:naive-covering-first}. For this, we apply \cref{cond:3} on any optimal set cover for the instance. We show that on expectation, the total number of elements in each set that has $\wt(e) > 1/4$ is at most $O(1)$. Applying linearity of expectation over these sets completes the proof.
    \begin{lemma}{(Cost of naive-covering).}
    \label{lem:cost-naive-covering}
        For every set $S$, just before Line 10 of \cref{alg:err1} we have, 
        $$\xpect{|e \in S \mid \wt(e) < 1/4|} \leq O(1)$$
    \end{lemma}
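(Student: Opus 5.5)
The plan is to read the bound directly off \cref{cond:3}, applied at the last phase $i = \log \Delta$, and to convert the resulting tail bound into a bound on the expectation. Fix a set $S$ and let $X = |\{e \in S \mid \wt(e) < 1/4\}|$, measured just before Line 10 of \cref{alg:err1}, that is, after the last iteration $(\log \Delta, \log f)$ and before the final scaling by $4$. For $i = \log \Delta$ we have $\Delta / 2^i = 1$, so \cref{cond:3} reads $\Pr[X \geq k] \leq 1/k^3$ for every integer $k \geq 1$.

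First I will make sure the quantity controlled by \cref{cond:3} is exactly $X$. The condition is stated in terms of the current weights, and elements with weight below $1/4$ at the end of the final phase are precisely those that could require naive covering, even after the factor-$4$ scaling. If \cref{cond:3} is instead meant to be evaluated at the end of phase $i$, I will use monotonicity: weights only increase, so the set of elements with $\wt(e) < 1/4$ only shrinks. Hence the bound at the end of phase $\log \Delta$ is the relevant one, and taking $i = \log\Delta$ gives threshold $\Delta / 2^{\log \Delta} = 1$.

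Next I will bound the expectation with the tail-sum formula for nonnegative integer random variables:
\[
\xpect{X} = \sum_{k \geq 1} \Pr[X \geq k] \leq \sum_{k \geq 1} \frac{1}{k^3} = \zeta(3) < 2.
\]
This already shows $\xpect{X} = O(1)$. The exponent $3$ is stronger than needed; any summable tail suffices.

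The main obstacle is interpretive rather than computational, namely matching the quantifier in \cref{cond:3} to the final state. If the condition only holds at phase boundaries for a generic $i$ with a slack factor, I will instead take $i = \log \Delta$ and bound $\Pr[X \geq k\cdot c] \leq 1/k^3$ for a constant $c$, which still gives $\xpect{X} \leq c\,\zeta(3) = O(1)$. In the subsequent argument this lemma feeds into the cost of naive covering: summing over the sets of an optimal cover, each uncovered element lies in some optimal set, so the expected number of naively added sets is at most $\sum_{S \in \opt} \xpect{X_S} = O(\opt)$.
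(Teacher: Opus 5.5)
Your proposal is correct and follows the paper's own argument: apply \cref{cond:3} at the final phase $i=\log\Delta$, where $\Delta/2^i=1$, to get $\Pr[X\ge k]\le 1/k^3$, and then sum. The only cosmetic difference is that you use the tail-sum formula to get $\zeta(3)$, whereas the paper bounds $\sum_k k\cdot k^{-3}=\pi^2/6$.
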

    \begin{proof}
        Suppose there are $k$ elements $e \in S$ with $\wt(e) < 1/4$ at the end of $(\log \Delta, \log f)$ iteration of \cref{alg:err1}. Applying \cref{cond:3}, the probability that this happens is at most $1/k^3$. Therefore, the expected number of such elements is at most $\sum\limits_{k=1}^\Delta k \cdot 1/k^3 \leq \pi^2 / 6$.
    \end{proof}
    Combining \cref{lem:light-weight,lem:heavy-weight,lem:cost-naive-covering} we complete the proof of \cref{thm:approx-promise}.
\end{proof}

\subsection{Element weight from non-light sets -- Proof of \cref{lem:xp-wt-e}}
\label{sec:proof-of-lem:xp-wt-e}
    If $e$ is not covered, then the claim follows trivially. If $e$ was covered before phase $i$, then $\wt_i(e) = 0$. Let $(i, j)$ denote the first iteration during which $\wt(e) \geq 1$. 
    
    \textbf{Proof Intuition.} Let $n_t$ denote the number of dense sets that contain $e$ at the beginning of iteration $(i, t)$. If $n_j = O(f / 2^j)$, then we are done. On the other hand if $n_j > (f / 2^j) \cdot 2^{q}$ for some integer $q$, then it is highly unlikely that $e$ remained uncovered during iterations $(i, j-q)$ through $(i, j-1)$. So, we can pay for the extra weight assigned by the small probability that the previous events happened. We now make this argument more formal.

    Let $F(t, n_t)$ denote the maximum value of the expected weight added to an element $e$ starting from iteration $t$, until at least $1$ unit of weight is added to it given that at least $n_t$ sets where dense at the beginning of iteration $(i, t)$. We need to show $F(1, f) \leq O(1)$. We shall show that $F(t, n_t) \leq O(n_t 2^t / f)$ by induction on decreasing $t$. The base case is that $F(\log f, n_{\log f}) \leq f$ which is immediate since at most $f$ sets contain $e$ and the weight added is at most $1$.
    Now suppose that $F(t+1, n_{t+1}) \leq \max\{D, C \cdot n_{t+1} \cdot 2^{t+1} / f \}$. for large enough constants $C, D$ (to be fixed to complete the induction). We have $F(t, n_t) \leq \max_{n_{t+1} \leq n_t} \Pr[e \text{ is covered at } (i, t)] \cdot n_t 2^t / f + \Pr[e \text{ is not covered at } (i, t)] \cdot F(t+1, n_{t+1})$.
    We consider two cases. 

    Case (i). $n_t \leq f / 2^{t-4}$. In this case we have $F(t, n_t) \leq F(t+1, n_{t}) \leq \max\{D, 32C\}$. 
    
    Case (ii) $n_t > f / 2^{t-4}$. 
    In this case, we reason about the probability that $e$ is not covered using Markov's Inequality.

    \begin{claim}
     $\Pr[e \text{ is not covered at } (i, t)] \leq 2/7$ given that $n_t > 16 f / 2^t$.
     \end{claim}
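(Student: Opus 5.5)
We want: given $n_t > 16f/2^t$ sets containing $e$ that are dense at the beginning of iteration $(i,t)$, the probability that $e$ is still uncovered after iteration $(i,t)$ is at most $2/7$. The plan is to count how many of these $n_t$ sets are actually chosen in iteration $(i,t)$ and show that, with probability at least $5/7$, enough of them are chosen to push $\wt(e)$ to at least $1$. The claim only needs the estimation guarantee \cref{cond:2}, applied with fresh randomness in iteration $(i,t)$, to the sets that are dense at that moment.

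\textbf{Step 1: count the failures.} Each of the $n_t$ dense sets $S\ni e$ satisfies $d_{i,t}(S)\ge \Delta/2^{i-1}$. By \cref{cond:2}, $S$ is estimated light, so that no weight $2^t/f$ is added to it, with probability at most $1/4$. This holds independently of the randomness of previous iterations. Let $X$ be the number of these $n_t$ sets that are \emph{not} chosen. Then $\xpect{X}\le n_t/4$, using only linearity of expectation, so no independence across sets is needed. By Markov's inequality,
\[
\Pr[X \ge 7n_t/8] \le \frac{n_t/4}{7n_t/8} = \frac{2}{7}.
\]
Hence with probability at least $5/7$, at least $n_t/8$ of the dense sets receive weight in iteration $(i,t)$.

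\textbf{Step 2: show those chosen sets cover $e$.} On the event of Step 1, either some chosen set is already capped at weight $1$, or each chosen set gains $2^t/f$. In the capped case, $\wt(e)\ge 1$ immediately. Otherwise, $e$ gains at least
\[
\frac{n_t}{8}\cdot\frac{2^t}{f} > \frac{16f}{2^t}\cdot\frac{1}{8}\cdot\frac{2^t}{f} = 2 \ge 1.
\]
So $\wt(e)\ge 1$ at the end of $(i,t)$, and $e$ is covered. Taking complements gives $\Pr[e \text{ not covered at } (i,t)]\le 2/7$.

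\textbf{Main obstacle: the conditioning.} The probability in the claim is conditional on the history up to the start of $(i,t)$: the identity of the $n_t$ dense sets and the event that $e$ is still uncovered. The fact that \cref{cond:2} holds independently of earlier randomness is what lets us apply it under this conditioning. I would state this explicitly and leave the capping at $1$ to the short case distinction in Step 2.
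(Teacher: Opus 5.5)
Your proof is correct and is essentially the paper's argument: Markov's inequality bounds by $2/7$ the probability that at least $7n_t/8$ of the dense sets containing $e$ are missed, and otherwise the more than $n_t/8 > 2f/2^t$ selected sets push $\wt(e)$ past $1$. You cite \cref{cond:2}, the under-estimation guarantee for dense sets, which is the condition actually needed. The paper's text instead cites \cref{cond:1}, which is the over-estimation bound for light sets.
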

     \begin{proof}
        There are $n_t$ dense sets containing $e$. By \cref{cond:1}, each of them is missed with probability at most $1/4$. The expected number of dense sets with no added weight is at most $n_t / 4$. By Markov's inequality, the probability that at least $7 n_t / 8$ are missed is at most $2 / 7$. Otherwise, at least $f / 2^t$ sets containing $e$ have weight $2^t / f$ added to them, and so $e$ would be covered at $(i, t)$.
     \end{proof}

     Now applying the above we can derive,
     \begin{align*}
     F(t, n_t) &\leq n_t 2^t / f + 2/7 \cdot F(t + 1, n_{t+1}) \\ 
     &\leq n_t \cdot 2^t / f + 4 n_t/7 \cdot C \cdot 2^{t} / f  \\
     &\leq C \cdot n_t \cdot 2^t / f
     \end{align*}
     where the last inequality holds provided $C \geq 7/3$ and that $n_t 2^t / f > 16$. Thus, to complete the induction, it suffices to set $D = 64$ and $C = 7/3$. We get $F(1, n_t) \leq \max\{64, 7 n_t / 3f\} \leq 64$.

\subsection{Second attempt -- retroactive updates}

\cref{thm:approx-promise} is promising because it shows that not every phase needs to be simulated correctly with high probability, i.e., with probability $1 - O(1 / \poly(\Delta))$. 
However, obtaining estimates of the effective degree of the sets per this theorem is non-trivial. In particular, the obvious approach of simply reducing the number of samples does not (directly) yield \cref{cond:1,cond:2,cond:3}.
Let $d(i, j, S)$ denote the number of uncovered elements in $S$ at the beginning of iteration $(i, j)$.

Let us begin with the natural approach to test whether $d(i, j, S) \geq \Delta / 2^i$: Sample $2^i \cdot K$ elements $e \in S$. We shall then test if at least $K / 2$ of them are uncovered. 
Our test for whether an element is covered is also error-prone in both directions, i.e., some element $e$ could be covered, but the test returns that it is not. Conversely, an element might not be covered, but the test might return that it is. 

The tests fail with probability that is exponential in the number of samples and hence doubly exponential in $i$.
Thus, we can achieve \cref{cond:2} and \cref{cond:3}. 
However, \cref{cond:1} is trickier since, for instance, there might be sets $S$ with $d(S) = 0$. 
In this case, \cref{cond:1} asks for a perfect estimate, which might not be feasible since the LCA might report an element to be uncovered incorrectly. 

In \cref{alg:Is-Set-Dense}, we expected that for the random subset of $S$ computed on \cref*{line:dense-est-sample}, it holds that at most $2^{-i'}$ fraction of them remain uncovered after $i'$ iterations. 
Moreover, this fact was true with a high enough probability that we can afford to naively cover the elements belonging to sets that did not satisfy this property. 
This time, we can no longer afford that luxury. 
Instead, we resort to the following. 

Before simulating phase $i$, we \textbf{re-execute} phases $i'$ for each $i' < i$ with fresh randomness $2^{O(i-i')}$ times. 
This ensures that the guarantees about phase $i'$ now hold with probability $1 - 2^{-\Omega(i - i')}$, which is small enough to argue about the approximation ratio. 
Even though this re-execution increases the LCA complexity, the probe complexity still satisfies the recurrence relation $T(i) = \sum\limits_{i' < i} 2^{O(i - i')} \cdot T(i')$ which yields our desired complexity.
These re-executions are used to \textbf{retroactively} updates set weights. We now elaborate on that.

\begin{algorithm}
    \begin{algorithmic}[1]
        \State Set $\wt(S), \derr(i, j, S) $ to be $0$ for every set $S$ and $(i, j) \in [\log \Delta] \times [\log f]$
        \Statex \Comment{Initialize the fractional cover and the estimates of effective degrees of sets}
        \Statex \Comment{Recall that $d(S) = \{e \in S \mid \wt(e) < 1\}$ and $d(i, j, S)$ is the value of $d(S)$ at iteration $(i, j)$}

                \For{every set $S$ (in parallel)}
        \For{$(i, j) = (1, 1) \cdots (\log \Delta, \log f)$} \Comment{Iterate in row major order}
            
            \For{$(i',j')=(1, 1) \cdots (i, j)$\label{line:retroactive-update-for-loop}}
                    \State $\derr_{new}(i', j', S) \gets \Call{degreeEstimate}{i', j', i, j, S}$ \Comment{$\LCA$ oracle given as \cref{alg:GetDeg}}
                    \State Set $\derr(i', j', S) \gets \max(\derr(i', j', S), \min\limits_{(ii, jj) \leq (i, j)} \derr_{new}(ii, jj, S))$ \label{line:alg:upd}
                    \If{$\derr(i',j',S)$ exceeded $\Delta / 2^{i'}$ for the first time after the previous line (\cref*{line:alg:upd}) \label{line:if-retroactive-needed}}  
                        \State Add $2^{j'} / f$ to $\wt(S)$ \label{line:retroactive-set-add-weight}
                        % \stodo{You would add $2^{j'} / f$ at most once, right? If yes, it would be good to somehow emphasize that.}
                    \EndIf
                    \If{$\wt(S) \geq 1$}
                        \State Set $\wt(S) \gets 1$ and break out of both loops
                    \EndIf
                \EndFor
            \EndFor  
        \EndFor 
    
        \State Increase $\wt(S)$ for every $S$ by a factor $4$. \footnote{This line has more to do with our analysis, and setting of parameters $\delta, K$ for the oracles $\GetDeg{},\GetWt{}$. Conceptually, this line is not needed.}
        \For{every $e$ with $\wt(e) < 1$ (in parallel)}
            \State set $\wt(S) \gets 1$ for an arbitrary $S \ni e$. \label{line:naive-covering}
        \EndFor
        \State \Return $\{\wt(S) \mid S \in \mathcal{S}\}$
    \end{algorithmic}
    \caption{A $\LOCAL$ algorithm with retroactive updates}
    \label{alg:err2}
\end{algorithm}
Consider \cref{alg:err2}, which is designed for $\LOCAL$. 
Analogous to the algorithm in the previous section, $\GetDeg{}$ is an oracle that estimates the degree of a set $S$. We provide an LCA implementation of that oracle as \cref{alg:GetDeg}.

Say that in iteration $(i, j)$ the algorithm \textbf{does not} add $2^j / f$ weight to $\wt(S)$ in \cref*{line:retroactive-set-add-weight}.
There are several reasons why this weight was not added: $S$ was light to begin with; or $S$ was heavy, but its size estimate obtained via $\GetDeg{}$ was inaccurate and classified $S$ as light.
The latter reason essentially means that $S$ has more uncovered elements than our estimate; for the sake of explanation, call such set $S$ too heavy.
On the one hand, in iteration $(i, j)$, we would like to query as few elements still uncovered in iteration $(i, j)$ as possible. Intuitively, querying an element $e$ uncovered in iteration $(i, j)$ is expensive, as the LCA would test whether $e$ was covered in each of the iterations preceding $(i, j)$.
On the other hand, when the weight of a too-heavy set $S$ is not increased, potentially many elements remain uncovered while they are supposed to get covered. This directly increases the LCA complexity.

To alleviate that, when we execute iteration $(i, j)$ for set $S$ in \cref{alg:err2}, we also recalculate our size/degree estimates for $S$ in iterations $(1, 1)$ through $(i, j-1)$; this is done within the loop on \cref*{line:retroactive-update-for-loop}.
Crucially, a recalculation for iteration $(i', j') = (1, 1)$ done during iteration $(i, j) = (10, 1)$ uses a lot more samples than the calculation used to perform whether $S$ was heavy during iteration $(i, j) = (1, 1)$.
This enables us to increase the confidence of our estimators.
Suppose it turns out that our estimator was too inaccurate in iteration $(i', j')$. In that case, we retroactively increase $\wt(S)$ by what it ``should have been'' increased in iteration $(i', j')$; this is implemented by \cref*{line:if-retroactive-needed,line:retroactive-set-add-weight}.

\begin{theorem} \label{thm:eqn-to-approx}
    \cref{alg:err2} returns an expected $O(\log \Delta)$ approximate fractional cover, if the following equations hold \textbf{independently} of the randomness in iterations prior to $(i, j)$. 

    Let $\derr_{new}(i', j', i, j, S)$ be the estimated value of $d(i', j', S)$ estimated at the beginning of iteration $(i, j) (\geq (i', j'))$. Similarly, let $\wt(i', j', i, j,e)$ be the weight of $e$ at the end of iteration $(i', j')$, including retroactive updates made up to iteration $(i, j)$. Let $b$ denote the number of iterations in phase $i$ that are executed after $(i', j')$ and before $(i, j)$, i.e., $b = j - j' + 1$ if $i = i'$ and $j$ otherwise.
    \begin{align}
        \Pr\left[\derr_{new}(i', j', i, j, S) < \Delta / 2^{i'} \mid d(i', j', S) \geq \Delta / 2^{i'-1} \right] &\leq \frac{1}{8} \cdot \frac{1}{2^{i-i'}}\label{eqn:deg:under-estimate} \\ 
        \Pr\left[ \derr_{new}( i', j', i, j, S) \geq \Delta / 2^{i-1} \mid d(i', j', S) < \Delta / 2^{i+1} 
        \right] &\leq \frac{1}{8^{b}} \cdot \frac{1}{2^{i-i'}} \label{eqn:deg:over-estimate} \\
        \forall t \ \geq 1, \ \ \Pr\left[ |\{ e' \in S \mid \wt(i',\log f, i, 1, e') < 1 \}| \geq t \cdot \Delta / 2^{i'} \right] &\leq \frac{1}{t^3} \cdot \frac{1}{2^{i-i'+1}}\label{eqn:set:sparsification} 
    \end{align}
\end{theorem}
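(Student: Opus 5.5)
We reduce to the structure of the proof of \cref{thm:approx-promise}: charge the weight added in phase $i'$ (including all weight added retroactively to iterations $(i',j')$ at later times $(i,j)$) to $O(\opt)$, and bound the naive covering on \cref*{line:naive-covering} separately. The new difficulty is that $\wt(S)$ for an iteration $(i',j')$ can be added at any later time $(i,j)$. We handle this by a union bound over the time $(i,j)$ at which $\derr(i',j',S)$ first exceeds $\Delta/2^{i'}$. The factors $2^{-(i-i')}$ in \cref{eqn:deg:under-estimate,eqn:deg:over-estimate,eqn:set:sparsification} make the sum over $i\ge i'$ geometric. Call $S$ dense, moderate or light at $(i',j')$ exactly as in the proof of \cref{thm:approx-promise}, using $d(i',j',S)$.

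\textbf{Light sets.} A weight of $2^{j'}/f$ is charged to iteration $(i',j')$ only if some estimate $\derr_{new}(i',j',i,j,S)$ (in fact the min over earlier $(ii,jj)$, which is only smaller) exceeds the threshold. By \cref{eqn:deg:over-estimate} and a union bound over $(i,j)\ge(i',j')$, this happens with probability at most $\sum_{i\ge i'}\sum_j 8^{-b}2^{-(i-i')}=O(8^{-1})$ within phase $i'$. More importantly, summed over $j'$, the charge is $\sum_{j'} 2^{j'}/f\cdot 8^{-b}$, which geometrically favours late $j'$. To recover the $d(S)/\Delta$ factor of \cref{cond:1} I would further distinguish light sets with $d<\Delta/2^{i'+1}$ from moderate ones and use the edge count $\sum_S d(S)\le f\sum_k|E^{(k)}|$ as in \cref{lem:light-weight}. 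This step may need the threshold in \cref{eqn:deg:over-estimate} reinterpreted, and I would expect it to yield an $O(\opt)$ bound per phase after summing the $2^{-(i-i')}$ factors.

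\textbf{Dense and moderate sets, and the size distribution.} Here I would reprove \cref{lem:bound-e}. An element of $E^{(k)}$ at phase $i$ is adjacent to a set that stayed dense through phases $k,\dots,i-1$. \cref{eqn:deg:under-estimate} with the extra $2^{-(i-i')}$ shows that, once retroactive re-estimations at later times are accounted for, the probability that a dense set is never recognized (neither in real time nor retroactively before phase $i$) is at most $\prod$ of independent factors $\le 1/8$. Independence is exactly the hypothesis that the equations hold independently of earlier randomness. This gives the $2^{-(i-k)}$ decay. Then \cref{lem:xp-wt-e} applies verbatim to the weight an element receives from non-light sets in phase $i'$, with retroactive additions simply treated as additional weight from sets that were truly dense. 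Its Markov argument uses only a constant miss probability, so $1/8$ suffices. Combined with \cref{lem:heavy-weight}, this bounds the phase cost by $O(\opt)$.

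\textbf{Naive covering and the main obstacle.} For \cref*{line:naive-covering}, fix an optimal cover and apply \cref{eqn:set:sparsification} with $i'=\log\Delta$ (the final state) as in \cref{lem:cost-naive-covering}. This gives $O(1)$ uncovered elements per optimal set in expectation, with the factor $4$ rescaling turning weight $\ge 1/4$ into coverage. The main obstacle is the bookkeeping of retroactive weight. One must show that an update made at $(i,j)$ for $(i',j')$ does not break the monotone ``dense throughout phase'' structure used by \cref{lem:xp-wt-e}, and that it does not double count. The first point I would handle through the $\min$ over $(ii,jj)\le(i,j)$ on \cref*{line:alg:upd}, which enforces monotonicity in $j'$. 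The second point I would handle through the ``first time exceeded'' rule on \cref*{line:if-retroactive-needed}, which charges each $(i',j',S)$ at most once.
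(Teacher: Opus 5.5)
\textbf{The gap is in your light-set step.} Hypothesis \cref{eqn:deg:over-estimate} bounds the probability of a false ``dense'' verdict by $8^{-b}2^{-(i-i')}$, and this bound does not depend on $d(S)$. Your own union bound over re-estimation times confirms this. A set that is light at $(i',j')$ is charged $2^{j'}/f$ with probability bounded only by a constant, so the charge is dominated by $j'=\log f$ and is $O(1)$ per light set. Exploiting the $\min$ in the update of \cref{alg:err2} still leaves a per-set bound of order $1/f$, again independent of $d(S)$. The number of sets with tiny or zero effective degree is not controlled by $\opt$; it can be of order $\Delta f\cdot\opt$. So none of this sums to $O(\opt)$.

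The edge-counting argument of \cref{lem:light-weight} needs a probability proportional to $d(S)/\Delta_{i'}$. Separating light sets from moderate ones does not create such a factor. Reinterpreting the thresholds of \cref{eqn:deg:over-estimate} as phase-$i'$ thresholds, as you suggest, makes things worse. The factor $2^{-(i-i')}$ would then measure only how late the re-estimate is made, and it equals $1$ for real-time estimates.

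\textbf{The missing idea is how the paper extracts the degree factor from the $2^{-(i-i')}$ decay.} For a set wrongly declared dense at $(i,j)$, the paper anchors at the last iteration $(i',j')$ with $d(S,i',j')\ge\Delta/2^{i}$, and splits into two cases.
\begin{enumerate}[(1)]
\item If $d(S,i',j')\ge\Delta/2^{i'}$, then $S$ was genuinely dense at $(i',j')$, and its $2^{j'}/f$ was already paid in the dense accounting. The erroneous weight is charged to that payment, and $\sum 2^{j-j'}8^{-(j-j')}2^{-(i-i')}$ is a constant.
\item Otherwise $\Delta/2^{i}\le d(S,i',j')<\Delta/2^{i'}$, so $2^{-(i-i')}\le 2\,d(S,i',j')/\Delta_{i'}$. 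This is exactly the $d(S)/\Delta$ factor of \cref{cond:1}. The edge count of \cref{lem:light-weight} can then be run on the uncovered instance at $(i',j')$, with \cref{lem:bound-e-2} controlling its size.
\end{enumerate}
For this to work, the decay factor must measure how far the set's degree has fallen since it last sat at the threshold $\Delta/2^{i}$, not the lateness of a re-estimate. Without this anchoring step, or a hypothesis carrying an explicit $d(S)$ factor, your light-set bound does not close.

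The rest of your plan matches the paper's outline. That includes the analogue of \cref{lem:bound-e} via independent constant miss probabilities, \cref{lem:xp-wt-e} with each retroactive increment paid at most once, and the naive-covering bound from \cref{eqn:set:sparsification}.
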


\def\uncov{\overline{E}}
\def\cost{\mathsf{cost}}
\begin{proof}
    Let $\cost_{i, i'}$ denote the expected weight assigned to sets as retroactive updates for phase $i'$ during phase $i$. We shall show that $\cost_{i, i'} = O(\opt / 2^{i-i'})$, from which it follows that the overall expected cost is $O(\log \Delta) \cdot \opt$. To show this, we pay for the added cost for phase $i'$ in phase $i$ with the probability that an error was made during the previous $i-i'$ phases.

    The proof strategy is similar to the previous algorithm. We repeat the process step by step, accounting for the retroactive updates in each. Consider some iteration $(i', j')$. Let $\uncov$ denote the set of uncovered elements in the algorithm at the beginning of $(i', j')$. Note that at this point, retroactive updates for phases $i'+1, i'+2, \dots i-1$ have not been added. 

    Let $\uncov^{(k)} = \{ e \in \uncov \mid \max\limits_{S \ni e} d(S) \in (\Delta / 2^{k}, \Delta / 2^{k-1}] \}$. We also write $\uncov^{(\geq t)} = \bigcup\limits_{k \geq t} \uncov^{(k)}$. 

    The proof of \cref{lem:bound-e-2} is analogous to \cref{lem:bound-e} in the previous section, except for the constants.

    \begin{lemma}
    \label{lem:bound-e-2}
        $\xpect{|\uncov^{(k)}|} \leq \opt \cdot \Delta_{i
        '} \cdot (15/16)^{i'-k+1}$ when $k < i'$ and $\xpect{|\uncov^{(\geq i')}} \leq \Delta_i \cdot \opt$.
    \end{lemma}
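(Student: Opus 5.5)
We follow the proof of \cref{lem:bound-e}, with the independent per-phase guarantees of \cref{thm:approx-promise} replaced by the retroactive guarantees \cref{eqn:deg:under-estimate,eqn:set:sparsification} of \cref{thm:eqn-to-approx}. The bound on $\uncov^{(\geq i')}$ is deterministic. Every element of $\uncov^{(\geq i')}$ lies only in sets whose effective degree is at most $\Delta/2^{i'-1} = \Delta_{i'}$. Any optimal cover uses $\opt$ sets, and each of them contains at most $\Delta_{i'}$ such elements. Hence $|\uncov^{(\geq i')}| \leq \opt \cdot \Delta_{i'}$.

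For $k < i'$, we first prove the analogue of \cref{obs:bound-e-exact-k}. Every $e \in \uncov^{(k)}$ belongs to $\uncov_{r}^{(\geq r)}$ for some $r \leq k$, where $\uncov_r$ denotes the uncovered elements at the beginning of phase $r$. We then show that such an element survives from phase $r$ to iteration $(i', j')$ with probability at most $(15/16)^{i'-r}$. Fix a set $S \ni e$ with $d(S) \geq \Delta/2^k$ throughout. For each intermediate phase $p \in [r, i'-1]$ the set $S$ is dense (or at least moderate) in phase $p$. The element $e$ remains uncovered after phase $p$ only if one of two things happens:
\begin{itemize}
\item the degree of $S$ was underestimated in the estimates used during phase $p$, which by \cref{eqn:deg:under-estimate} has probability at most $1/8$ (taking $i=p$, $i'=p$);
\item the sparsification property fails for $S$ at phase $p$, which by \cref{eqn:set:sparsification} with $t=1$ contributes at most a further constant, say $1/16$ after accounting for retroactive updates not yet applied.
\end{itemize}
Combining these, the per-phase survival probability is at most $1/8 + 1/16 \leq 15/16$. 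The choice $15/16$ leaves slack for these constants.

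The key point is that these guarantees hold \emph{independently} of the randomness in earlier iterations, as assumed in \cref{thm:eqn-to-approx}. This lets us multiply the per-phase failure probabilities to get $(15/16)^{i'-r}$. Summing by linearity of expectation over $r \leq k$, with $|\uncov_r^{(\geq r)}| \leq \opt \cdot \Delta_r = \opt\cdot\Delta_{i'} 2^{i'-r}$, gives
\[
\E{|\uncov^{(k)}|} \leq \opt\cdot \Delta_{i'} \sum_{r\le k} 2^{i'-r}(15/16)^{i'-r}.
\]

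\textbf{Main obstacle.} As written, this sum grows, because $2\cdot 15/16 > 1$. So the naive per-phase survival bound is too weak, and the real work is to get a per-phase factor strictly below $1/2$. To do this I would use that $e$ lies in a set of degree at least $\Delta/2^k$ \emph{with $e$ counted among the uncovered}. The argument then has two parts:
\begin{itemize}
\item For the survival of $S$'s heaviness, use the $2^{-(i-i')}$ decay in \cref{eqn:deg:under-estimate}. Re-estimating phase $p$ at later phases gives failure $\frac18 2^{-(i'-p)}$, and these failures telescope.
\item Account for the number of elements via \cref{eqn:set:sparsification}, i.e.\ bound $\E{|\uncov^{(k)}|}$ by counting sets $S$ of degree at least $\Delta/2^k$ at $(i',j')$. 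Each such set has probability at most $\frac{1}{(2^{i'-k})^3}\cdot 2^{-(i'-k)}$ of retaining that many elements, which beats the $2^{i'-k}$ growth.
\end{itemize}
Making one of these routes yield exactly $(15/16)^{i'-k+1}$ is the step I expect to need the most care.
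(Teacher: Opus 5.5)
\textbf{Gap 1: the proof of the case $k<i'$ is never finished, and the stated obstacle is one you introduced yourself.} Your treatment of $\overline{E}^{(\geq i')}$ is fine, and $\Delta_{i'}$ is the right quantity there; the $\Delta_i$ in the statement is evidently a typo.

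From \cref{eqn:deg:under-estimate} with $i=i'=p$ you get a per-phase survival bound of $1/8$ (your $1/8+1/16=3/16$ would also do). You then loosened this to $15/16$. But the $15/16$ in the statement is the decay rate of the \emph{final} bound. It already absorbs the factor $2$ per phase coming from $|\overline{E}_r^{(\geq r)}|\le \opt\cdot\Delta_{i'}2^{i'-r}$, so the per-phase survival probability only needs to be below $1/2$. This is exactly the mechanism of \cref{lem:bound-e} (factor $2\cdot\frac14$), and the paper proves the present lemma only by that analogy. Within your own setup,
\[
\opt\cdot\Delta_{i'}\sum_{r\le k}2^{i'-r}\,8^{-(i'-r)}=\opt\cdot\Delta_{i'}\sum_{m\ge i'-k}4^{-m}\le \tfrac43\,4^{-(i'-k)}\,\opt\cdot\Delta_{i'}\le (15/16)^{i'-k+1}\,\opt\cdot\Delta_{i'} .
\]
So neither of your two alternative routes is needed. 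The second would not work as stated anyway: the number of sets of degree at least $\Delta/2^k$ is not controlled by $\opt$, so summing sparsification tails over them does not charge to $\opt\cdot\Delta_{i'}$. Also drop the extra ``$1/16$'' term. An element lying in a dense set survives the phase only if that set's degree is underestimated; a sparsification failure is not a separate way to survive. Moreover, $1/16$ does not follow from \cref{eqn:set:sparsification}.

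\textbf{Gap 2: the choice of $r$.} The claim ``$e\in\overline{E}_r^{(\geq r)}$ for some $r\le k$'' is vacuous, since $r=1$ always works. What the argument actually needs is that in \emph{every} phase $p\in[r,i'-1]$, $e$ lies in some set that is dense in the sense of \cref{eqn:deg:under-estimate}, i.e.\ $d(p,\log f,S)\ge\Delta/2^{p-1}$. A single fixed $S$ with $d(S)\ge\Delta/2^k$ at $(i',j')$ certifies this only for $p\ge k+1$. For $p\le k$ that set may be merely moderate or even light, and \cref{eqn:deg:under-estimate} says nothing about such sets.

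Instead, let $r$ be the first phase of the maximal run of phases ending at $i'-1$ in which some set containing $e$ is dense. These may be different sets in different phases, as in \cref{obs:bound-e-exact-k}. Then:
\begin{itemize}
\item $r\le k+1$.
\item If $r\ge 2$, phase $r-1$ was not dense for $e$. So at the start of phase $r$, every set containing $e$ has effective degree below $\Delta/2^{r-2}$, and the $\opt$-counting argument bounds the number of such elements by $\opt\cdot\Delta/2^{r-2}$.
\item Independence across phases gives survival probability at most $8^{-(i'-r)}$.
\end{itemize}
Summing over $r\le k+1$ gives $\mathbb{E}|\overline{E}^{(k)}|\le\frac{32}{3}\,4^{-(i'-k)}\,\opt\cdot\Delta_{i'}$. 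This is below $(15/16)^{i'-k+1}\opt\cdot\Delta_{i'}$ for $i'-k\ge 2$, and within a constant factor of it for $k=i'-1$. Only geometric decay in $i'-k$ is used when the lemma is summed over $k$ in \cref{thm:eqn-to-approx}. The paper itself asserts the lemma only as analogous to \cref{lem:bound-e} ``except for the constants.''
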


    \begin{lemma}
        $\sum\limits_{S} \sum\limits_{(i',j')} \xpect{\wt_{i',j'}(S) \cdot \mathbbm{1}[S \ \text{ is light }]} \leq O(\log \Delta \cdot \opt)$
    \end{lemma}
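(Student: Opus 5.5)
We mirror the proof of \cref{lem:light-weight}, but now also charge the retroactive updates. Fix a set $S$ and an iteration $(i',j')$. In \cref{alg:err2}, weight $2^{j'}/f$ is added to $S$ on account of iteration $(i',j')$ at most once: the first time $\derr(i',j',S)$ crosses $\Delta/2^{i'}$. This crossing may happen at any later iteration $(i,j)\ge(i',j')$. Call $S$ light for $(i',j')$ if $d(i',j',S)<\Delta/2^{i'+1}$. By a union bound over the time at which the crossing occurs, the probability that a light $S$ receives weight for $(i',j')$ is at most
\[
\sum_{(i,j)\ge (i',j')} \Pr\left[\derr_{new}(i',j',i,j,S)\ \text{crosses the threshold}\right].
\]
We bound each summand by \cref{eqn:deg:over-estimate}. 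Here we must be careful about thresholds: \cref{line:alg:upd} takes a minimum over $(ii,jj)\le(i,j)$, so crossing $\Delta/2^{i'}$ forces the relevant fresh estimate to be large. We would first reconcile this threshold with the one in \cref{eqn:deg:over-estimate}, by monotonicity in the thresholds, possibly after adjusting constants. This gives the bound $\sum_{i\ge i'}\sum_j 8^{-b}2^{-(i-i')}$.

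\emph{Summing the error probabilities.} Within phase $i=i'$ we have $b=j-j'+1$, so the sum over $j$ is a geometric series of order $8^{-1}$. For $i>i'$ we have $b=j\ge1$, so the inner sum over $j$ is $O(1/8)$, and the outer sum $\sum_{i>i'}2^{-(i-i')}$ is $O(1)$. Thus a light set is charged for $(i',j')$ with probability $O(1)$ times the base failure probability. We want this in the form $O\big(d(i',j',S)/\Delta_{i'}\cdot 4^{-j'}\big)$, as in \cref{cond:1}. This is the main obstacle, because \cref{eqn:deg:over-estimate} as stated carries no explicit factor $d/\Delta$. We would try first to strengthen the reading to include the factor $d(i',j',S)\,2^{i'}/\Delta$; this is natural for a sampling estimator, since with $d$ uncovered elements the chance of a false positive scales with the sampled uncovered count. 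We would then check that the estimator of \cref{alg:GetDeg} satisfies it. Failing that, we would split light sets into those with $d(i',j',S)\ge1$, for which the factor $d/\Delta_{i'}\ge 1/\Delta_{i'}$ costs little, and those with $d=0$. For $d=0$ the estimate is exactly $0$ if uncovered elements are only ever reported from the true uncovered set; we would aim to argue this via the retroactive structure.

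\emph{Converting to $\opt$.} With the per-set bound in hand, the expected weight on light sets at $(i',j')$ is at most $\frac{1}{f\,2^{j'}\Delta_{i'}}\sum_S \E{d(i',j',S)}$. As before, $\sum_S d(i',j',S)\le f\sum_k|\uncov^{(k)}|$, and \cref{lem:bound-e-2} gives $\E{\sum_k|\uncov^{(k)}|}\le \opt\cdot\Delta_{i'}\big(1+\sum_{k<i'}(15/16)^{i'-k+1}\big)=O(\opt\cdot\Delta_{i'})$. The expected light cost at $(i',j')$ is therefore $O(\opt/2^{j'})$. Summing over $j'$ gives $O(\opt)$ per phase, and summing over the $\log\Delta$ phases gives the claimed $O(\log\Delta\cdot\opt)$.

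One further issue is that retroactive updates made during phase $i$ for phase $i'$ use randomness that is fresh only for phase $i'$. The required independence from prior iterations is exactly the hypothesis of \cref{thm:eqn-to-approx}, so we condition on the history up to $(i',j')$ and apply linearity of expectation.
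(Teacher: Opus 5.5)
Your argument has a genuine gap, and it is exactly the step you call the main obstacle. Your union bound over re-estimation times gives, for each light $S$ and each $(i',j')$, a false-positive probability of order $\sum_{j\ge j'}8^{-(j-j'+1)}+\sum_{i>i'}2^{-(i-i')}\sum_{j}8^{-j}=\Theta(1)$; the term $(i,j)=(i',j')$ alone contributes $1/8$. This bound has no factor $d(i',j',S)/\Delta_{i'}$. It also has no decay in $j'$, because $8^{-b}$ decays in the distance between the estimated iteration and the re-estimation time, not in $j'$.

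Your ``Converting to $\opt$'' step needs both factors. Even with a $d/\Delta_{i'}$ factor, a constant probability times weight $2^{j'}/f$ gives $O(\opt\cdot 2^{j'})$ per iteration, not $O(\opt/2^{j'})$.

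Neither proposed patch works:
\begin{itemize}
\item Strengthening \cref{eqn:deg:over-estimate} to carry a factor $d\,2^{i'}/\Delta$ changes the hypothesis of \cref{thm:eqn-to-approx}. \cref{eqn:deg:over-estimate2}, which is what \cref{alg:GetDeg} actually satisfies, has no such factor. The paper also notes, when motivating retroactive updates, that a bound of the form \cref{cond:1} may be infeasible: weight estimates err in both directions, so even a set with $d(S)=0$ can be estimated dense. This equally rules out your plan to show the estimate is exactly $0$ when $d=0$.
\item Restricting to $d\ge 1$ does not ``cost little''. Replacing the required $(d/\Delta_{i'})4^{-j'}$ by a constant loses a factor of up to about $\Delta_{i'}4^{j'}$.
\end{itemize}
Separately, as literally stated, \cref{eqn:deg:over-estimate} at a later phase $i>i'$ conditions on $d(i',j',S)<\Delta/2^{i+1}$. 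That is narrower than being light for $(i',j')$, and monotonicity in the threshold cannot repair a mismatch in the conditioning event.

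The missing idea is a charging argument that converts the phase-gap factor $2^{-(i-i')}$ into the $d/\Delta$ factor. The paper charges a false ``dense'' verdict for a light $S$ at iteration $(i,j)$ to the last iteration $(i',j')$ at which $d(S,i',j')\ge\Delta/2^{i}$. There are two cases.
\begin{itemize}
\item If $S$ was genuinely dense there, i.e.\ $d(S,i',j')\ge\Delta/2^{i'}$, the weight $2^j/f$ is charged to the weight $2^{j'}/f$ already counted among dense sets. The ratio is about $2^{j-j'}8^{-(j-j')}2^{-(i-i')}$, which sums to a constant.
\item Otherwise $\Delta/2^{i}\le d(S,i',j')<\Delta/2^{i'}$. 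This forces $i'<i$ and gives $2^{-(i-i')}\le 2\,d(S,i',j')/\Delta_{i'}$. The failure probability then has the shape of \cref{cond:1}, and the argument of \cref{lem:light-weight} applies, with \cref{lem:bound-e-2} in place of \cref{lem:bound-e}.
\end{itemize}
Your decomposition is organized by the iteration whose weight is being added and never looks back at the earlier degrees of $S$. So it has no mechanism to produce the $d/\Delta$ factor the lemma needs.
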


    \begin{proof}
        Suppose a set $S$ is determined to be $\textit{dense}$ during iteration $(i, j)$ but was actually light. We consider two cases. Let $(i', j')$ be the \textit{last} iteration during which $d(S, i', j') \geq \Delta / 2^i$. 
        \begin{itemize}
            \item[Case 1.] $d(S, i', j') \geq \Delta / 2^{i'}$. We have already paid the $2^{j'} / f$ weight for this term when accounting for the dense sets. The total weight of these sets can thus be bound by,
             $$\sum\limits_{S} \sum\limits_{(i',j')} \sum\limits_{(i,j)\geq(i',j')}\mathbbm{1}[d_{i',j'}(S) \geq \Delta / 2^{i'}] \cdot \frac{1}{8^{j-j'}} \cdot 2^{j - j'} \cdot \frac{1}{2^{i-i'}}$$ 
             The summands with terms $(i,j)$ and $(i',j')$ all add to only a constant. Hence, the overall cost is $O(\log \Delta \cdot \opt)$.
            \item[Case 2.] $d(S, i', j') < \Delta / 2^{i'}$. In this case the failure probability is $\frac{1}{8^{j-j'+1}} \cdot \frac{1}{2^{i-i'}} \leq \frac{d(S,i',j')}{\Delta_{i'}} \cdot \frac{1}{8^{j-j'+1}}$. Observe that this term is the same as the bound for the light sets in the previous section. Using that argument, we can bound the weight added to these sets.
        \end{itemize}
    \end{proof}

    Next, we move to the dense sets. We follow the same proof for the weight added when $i = i'$. 
    When $i' < i$, we pay for the weight added retroactively at phase $i$ by the probability that it was missed during phase $i-1$. Note that the argument in this case was: $\Delta_i \cdot \sum \wt_{i, j}(S) \leq \sum \wt(e)$. We then showed that $\xpect{\wt(e)} \leq O(1)$ and then the result follows by the fact that, in expectation, the total number of elements must be $\Delta_i \cdot \opt$. 

    To modify this argument for the new algorithm, we need to account for the retroactive updates. For an element $e$, suppose it was first covered (i.e., $\wt(e)$ exceeded $1$) during iteration $(i, j)$. We define $X_e$ to be the total weight added to sets containing $e$ from iteration $(i', 1)$ through $(i', j)$ \textbf{including} the retro-active updates to be added from phase $i'+1$ through $\log \Delta$.

    Suppose we prove that $\xpect{X_e} \leq O(1)$, then we are done as we can apply the same argument.

    \begin{lemma}
        It holds that $\xpect{X_e} \leq O(1)$.
    \end{lemma}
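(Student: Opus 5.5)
We bound $\xpect{X_e}$ by splitting $X_e = X_e^{\mathrm{on}} + X_e^{\mathrm{retro}}$. Here $X_e^{\mathrm{on}}$ is the weight that sets containing $e$ receive during iterations $(i',1)$ through $(i',j)$ at the time these iterations are first executed. The term $X_e^{\mathrm{retro}}$ is the weight that is attributed to iterations $(i',j'')$ with $j''\le j$, but is only added retroactively in some later phase $i>i'$. For $X_e^{\mathrm{on}}$ we reuse the argument of \cref{lem:xp-wt-e} (\cref{sec:proof-of-lem:xp-wt-e}) almost verbatim. Let $n_t$ be the number of sets containing $e$ that are dense at $(i',t)$, and define $F(t,n_t)$ as there. \cref{eqn:deg:under-estimate} with $i=i'$ says each dense set is missed with probability at most $1/8$. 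This is only better than the $1/4$ used before, and it holds independently of the earlier randomness. So the Markov step still shows that, once $n_t>16f/2^t$, $e$ stays uncovered at $(i',t)$ with probability at most a constant below $1/2$. The induction $F(t,n_t)\le \max\{D, C n_t 2^t/f\}$ therefore goes through with the same constants. Sets that are light yet marked dense are already paid for by the preceding lemma on light sets, so only dense and moderate sets count here.

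For $X_e^{\mathrm{retro}}$, fix a later phase $i>i'$. Retroactive weight for iteration $(i',j'')$ can reach a set $S\ni e$ only if both of the following hold. First, $\derr(i',j'',S)$ first crosses $\Delta/2^{i'}$ during phase $i$, so every estimate made in phases $i',\dots,i-1$ stayed below the threshold. Second, $S$ is not light, because light-set overestimates are already charged to the light-set lemma through \cref{eqn:deg:over-estimate}. For a dense $S$, the first condition means the fresh estimate made in phase $i-1$ underestimated $d(i',j'',S)$. By \cref{eqn:deg:under-estimate}, which holds independently of prior randomness, this has probability at most $\frac18 2^{-(i-1-i')}$. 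For a moderate set we would instead use the max--min structure of \cref*{line:alg:upd}. That rule lets a retroactive addition occur only when a later estimate certifies density, and such a certificate is an overestimate event for light-looking sets, again controlled by \cref{eqn:deg:over-estimate}.

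The retroactive weight charged to $e$ in phase $i$ is then at most $\sum_{j''\le j}\sum_{S\ni e}\mathbbm 1[\text{missed for }(i',j'')]\,2^{j''}/f$. Its expectation, conditioned on the history up to phase $i'$, is at most $2^{-(i-i')}$ times the quantity $\sum_{j''\le j} n_{j''}2^{j''}/f$. The potential argument above already bounds this quantity by $O(1)$ in expectation: it is exactly the weight an error-free execution of phase $i'$ would have placed on $e$, and $F$ controls it. Summing the geometric factor over $i>i'$ gives $O(1)$.

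The main obstacle is the independence and conditioning in this last step. The event that an estimate made in phase $i-1$ fails must be decoupled from the random $n_{j''}$ and from the stopping iteration $j$, which depend on phase $i'$ randomness. I would handle this by conditioning on all randomness through phase $i-2$ and using that the re-executions in phase $i-1$ draw fresh samples, as the hypothesis of \cref{thm:eqn-to-approx} states. The weight therefore factors as $\xpect{\text{indicator}}\cdot(\text{deterministic weight})$, and we can then take expectations over the phase-$i'$ randomness.
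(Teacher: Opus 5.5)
Your bound on $X_e^{\mathrm{on}}$ is fine, and you identify the right quantity: the $F$-recurrence of \cref{lem:xp-wt-e} bounds $W=\sum_{j''\le j}n_{j''}2^{j''}/f$. Note that this is the full charge in the actual run, with the actual $n_{j''}$ and stopping iteration $j$, not the weight of an error-free run. The separate decay argument for $X_e^{\mathrm{retro}}$, however, does not go through as written.

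\begin{itemize}
\item \emph{The case $i=i'+1$.} Here the ``estimate made in phase $i-1$'' is a phase-$i'$ estimate: either the on-time one or a recomputation later in phase $i'$. The same randomness determines $j$ and the later $n_{j''}$. Misses delay the coverage of $e$, so they are positively correlated with $W$. If you condition through phase $i-2=i'-1$, then $W$ is still random and $\xpect{\text{indicator}\cdot W}\le\frac18\xpect{W}$ is unjustified. If you condition through phase $i'$, the indicator becomes deterministic.
\item \emph{Moderate sets.} \cref{eqn:deg:over-estimate} only constrains sets below the light threshold, so it says nothing about moderate sets. The hypotheses of \cref{thm:eqn-to-approx} place no constraint on a moderate set's estimates. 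Nothing prevents its first crossing from occurring in a late phase with constant probability, so no $2^{-(i-i')}$ decay is available.
\item \emph{The max--min rule.} Because of the rule on \cref*{line:alg:upd}, a crossing is governed by a minimum of many fresh estimates. A single application of \cref{eqn:deg:under-estimate} does not bound it without a union bound, and you do not track that cost.
\end{itemize}

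The missing observation is the paper's entire proof, and it makes all of this unnecessary. By \cref*{line:if-retroactive-needed}, the increment $2^{j''}/f$ for a triple $(i',j'',S)$ is added only when $\derr(i',j'',S)$ exceeds $\Delta/2^{i'}$ for the first time. So it is added at most once over the whole run, whether on time or retroactively, and a retroactive increment is exactly a missed on-time increment. Hence the on-time plus retroactive weight attributed to $(i',j'')$ on the dense sets containing $e$ is at most $n_{j''}2^{j''}/f$. That is precisely what the $F$-analysis already pays for iteration $(i',j'')$, whether or not those sets were detected.

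The dense-set part of $X_e$ is therefore bounded pointwise by $W$, so $\xpect{X_e}=O(1)$ follows with no conditioning, no independence, and no geometric sum over later phases. In your decomposition this is simply the statement $X_e^{\mathrm{on}}+X_e^{\mathrm{retro}}\le W$. In particular, the problematic $i=i'+1$ term is trivially at most $W$.
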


    \begin{proof}
        Suppose during iteration $(i', j')$ there were $k$ dense sets containing $e$. The total weight added to $e$ during this iteration is at most $k 2^{j'} / f$. The actual number might be smaller due to the mistakes in estimating the density of some sets. 
        The missed weight might be added as retroactive updates in later phases by \cref{alg:err2}. 
        
        In the previous arguments, we anyway pay $k 2^{j'} / f$, and since the retroactive updates are carried only once, the same arguments follow.
    \end{proof}
\end{proof}

\section{Designing the main \texorpdfstring{$\LCA$}{LCA}}
\label{sec:main-alg}

We now describe and analyze the query complexity of the $\LCA$s. Analogous to the previous algorithm, we design two oracles, $\GetWt{}$ and $\GetDeg{}$ in \cref{alg:GetWt,alg:GetDeg} that estimate the weights of elements and effective degree of sets at some given iteration $(i, j)$ after the retro-active updates until phase $i^{\star}$. These oracles recursively call each other. To obtain an $\LCA$ for the weight of a set, one should simulate \cref{alg:err2} using the oracle (\cref{alg:GetDeg}) designed in this section.
They are designed such that they satisfy \cref{eqn:deg:over-estimate}, \cref{eqn:deg:under-estimate} and \cref{eqn:set:sparsification}. We now explain the algorithms in more detail.

\subsection{Sampling strategy and tracking randomness}

In this subsection, we briefly overview our sampling procedure and how we shall aim to achieve \cref{eqn:deg:under-estimate} and \cref{eqn:deg:over-estimate}. During an iteration $(\stari, \starj)$, to determine if $d(S, \stari, \starj) \geq \Delta / 2^{\stari}$, we make $2^{\stari} \cdot K$ random samples to $S$. Based on the pruning strategy of \cref{sec:warmup}, we should first remove the elements $e \in S$ whose estimated weight after phase $1$ is large. If we simply use the $\LCA$ oracles designed for estimating phase $1$, our estimate would not be accurate enough. Firstly, the success probability is not large enough for our guarantee. Secondly, our estimate would then not be independent of the randomness in previous phases.

To fix this, observe that for phase $i$, we are allowed up to $2^{O(i)}$ queries, thus we may re-execute the oracles for previous phases multiple times, i.e., phase $i'$ can be re-executed $2^{O(i- i')}$ times. We take advantage of this as follows.
Before beginning iteration $(\stari, \starj)$ we go over each previous iteration $(i, j)\leq (\stari, \starj)$ and strengthen the weight estimates of elements after this iteration. Then we use these strengthened estimates for the weights to compute strengthened estimates of the degrees.

\paragraph{Weight estimates} For an element $e$, we pay for any errors in estimating its weight, purely by the random sampling of sets $S \ni e$ that contain it. Our guarantees for the estimation of the weights of elements during iteration $(\stari, \starj)$ will depend \textit{purely} on the the local random bits allocated for $e$ during $(\stari, \starj)$.

\paragraph{Degree estimates} During iteration $(\stari, \starj)$, to estimate $d(S)$ for a set $S$, we first sample a subset of $2^{\stari} \cdot K$ elements. If these elements capture too many or not enough elements $e$ that were covered during previous iterations, then we say that the estimate failed. If the sampling is sufficient, then the error in estimating the weights of these elements is then based only on an independent sample of sets that contain it, which from the previous paragraph, is independent of any element samples. 
Thus the guarantees of the tests $d(S) \geq \Delta / 2^{\stari}$ for the different iterations $(\stari, \starj)$ are independent.

\subsection{High level overview of oracles}

Below we give an overview of our $\LCA$ oracles.

\paragraph{Oracle \textsc{degreeEstimate} (\cref{alg:GetDeg}).} This $\LCA$ oracle provides an estimate of the effective degree of set $S$ \textit{at the beginning of} iteration $(i, j)$. The oracle returns a probabilistic estimate of the degree of the set $S$. The guarantees about this estimate are tuned by another set of parameters $(\stari, \starj)$. The estimates are returned to the precision needed at iteration $(\stari, \starj)$. We shall later describe how the guarantees vary with each iteration. Intuitively, this guarantee strengthens with every passing iteration in such a way that the number of samples needed increases by at most a \textit{constant factor} with each iteration. 

\paragraph{Oracle \textsc{weightEstimate} (\cref{alg:GetWt}).} This $\LCA$ oracle provides an estimate of the weight of an element $e$ \textit{at the end of} iteration $(i, j)$. The oracle returns a probabilistic estimate similar to the $\textsc{degreeEstimate}$ oracle.

Oracles \textsc{degreeEstimate} and \textsc{weightEstimate} invoke each other. To estimate the degree of a set $S$ at some iteration, we sample elements of $S$ and invoke the \textsc{weightEstimate} oracle to estimate their weight at the end of the previous iteration. To estimate the degree of a set $S$ at the end of iteration $(i, j)$ after processing up to iteration $(\stari, \starj)$ of the algorithm, we sample $2^{\stari} \cdot K$ elements $e \in S$. This should be interpreted as strengthening the guarantees by a factor of $2^{\stari - i}$. We shall then prune out the elements $e \in S$ that are covered before iteration $(i, j)$ smoothly by beginning with the first iteration. If we see too many elements staying uncovered, then we give up on the estimate- this is a bad sampling, and our analysis should account for any such errors made. Once $i-1$ phases are simulated, we shall have a small enough sample of the set $S$ wherein we can simulate the $i^{th}$ phase. Yet again, we incrementally prune out the elements covered in the iterations $j' = 1, 2, \dots j-1$. The tests are performed such that the probability that an element is estimated to be covered in iteration $j' < j$ but \textit{not} in iteration $j$ is $2^{-\Omega(j - j')}$. This is achieved by boosting the number of samples in the estimate of $j'$ by a factor of $2^{j - j'}$.

Recall that in a given phase, weights are added to the elements in a geometric progression. We can also afford to pretend that most of the weight that covered the given element $e$ came during the last iteration. The \textsc{weightEstimate} oracle on an element $e$ estimates how many sets $S$ that contain $e$ were dense during a given iteration $(i, j)$. If we find that many sets were dense, then we can estimate that the weight added to that element during this iteration was high enough that it will be covered. Thus, we need to sample at least $2^j \cdot K$ such sets $S$ to obtain an estimate with failure probability $\Omega(1)$.

\begin{algorithm}
    \begin{algorithmic}[1]
        \Procedure{weightEstimate}{$i, j, i^{\star}, \starj, e$} 
            \Statex \Comment{Estimate the weight of $e$ at the end of iteration $(i, j)$ after the retro-active updates up to phase $i^{\star}$, with failure probability boosted up to $\starj$}
            \Statex \textbf{Input:} Phases $i, i^{\star} (\geq i)$, iteration $j$ and element $e$ with $0 \leq i \leq \log \Delta$ and $1 \leq j \leq \log f$.
            \Statex \textbf{Parameters}: $K, \delta$ are sufficiently large constants.
            \Statex \textbf{Base Case:} If $i = 1$ and $j = 0$, then return $0$
            \State If $\Call{weightEstimate}{e, i, j-1, \stari, \starj} \geq 1/2$ then \Return $1/2$
            \State $b \gets \#$ of iterations between $(i, j)$ and $(\stari, \starj)$
            \State $D \gets $ A sample of $2^{j + \delta \cdot b} \cdot K$ sets $S \ni e$ each chosen u.a.r with repetition. \label{line:ele-sample}
            \For{$j' = 1, 2, \dots j$}
                % \If{$|D| \geq 2^{j + b - j' + \delta} \cdot K$}
                \For{every set $S \in D$} \Comment{Guaranteed that $|D| < 2^{j - j'+ 1 + \delta \cdot b + \delta} \cdot K$}
                    \If{$\Call{degreeEstimate}{S, i, j', \stari, \starj} < \Delta / 2^i$ or returned $\mathsf{fail}$}
                        \State Remove $S$ from $D$
                    \EndIf 
                \EndFor 
                \If{$|D| \geq 2^{j - j' + \delta \cdot b + \delta} \cdot K$}
                    \State \Return $1/2$
                \EndIf 
                % \EndIf 
            \EndFor 
            \State \Return $0$
        \EndProcedure 
    \end{algorithmic}
    \caption{LCA oracle to estimate weight of an element.}
    \label{alg:GetWt}
\end{algorithm}

\begin{algorithm}
    \begin{algorithmic}[1]
        \Procedure{degreeEstimate}{$i, j, \stari, \starj, S$}
            \Statex \Comment{Estimate the effective degree of a set $S$ during iteration $(i, j)$ after the retro-active updates until phase $i^{\star}$ with probability boosted to the quality needed at $\starj$}
             \Statex \textbf{Input:} Phases $i, \stari (\geq i)$ and iteration $j$, set $S$ with $0 \leq i \leq \log \Delta$ and $1 \leq j \leq \log f$.
             \Statex \textbf{Parameters:} $K, \delta$ are sufficiently large constants.
             \Statex \textbf{Base Cases:} If $i = 0$, return $|S|$
            \State Set $b \gets \#$ of iterations between $(i, j)$ and $(\stari, \starj)$.
            \State If $\Call{degreeEstimate}{S, i, j-1, \stari, \starj}$ failed, then return $\mathsf{fail}$.
            \State $\hS \gets $ A multiset of $2^{i + \delta \cdot b} \cdot K$ elements from $S$ each chosen u.a.r. 
            \label{line:alg:degEst:sampling}
            
            \For{$i' = 1, 2, \dots i-1$}
                \State $b' \gets \#$ of iterations from $(i', \log f)$ to $(i, j)$
                \For{every distinct element $e \in \hS$ with $\Call{weightEstimate}{e, i', \log f, b'} \geq 1/2$}
                    \State Remove all occurrences of $e$ from $\hS$ \label{line:alg:degEst:pruning}
                \EndFor 
                \If{$|\hS| > 2^{i - i' + \delta \cdot b + \delta} \cdot K$} \label{line:bad-event-GetDeg}
                    \State \Return $\mathsf{fail}$ \Comment{Estimate failed}
                \EndIf 
            \EndFor 
            \For{$j' = 1, 2, \dots j-1$} \Comment{Guaranteed that $|\hS| \leq 2^{\delta \cdot b + \delta + 1} \cdot K$}
                \For{every element $e \in \hS$ with $\Call{weightEstimate}{e, i, j', j - j'} \geq 1/2$}
                    \State Remove $e$ from $\hS$ \label{line:alg:degEst:wtTest}
                \EndFor 
            \EndFor 
            \State \Return $|\hS| \cdot |S| / (2^{\delta \cdot b} \cdot K)$ \Comment{Re-scaling subsample to $|S|$}
        \EndProcedure 
    \end{algorithmic}
    \caption{LCA oracle to estimate the effective degree of a set.}
    \label{alg:GetDeg}
\end{algorithm}

\subsection{Query Complexity}

\begin{theorem}
\label{thm:alg:lca-frac:time}
    The probe complexity of the $\LCA$ that simulates \cref{alg:err2} is $2^{O(\log \Delta \log f)}$.
\end{theorem}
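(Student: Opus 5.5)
We will bound the number of recursive oracle invocations by strong induction on a single linear iteration index, exactly as in the proof of \cref{lemma:is-ele-cov-queries}, but now aiming for the recurrence \cref{eq:recurrence} instead of the warm-up's $C^{t}$ with $C=\poly(L)$. Index iterations by $t=t_{i,j}=(i-1)\log f+j$, and let $b$ denote the gap $t_{\stari,\starj}-t_{i,j}$ between the iteration being estimated and the target iteration. Let $W(t,b)$ and $D(t,b)$ be the worst-case number of recursive calls made by $\Call{weightEstimate}{}$ and $\Call{degreeEstimate}{}$ on an iteration with index $t$ and slack $b$. The inductive hypothesis is
\[
W(t,b),\,D(t,b)\le A^{t}\cdot B^{b}
\]
for constants $A\gg B\gg 2^{\delta}$ (depending on $K,\delta$). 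The exact pairing of the $2^{\delta b}$ blow-up with $B^{b}$ is what keeps the base independent of $L$. The total probe bound then follows: \cref{alg:err2} calls $\Call{degreeEstimate}{i',j',i,j,S}$ for at most $L^2$ pairs, each invocation probes at most $\Delta+f$ adjacency entries, and $t,b\le L=\log\Delta\log f$. This gives $(\Delta+f)L^2 A^{L}B^{L}=2^{O(\log\Delta\log f)}$, since $\Delta+f\le 2^{O(L)}$.

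To set up the recurrence for $D$, we use the guards on \cref{line:bad-event-GetDeg}, as in the warm-up. In the loop over $i'$, the multiset $\hS$ surviving to round $i'$ has at most $2^{i-i'+1+\delta b+\delta}K$ elements. Each such element triggers a call $\Call{weightEstimate}{}$ at iteration $(i',\log f)$, whose slack is $b'=t_{i,j}-t_{i',\log f}+b$ when the slack is measured to $(\stari,\starj)$. By induction each call costs $A^{t_{i',\log f}}B^{b'}$. The product
\[
2^{i-i'+\delta b}\cdot A^{t_{i',\log f}}B^{b+t_{i,j}-t_{i',\log f}}
= A^{t_{i,j}}B^{b}\cdot 2^{i-i'+\delta b}(B/A)^{t_{i,j}-t_{i',\log f}}
\]
is then geometrically small in $i-i'$, because $t_{i,j}-t_{i',\log f}\ge i-i'$. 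In the $j'$-loop, at most $2^{\delta b+\delta+1}K$ elements each call $\Call{weightEstimate}{}$ at $(i,j')$ with slack $(j-j')+b$. This gives terms $2^{\delta b}A^{t-(j-j')}B^{b+(j-j')}$, again geometric. The preliminary call at $(i,j-1)$ costs $A^{t-1}B^{b+1}$. We must also absorb the factor $2^{\delta b}$, and we do so by choosing $B\ge 2^{2\delta}$, so that $2^{\delta b}B^{b}$ is still at most $B'^{b}$ for a slightly adjusted constant. More carefully, we rewrite the hypothesis as $A^tB^b$ with $B=2^{c\delta}$, and check that the per-level samples $2^{O(\delta)}$ times $B^{\text{gap}}/A^{\text{gap}}$ sum to less than $1$.

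For $W$ we use the analogue of \cref{eq:sum-mu-k}. The guard $|D|<2^{j-j'+1+\delta b+\delta}K$ bounds the number of $\Call{degreeEstimate}{}$ calls at $(i,j')$. Each has slack $b+(j-j')$ and index $t_{i,j'}$, so the terms are $2^{j-j'+\delta b}A^{t-(j-j')}B^{b+(j-j')}$, geometric in $j-j'$ once $A\ge 4B$. One subtlety here is that $\Call{degreeEstimate}{}$ at $(i,j')$ internally requests weights at iterations at most $(i,j'-1)$. This ensures every recursive call strictly decreases $t$, so the induction is well founded. The base cases, $i=0$ for degrees and $(1,0)$ for weights, cost $O(1)$.

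The main obstacle is the bookkeeping of the slack parameter. The sample sizes grow like $2^{\delta b}$, and the last arguments passed in the pseudocode ($b'$, $j-j'$) are relative gaps rather than absolute targets. We must verify that every recursive call's new slack plus its index decrease is at most the old slack plus $O(1)$ times the index decrease, so that the exponential-in-$b$ factors telescope into a single $2^{O(t_{\stari,\starj})}$. We would first try to prove the invariant $t+b=t_{\stari,\starj}$ along every call. If that invariant holds, the whole call tree for target $T$ is bounded by $A'^{T}$ directly, since each node at index $t$ spawns $2^{O(\delta(T-t))}$ children, and we then apply \cref{eq:recurrence}.
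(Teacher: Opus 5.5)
Your skeleton (induction on the linear index $t$, guards bounding the fan-out, geometric sums, and $t,b\le L$ at the top level) matches the paper's. However, the way you treat the slack $b$ does not close the induction, and the invariant you propose as a rescue would make the bound false.

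Take your single hypothesis $W(t,b),D(t,b)\le A^tB^b$ and look at \textsc{weightEstimate}$(i,j,\stari,\starj,e)$. It samples $2^{j+\delta b}K$ sets and calls \textsc{degreeEstimate}$(i,j',\stari,\starj,S)$ for $j'=1,\dots,j$, with the \emph{same} target, i.e.\ with slack $b+(j-j')$. The resulting terms are $A^tB^b\cdot 2^{\delta b}\cdot 2^{1+\delta}K\,(2B/A)^{j-j'}$, and the factor $2^{\delta b}$ is unbounded. Moreover, for $j'=j$ the index does not drop at all, so your claim that every call strictly decreases $t$ is false, and that term has no $A$-gain. Replacing $2^{\delta b}B^b$ by $B'^b$ ``for a slightly adjusted constant'' is circular: the hypothesis must be reproduced with the same base, and re-adjusting at every level compounds.

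Your $D$-step has the same uncancelled $2^{\delta b}$, because you give the inner \textsc{weightEstimate} calls slack $t_{i,j}-t_{i',\log f}+b$.

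The fallback invariant $t+b=t_{\stari,\starj}$ is fatal. If the target stayed fixed at $T$, a node at index $t$ would pay a sample factor $2^{\delta(T-t)}$ while passing slack $>T-t$ to its children. The guard-based recurrence would then be $N(t)\le\sum_{t'<t}2^{O(t-t')+\delta(T-t)}N(t')$. Its coefficients depend on $T-t$, not only on the local gap as in \cref{eq:recurrence}, and it solves to $2^{\Theta(\delta T^2)}=2^{\Theta(L^2)}$ (the chain $t\to t-1$ alone gives $\prod_{k<T}2^{\delta k}$). Your stated criterion (new slack plus index decrease at most old slack plus $O(1)$ times the index decrease) holds with equality under this invariant, so it is not the right criterion.

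The missing idea is that \textsc{degreeEstimate} \emph{resets} the slack. Its calls to \textsc{weightEstimate} pass the gap to $(i,j)$ itself, namely $t_{i,j}-t_{i',\log f}$ and $j-j'$, not the distance to $(\stari,\starj)$. This is exactly how the paper writes the recurrence for $S(i,j,b)$; only $E(i,j,b)$ invokes $S(i,j',j-j'+b)$ with the target kept. So a sample factor $2^{\delta b}$ is carried across at most one edge (weight $\to$ degree) before being reset.

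The induction then closes with asymmetric hypotheses, e.g.\ $D(t,b)\le A^{t}2^{\delta b}$ (your form, with $B=2^{\delta}$) and $W(t,b)\le A^{t+1/2}2^{2\delta b}$:
\begin{itemize}
\item The half step absorbs the same-index call at $j'=j$; this is the role of $C^{t}$ versus $C^{t+1/2}$ in the paper.
\item In the $D$-step, the reset gap $g\ge\max(1,i-i')$ lets $A^{-g}$ swallow both $2^{2\delta g}$ and $2^{i-i'}$.
\item In the $W$-step you get $2^{\delta b}\cdot 2^{\delta(b+j-j')}$, which is why the exponent $2\delta b$ is needed there. The paper's displayed bound $E(i,j,b)\le C^{t_{i,j}+1/2}2^{\delta b}$ omits this factor, but the corrected version goes through.
\end{itemize}
Since $b\le L$ at the top level, this yields $2^{O(L)}$.
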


Let $S(i, j, b), E(i, j, b)$ denote the total number of recursive calls made during the invocations to \\ $\GetDeg{S, i, j, \stari, \starj}$ and $\GetWt{e, i, j, \stari, \starj}$ respectively where $b$ is the number of iterations between $(i, j)$ and $(\stari, \starj)$. Let $t_{i,j} = (i-1)\log f + j$, i.e., the index of iteration $(i, j)$. 

We prove by induction that $S(i, j, b) \leq C^{t_{i,j}} \cdot 2^{\delta \cdot b}$ and $E(i, j, b) \leq C^{t_{i,j} +  1/2} \cdot 2^{\delta \cdot b}$ for some large enough constant $C$.

For the base case, we have $S(1, 1, b) \leq C \cdot 2^{\delta \cdot b}$ and $E(1, 1, b) \leq  C \cdot 2^{\delta \cdot b}$. Choosing $C \geq 2^{\delta \cdot b}$ suffices to show the base case.
We next write recurrence relations for these functions from the pseudo-code of the algorithm.

\begin{align}
    S(i, j, b) &\leq S(i, j-1, b) + \sum\limits_{i'=1}^{i-1} E(i', \log f, t_{i,j} - t_{i', \log f}) \cdot 2^{i - i' + 1 + \delta \cdot b + \delta} \cdot K \nonumber \\
    &\ \ + \sum\limits_{j'=1}^{j-1} E(i, j', j - j') \cdot 2^{1 + \delta \cdot b + \delta} \cdot K \\
    E(i, j, b) &\leq E(i, j-1, b) + \sum\limits_{j'=1}^{j} S(i, j', j-j'+b) \cdot 2^{j - j' + 1 + \delta \cdot b + \delta} \cdot K
\end{align}

Using strong induction, we can simply plug-in our hypothesis on the right hand side as follows.
\begin{align}
    S(i, j, b) &\leq C^{t_{i,j-1}} \cdot 2^{\delta \cdot b} + \sum\limits_{i'=1}^{i-1} C^{i' \log f + 1/2} \cdot 2^{\delta \cdot b + \delta + (i-i'+1) + \delta \cdot (t_{i,j} - t_{i', \log f})} \cdot K + \\
    &\sum\limits_{j'=1}^{j-1} C^{t_{i,j'}} \cdot 2^{1 + \delta\cdot b + \delta + t_{i,j} - t_{i,j'}}\cdot K \\
    &\leq C^{t_{i,j}} \cdot 2^{\delta \cdot b} \cdot \Bigg\{ C^{-1} + \left( \sum\limits_{i'=1}^{i-1} C^{t_{i', \log f} - t_{i,j} + 1/2} \cdot 2^{\delta \cdot (t_{i,j} - t_{i', \log f} + 1)} \cdot K \right) + \\
    & \left( \sum\limits_{j'=1}^{j-1} C^{t_{i,j'} - t_{i,j}} \cdot 2^{\delta + 1 + t_{i,j} - t_{i,j'}} \cdot K \right) \Bigg\} \\
    &\leq C^{t_{i,j}} \cdot 2^b \ \ \ \ \ \text{[As long as } C \geq 4^{\delta + 3} \cdot K^2 \text{]}
\end{align}

Similarly we can bound $E(i, j, b)$.

\begin{align}
    E(i, j, b) &\leq E(i, j-1, b) + \sum\limits_{j'=1}^{j} S(i, j', j-j'+b) \cdot 2^{j - j' + 1 + \delta \cdot b + \delta} \cdot K \\
    &\leq C^{t_{i,j}+1/2} \cdot 2^{\delta \cdot b} \cdot \left( (C^{-1/2}) + \sum\limits_{j'=1}^j C^{t_{i,j'} - t_{i,j} - 1/2} \cdot 2^{j - j' + 1 + \delta} \cdot K \right) \\
    &\leq C^{t_{i,j} + 1/2} \cdot 2^{\delta \cdot b} \ \ \ \ \ \text{[As long as } C \geq 4^{\delta + 3} \cdot K^2 \text{]}
\end{align}

This completes the inductive proof on $S(i, j, b)$ and $E(i, j, b)$.
Finally, observe that to simulate \cref{alg:err2}, we need only $O(\Delta f)$ queries to these oracles and so the total amounts to $2^{O(\log \Delta \log f)}$.

\def\od{d^{\circ}}
\def\dplus{d^{+}}
\def\dmin{d^{-}}

\subsection{Approximation Ratio}

In this section we prove \cref{lem:main-approx-ratio}.

\begin{lemma}
\label{lem:main-approx-ratio}
    An LCA implementation of \cref{alg:err2} with the LCA oracles $\GetDeg{}$ given in \cref{alg:GetDeg} returns an LCA for the fractional set cover whose cost (in expectation) is $O(\log \Delta) \cdot \opt$ where $\opt$ is the optimal integral solution for the set cover instance.
\end{lemma}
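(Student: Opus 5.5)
The plan is to reduce \cref{lem:main-approx-ratio} to \cref{thm:eqn-to-approx}. That theorem already states that \cref{alg:err2} is $O(\log \Delta)$-approximate in expectation whenever the estimates $\derr_{new}(i',j',i,j,S)$ satisfy \cref{eqn:deg:under-estimate}, \cref{eqn:deg:over-estimate} and \cref{eqn:set:sparsification}, independently of the randomness of iterations before $(i,j)$. So the work is to verify these three inequalities for the oracles $\GetDeg{}$ and $\GetWt{}$ of \cref{alg:GetDeg,alg:GetWt}.

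\textbf{Step 1: independence.} I will first fix the randomness layout. Each call $\GetDeg{i',j',i,j,S}$ samples a fresh multiset $\hS$ of $2^{i'+\delta b}K$ elements. Each call $\GetWt{}$ samples a fresh multiset $D$ of $2^{j'+\delta b}K$ sets, indexed by $(i',j',i,j)$. Conditioned on the history before $(i,j)$, the new samples are independent of it. Hence it suffices to bound the failure probabilities conditioned on an arbitrary fixed history.

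\textbf{Step 2: the weight oracle.} Next I prove a per-iteration accuracy statement for $\GetWt{}$. If at least $f/2^{j'}$ (times a constant) sets containing $e$ are dense in iteration $(i',j')$, then $D$ contains, in expectation, $\Omega(2^{\delta b}K)$ such sets, and $\GetWt{}$ returns $1/2$ except with probability $\exp(-\Omega(2^{\delta b}K))$ by \cref{lemma:chernoff}. Conversely, if few are dense, then by \cref{clm:weird:chernoff} the threshold $2^{j-j'+\delta b+\delta}K$ is exceeded with probability $\exp(-\Omega(2^{\delta b}K))$. This gives the sandwich $\wt(e)\in[\ell,r]$ with constant $\ell,r$, exactly as in \cref{sim:claim}, up to a failure probability that is doubly exponential in $b$.

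\textbf{Step 3: the degree oracle.} For \cref{eqn:deg:under-estimate}, suppose $d\ge\Delta/2^{i'-1}$. Then $\hS$ has expected $\ge 2^{1+\delta b}K$ uncovered elements. Chernoff, together with a union bound over the $O(2^{\delta b}K)$ weight calls each failing with probability $\exp(-\Omega(K))$, gives failure $\le \tfrac18 2^{-(i-i')}$ once $\delta,K$ are large constants. For \cref{eqn:deg:over-estimate}, suppose $d<\Delta/2^{i+1}$. Then the expected count is at most $1/4$ of the threshold, so \cref{clm:weird:chernoff} gives $\exp(-\Omega(2^{\delta b}K))\le 8^{-b}2^{-(i-i')}$. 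Spurious \emph{uncovered} verdicts from $\GetWt{}$ must also be charged, again via step 2.

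\textbf{Step 4: the fail branch and \cref{eqn:set:sparsification}.} The \textsf{fail} branch on \cref{line:bad-event-GetDeg} must be charged as well. I plan to prove \cref{eqn:set:sparsification} by induction on $i$, using \cref{eqn:deg:under-estimate} at all phases $i'' \le i'$, retroactively boosted in phase $i$. The probability that a set keeps $t\Delta/2^{i'}$ uncovered elements requires about $\log t$ consecutive under-estimates, each of probability $\le 2^{-3}2^{-(i-i'')}$. This yields $t^{-3}2^{-(i-i'+1)}$. Given this, the probability that $\hS$ exceeds the threshold $2^{i-i''+\delta b+\delta}K$ is small, by Markov over $t$ combined with Chernoff.

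\textbf{Main obstacle.} I expect the hardest part to be exactly this circularity: \cref{eqn:set:sparsification} feeds the fail-probability bounds for the degree oracle, which in turn feed \cref{eqn:set:sparsification}. It must be handled by a careful joint induction over iterations in row-major order. The induction must keep the constants $\delta,K$ independent of $i$, and it must respect the independence structure of Step 1, since estimates reuse element-level weight calls across different $(i^\star,j^\star)$. Finally, I will invoke \cref{thm:eqn-to-approx} to conclude.
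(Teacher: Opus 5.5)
Your skeleton is the paper's: reduce to \cref{thm:eqn-to-approx}, then prove the estimator inequalities by one joint strong induction over iterations in row-major order (the paper's \cref{lem:alg-main:approx}). Within that induction you charge the \textsf{fail} branch to a sparsification bound via a large/small case split. You get sparsification from about $\log_2 t$ independent phase-wise under-estimates, each of probability at most $1/8$.

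However, one step in Step 3 fails as written. You control the errors of the inner \textsc{weightEstimate} calls by a union bound over the $O(2^{\delta b}K)$ sampled elements, each failing with probability $\exp(-\Omega(K))$. That gives $O(2^{\delta b}K)\,e^{-\Omega(K)}$, which grows with $b$. Since $b$ ranges up to $\Theta(\log\Delta\log f)$ while $\delta,K$ are constants, this eventually exceeds $\frac{1}{8}2^{-(i-i')}$, and indeed exceeds $1$, for every fixed choice of constants. The doubly exponential accuracy asserted in Step 2 cannot rescue it:
\begin{itemize}
\item In \cref{alg:GetDeg} the inner weight calls are boosted only up to the degree call's own iteration, not by the outer $b$.
\item In the under-estimate direction, the induction only supplies a bound that does not decay with $b$ ($\frac{1}{8}2^{-(\stari-i)}$ in \cref{eqn:ele:under-estimate2}). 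This is because the weight oracle's verdict rests on fresh density re-estimates of the sampled sets, and their under-estimate guarantee \cref{eqn:deg:under-estimate2} also does not decay with $b$.
\end{itemize}

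The missing ingredient, used in the paper's bad events $\cE_3$ and $\cE_4$, is to aggregate per-element misclassifications by concentration rather than a union bound. Given the fixed history, treat the verdicts on distinct sampled elements as independent, as the paper does. Each truly uncovered sampled element is then wrongly pruned with probability at most a small constant; the paper gets $1/7$ by summing \cref{eqn:ele:over-estimate2} over the filtering iterations. So the expected number of wrongly pruned elements is a constant fraction of the $\Omega(2^{\delta b}K)$ uncovered sampled ones. \cref{clm:weird:chernoff} then makes a larger deviation exponentially unlikely in that count, which is the bound you need.

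The same issue affects your Step 4. The \textsf{fail} test in \cref{alg:GetDeg} counts elements that survive the weight filter, and these include covered elements wrongly judged uncovered. Sparsification of truly uncovered elements (\cref{eqn:set:sparsification}) therefore does not bound that count on its own. This is why the paper carries the extra inequality \cref{eqn:set:sparsification2p} for the estimated-uncovered count $\od$ inside the same induction, and runs its two-case argument on $\od$.
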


Our main task is to show that \cref{alg:err2} when run with \cref{alg:GetDeg} satisfies \cref{eqn:deg:over-estimate,eqn:deg:under-estimate,eqn:set:sparsification} and then \cref{lem:main-approx-ratio} follows by \cref{thm:eqn-to-approx}.

We first define some terms. Let $\derr_{new}(i, j, \stari, \starj, S)$ denote the estimated effective degree of a set $S$ at iteration $(i, j)$ after retroactive updates until iteration phase $\stari$ measured with precision needed for iteration $j$. Similarly, $\wt(i, j, \stari, \log f, e')$ is the weight of the element $e'$ at the end of iteration $(i, j)$ measured with quality up to $(\stari, \starj)$. Let $\od(i, j, \stari, \starj, S)$ denote the number of elements $e \in S$ such that $\Call{weightEstimate}{i, j, \stari, \starj, e}$ returns less than $1/2$. $\dplus(i, j, \stari, \starj, S)$ denote the number of elements $e \in S$ with $\wt(i, j, \stari, \starj, e) < 1/4$ and $\dmin(i, j, \stari, \starj, S)$ denote the number of elements $e \in S$ with $\wt(i, j, \stari, \starj, e) < 1$. In the analysis, we estimate with sufficiently high probability that $e$ is covered when $\wt(e) \geq 1$, and that $e$ is uncovered when $\wt(e) < 1/4$. In between, we allow for the algorithm to treat $e$ arbitrarily.

In the following discussion, we \textbf{fix} the randomness in the algorithm prior to iteration $(\stari, \starj)$. That is, all the random bits that were sampled are considered fixed. The only change is the randomness associated with phase $i$. This includes the retroactive updates made for phases $i' = 1, 2, \dots i-1$, in this order.
 
\begin{lemma} \label{lem:alg-main:approx}
    \cref{alg:err2} satisfies the following equations during every phase $\stari$ and for every pair of iterations $(i, j) \leq (\stari, \starj)$ provided $\delta, K$ are constants chosen large enough. Let $b$ denote the number of iterations between $(i, j)$ and $(\stari, \starj)$ that are in phase $\stari$. That is, $b = \starj - j + 1$ if $i = \stari$ and $\starj$ otherwise. We have,
    \begin{align}
        \Pr\left[\derr_{new}(i, j, \stari, \starj, S) < \Delta / 2^{i} \mid \dplus(i, j, S) \geq \Delta / 2^{i-1} \right] &\leq \frac{1}{8} \cdot \frac{1}{2^{\stari-i}}\label{eqn:deg:under-estimate2} \\ 
        \Pr\left[ \derr_{new}(i, j, \stari, \starj, S) \geq \Delta / 2^{i} \mid \dmin(i, j, S) < \Delta / 2^{i+3} \right] &\leq \frac{1}{8^{b}} \cdot \frac{1}{2^{\stari-i}} \label{eqn:deg:over-estimate2} \\
        \forall t \ \geq 1, \ \ \Pr\left[ \dmin(i, \log f, \stari, \log f, S) \geq t \cdot \Delta / 2^{i-1} \right] &\leq \frac{1}{t^3} \cdot \frac{1}{2^{\stari-i+1}}\label{eqn:set:sparsification2} \\
        \forall t \ \geq 1, \ \ \Pr\left[ \od(i, \log f, \stari, \log f, S) \geq t \cdot \Delta / 2^{i-1} \right] &\leq \frac{1}{t^3} \cdot \frac{1}{2^{\stari-i+1}}\label{eqn:set:sparsification2p} \\
         \Pr\left[ \wterr(i, j, \stari, \starj, e) < 1/2 \mid \wt(i, j, e) \geq 1 \right] &\leq \frac{1}{8} \cdot \frac{1}{2^{\stari-i}}  \label{eqn:ele:under-estimate2}\\
         \Pr\left[ \wterr(i, j, \stari, \starj, e) \geq 1/2 \mid \wt(i, j, e) < 1/4 \right] &\leq \frac{1}{8^b} \cdot \frac{1}{2^{\stari-i}} \label{eqn:ele:over-estimate2} 
    \end{align}
\end{lemma}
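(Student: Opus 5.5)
We prove all six inequalities of \cref{lem:alg-main:approx} together, by strong induction on the pair $(t_{i,j}, b)$ in the order the oracles call each other. \textsc{degreeEstimate} at $(i,j)$ only invokes \textsc{weightEstimate} at iterations strictly before $(i,j)$, namely $(i',\log f)$ for $i'<i$ and $(i,j')$ for $j'<j$. \textsc{weightEstimate} at $(i,j)$ only invokes \textsc{degreeEstimate} at $(i,j')$ with $j'\le j$. So the hypothesis we may use is: every inequality holds for all strictly earlier iterations, at every quality level $(\stari,\starj)$.

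Throughout we fix the randomness of earlier phases, as the paragraph before the lemma prescribes. The only fresh randomness is then the sample $\hS$ on \cref{line:alg:degEst:sampling} (or the sample $D$ on \cref{line:ele-sample}) together with the independent samples inside the recursive calls. This makes each individual test a sum of independent indicators, so \cref{lemma:chernoff} and \cref{clm:weird:chernoff} apply directly.

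\textbf{Element inequalities \cref{eqn:ele:under-estimate2,eqn:ele:over-estimate2}.} We follow the $\cE_3$ analysis of \cref{lem:1}. Split the true weight of $e$ into contributions $\mu_{j'}2^{j'}/f$, where $\mu_{j'}$ counts the sets containing $e$ that stay dense through $(i,j')$. Sets with $\mu_{j'}<f/2^{j+2}$ contribute less than $1/2$ in total. For each remaining $j'$, the sample $D$ of $2^{j+\delta b}K$ sets contains, in expectation, at least $2^{j'-j+\delta b}K\cdot\mu_{j'}2^{j}/f$ such sets. The algorithm's threshold is $2^{j-j'+\delta b+\delta}K$.

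\emph{Over-estimate.} If $\wt<1/4$, every $\mu_{j'}\le f/2^{j'+2}$, so each expected count lies a factor $2^{\Omega(\delta)}$ below the threshold. The upper-tail Chernoff bound gives failure probability $\exp(-\Omega(2^{\delta b}K))\le 8^{-b}2^{-(\stari-i)}$ after a union bound over $j'\le j$. Here we use $\stari-i\le b$ and take $\delta,K$ large.

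\emph{Under-estimate.} We must also account for erroneous \textsc{degreeEstimate} calls on sampled sets. By the induction hypothesis \cref{eqn:deg:under-estimate2}, each dense set is misclassified with probability at most $1/8\cdot 2^{-(\stari-i)}$. Markov's inequality on the number of misclassified sets, plus a lower-tail Chernoff bound on the sample, gives the claim.

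\textbf{Degree inequalities \cref{eqn:deg:under-estimate2,eqn:deg:over-estimate2}.} Here we mirror the $\cE_1,\cE_2$ analysis. The pruning loop over $i'<i$ removes exactly the sampled elements whose \textsc{weightEstimate} at $(i',\log f)$ is at least $1/2$. So the count surviving at stage $i'$ is governed by $\od(i',\log f,\dots)$. By the induction hypothesis \cref{eqn:set:sparsification2p}, the probability that $\od\ge t\Delta/2^{i'-1}$ is at most $t^{-3}2^{-(\stari-i'+1)}$. Conditioned on $\od<t\Delta/2^{i'-1}$, \cref{clm:weird:chernoff} makes the fail branch on \cref{line:bad-event-GetDeg} exponentially unlikely. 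Summing over $t$ and $i'$ bounds the fail probability by $O(2^{-(\stari-i)})$ times a small constant.

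Next we compare the surviving sample with $\dplus$ and $\dmin$. An element with $\wt<1/4$ is wrongly pruned only with the probability given by \cref{eqn:ele:over-estimate2}. An element with $\wt\ge1$ survives only with the probability given by \cref{eqn:ele:under-estimate2}. In expectation, the surviving count therefore lies between $(1-o(1))\,\dplus\cdot 2^{i+\delta b}K/\Delta$ and $\dmin\cdot 2^{i+\delta b}K/\Delta+{}$(erroneously retained elements). A Chernoff bound with $\Omega(2^{\delta b}K)$ expected samples then gives \cref{eqn:deg:under-estimate2} and the $8^{-b}$ tail in \cref{eqn:deg:over-estimate2}. The gap factor $8$ between $\Delta/2^{i+3}$ and $\Delta/2^i$ absorbs the retained errors.

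\textbf{Sparsification \cref{eqn:set:sparsification2,eqn:set:sparsification2p}.} We prove these last, for phase $i$, by adapting \cref{lem:bound-e}. An element of $S$ remains with $\wt<1$ after phase $i$ only if some set containing it was dense in phase $i$ but missed. By \cref{eqn:deg:under-estimate2}, a miss has probability at most $2^{-(\stari-i)}/8$ in each later retroactive re-execution. Moreover, the misses are independent across re-executions, because the retroactive re-executions for phase $i$ use fresh randomness. Having $t\Delta/2^{i-1}$ such elements in $S$ requires about $\log t$ consecutive phase failures, giving the $t^{-3}$ factor. For $\od$ we additionally use \cref{eqn:ele:under-estimate2} and Markov's inequality.

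\textbf{Main obstacle.} The hardest step is this sparsification bound. It requires turning per-set miss probabilities into a $t^{-3}$ tail on the number of uncovered elements, while correlations between sets sharing elements remain. My first attempt would bound $\E{\dmin^3}$ via the geometric decay across phases, as in \cref{lem:bound-e}, and then apply Markov's inequality to the cube. If this loses constants, I would tune $\delta$ upward and run the retroactive re-executions $2^{\Theta(\delta(\stari-i))}$ times.
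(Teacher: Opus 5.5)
Your treatment of the degree inequalities follows the paper's bad-event structure closely. The genuine gap is in the sparsification bounds \cref{eqn:set:sparsification2,eqn:set:sparsification2p}, which you rightly single out as the crux but do not close. Your remark about ``$\log t$ consecutive phase failures'' has the right count. However, you attribute the failures to ``some set containing'' each element, and the independence to retroactive re-executions of phase $i$. Your actual plan, bounding $\E{(\dmin)^{3}}$ through the geometric decay of \cref{lem:bound-e} and applying Markov to the cube, does not work. \cref{lem:bound-e} is a first-moment bound on the number of uncovered elements in the whole instance. It says nothing about the joint behaviour of the elements of one fixed $S$, which is exactly what a per-set third moment requires. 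Your fallback of adding re-executions would change the algorithm the lemma is about rather than prove the lemma.

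The missing idea is that only the density test of $S$ itself matters. Suppose $\dmin(i,\log f,\stari,\log f,S)\ge t\Delta/2^{i-1}$. Since effective degrees only decrease, $S$ had effective degree at least $\Delta/2^{i'-1}$ in every phase $i'$ with $i-\log_2 t\le i'\le i$. In each of these phases, the retroactively updated estimate for $S$ at iteration $(i',\log f)$ must have stayed below $\Delta/2^{i'}$. Otherwise $S$ would receive weight $2^{\log f}/f=1$ and all of its elements would be covered. By \cref{eqn:deg:under-estimate2}, each such miss has probability at most $\frac18\cdot 2^{-(\stari-i')}$. Treating these tests as independent across the distinct phases, as the paper does, the product is $8^{-\log_2 t}=t^{-3}$ times the $2^{-(\stari-i)}$-type factor coming from the $i'=i$ term. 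No element-level events appear, so the correlations you worry about never arise. The paper derives \cref{eqn:set:sparsification2p} by rerunning this same set-level argument. Your route via \cref{eqn:ele:under-estimate2} plus Markov cannot work, because Markov on a first moment decays only like $1/t$, whereas $t^{-3}$ is required.

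Two further steps fail as written. For \cref{eqn:ele:over-estimate2}, you control only the sampling fluctuation of genuinely dense sets, those counted by $\mu_{j'}$. But the count compared with the threshold $2^{j-j'+\delta b+\delta}K$ is the number of sampled sets whose fresh \textsc{degreeEstimate} call reports them dense. That count also includes light sets wrongly passed, and these can be the bulk of the $2^{j+\delta b}K$ samples. They must be bounded separately, as in the paper's event $\cE_{11}$, using \cref{eqn:deg:over-estimate2} at the iterations $(i,j')$ as induction hypothesis.

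For \cref{eqn:ele:under-estimate2}, Markov on the number of misclassified dense sets cannot reach the target. The per-set miss probability supplied by \cref{eqn:deg:under-estimate2} is already $\frac18\cdot 2^{-(\stari-i)}$, which is the target itself. Markov therefore returns at best a constant multiple of it, for instance twice it for ``at least half missed'', before the sampling error is even added. What is needed is a concentration bound over the sampled dense sets, as the paper's events $\cE_8,\cE_9$ indicate.
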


\begin{proof}
    We prove \cref{lem:alg-main:approx} by induction on $(i, j)$ and in order from \cref{eqn:deg:under-estimate2,eqn:deg:over-estimate2,eqn:set:sparsification2,eqn:set:sparsification2p,eqn:ele:under-estimate2,eqn:ele:over-estimate2}. Note that $\dplus(i, j, S)$ refers to the effective degree of the set $S$ at the beginning of iteration $(i, j)$ of the algorithm (possibly different from current phase $\stari$).

    For the base case we have $i = 0$, i.e., the base case of the recursion in the algorithm. In this case, the output of the algorithms are error free and all the inequalities are trivially true.

    \begin{align*}\label{ind-hypo-main-lemma}
        \text{Induction Hypothesis:} \ \ \ \ \ \ \ &\text{\cref{eqn:deg:under-estimate2,eqn:deg:over-estimate2,eqn:set:sparsification2,eqn:set:sparsification2p,eqn:ele:under-estimate2,eqn:ele:over-estimate2}  are true  for every }(i, j, \stari, \starj) \\ &\text{ even if the randomness in phases prior to phase } \stari \text{ are kept fixed}
    \end{align*}

    We use strong induction. We define some bad events with the usual connotation, if none of the bad events occur, then the events associated with the probabilities in the lemma must occur. Bounding the probabilities that any of these events occur suffices to bound the probabilities.

    \paragraph{Proving \cref{eqn:deg:under-estimate2}}

    \begin{mdframed}[backgroundcolor=gray!10, linecolor=brown!40!black, roundcorner=5pt]
    \textbf{Bad Event $\cE_1$ -- Too many covered elements for a set}: \\
        For some $i' < i$, $|\hS|$ sampled in \cref{alg:GetDeg} contains at least $2^{\stari-i'+\delta} \cdot K$ distinct elements that survived until \cref*{line:bad-event-GetDeg}.
    \end{mdframed}

    From \cref{eqn:set:sparsification2} applied at phase $(i'-1, \log f)$, we have $\Pr[\od(i'-1, \log f, \stari, S) \geq t \cdot \Delta / 2^{i'-2}] \leq 1/t^3 \cdot \frac{1}{2^{\stari - i' + 2}}$. We consider two cases.

    \begin{itemize}
        \item (Case 1: $\od(i'-1, \log f, \stari, S) \geq \Delta \cdot 2^{\delta - i' - 1}$). In this case the probability is low enough that we can afford to count the execution as failed. The probability of error is at most $2^{-(\stari-i'+2+3\delta)}$.
        \item (Case 2: $\od(i'-1, \log f, \stari, S) < \Delta \cdot 2^{\delta - i' - 1}$). Out of the $2^{\stari} \cdot K$ elements that were sampled in $\hS$, the expected number of them that pass $\GetWt{}$ tests is at most $2^{\stari - i' + \delta - 1} \cdot K$. Using \cref{clm:weird:chernoff}, the desired failure probability is at most $\exp{(-2^{\stari-i'+\delta-1} \cdot K / 3)}$.
    \end{itemize}

    Applying the union bound over all $i'$ the probability of failure can be bound by $2^{-(\stari-i+\delta)}$.
    
    \begin{mdframed}[backgroundcolor=gray!10, linecolor=brown!40!black, roundcorner=5pt]
    \textbf{Bad Event $\cE_2$ -- In a dense set, too few uncovered elements sampled in $\hS$}: \\
        $\dplus(\hS) < 2^{\stari - i} \cdot K / 2$ given that $\dplus(i, j, S) \geq \Delta / 2^{i}$.
    \end{mdframed}
    Given that $\dplus(i, j, S) \geq \Delta / 2^i$, the expected number of uncovered elements is at least $2^{\stari - i} \cdot K$. By \cref{item:delta-at-most-1-le}, the probability that less than $2^{\stari - i} \cdot K / 2$ is sampled is at most $\exp{-(2^{\stari - i} \cdot K / 4)}$.
    
    \begin{mdframed}[backgroundcolor=gray!10, linecolor=brown!40!black, roundcorner=5pt]
    \textbf{Bad Event $\cE_3$ -- In a dense set, too few uncovered elements survived phase filtering}: \\
        Out of the $2^{\stari - i} \cdot K / 2$ elements that were sampled in $\hS$, less than $2^{\stari - i} \cdot 3 K / 8$ survived the weight estimate tests of the phase filtering process. 
    \end{mdframed}

    Consider an element $e$ with $\wt(e) < 1/4$ at the end of iteration $(i, j)$. The probability that it is estimated to be covered at iteration $(i', \log f)$ is at most $1/8 \cdot 1/2^{\stari - i'}$. By the union bound the probability that it is estimated to be covered upto iteration $(i-1, \log f)$ is at most $1/8 \cdot 1/2^{\stari - i} \leq 1/8$. The expected number of such elements out of the $2^{\stari - i} \cdot K / 2$ elements is at most $2^{\stari - i} \cdot K / 16$. By \cref{clm:weird:chernoff}, the probability that at least $2^{\stari - i} \cdot K / 8$ are estimated to be covered is at most  $\exp{-(2^{\stari - i} \cdot K / 48)}$.

    \begin{mdframed}[backgroundcolor=gray!10, linecolor=brown!40!black, roundcorner=5pt]
    \textbf{Bad Event $\cE_4$ -- In a dense set, too few uncovered elements are uncovered upto iteration $(i, j)$}: \\
        Out of the $2^{\stari - i} \cdot K / 4$ elements that survived phase filtering tests, less than $2^{\stari - i} \cdot K / 8$ passed the density tests.
    \end{mdframed}

    We use \cref{eqn:ele:over-estimate2} on each of the iterations $(i, j')$ with $j' < j$. For a given element $e$, the probability that it remains uncovered in at least one of the iterations is given by $1/8 \cdot 1/2^{\stari - i} \cdot \sum\limits_{j' < j} 8^{-j'} \leq 7^{-1}$. Now the expected number of elements covered is at most $2^{\stari - i} \cdot K / 28 < 2^{\stari - i} \cdot K / 8$. Applying \cref{clm:weird:chernoff}, the desired probability is at most $\exp{-(2^{\stari-i} \cdot K/24)}$.
    
    \begin{lemma} \label{proof:eqn:deg:under-estimate}
        If $\cE_1, \cE_2$, $\cE_3$ and $\cE_4$ do not occur during the call $\GetDeg{i, j, \stari, S}$, then the output is at least $\Delta / 2^i$. Furthermore, the probability of any of these events occurring is bounded as in \cref{eqn:deg:under-estimate2}.
    \end{lemma}

    \begin{proof}  
        For large enough $K$,$\delta$ the sum of the failure probabilities of $\cE_1, \cE_2, \cE_3$ and $\cE_4$ is bounded by $2^{-(\stari-i+3)}$ as desired. Finally, we need to compute the probability that the degree estimate filters do not return $0$. However, the degree estimates essentially execute the same testing procedure, except that failure probability is boosted by a factor of $2^{j-j'}$. 
        We can then afford to union-bound these failures, and the final probability is at most twice the computed value. 
    \end{proof}

    \paragraph{Proving \cref{eqn:deg:over-estimate2}.} To prove this bound, we focus on the run of $\GetDeg{i, j, \stari, \starj}$. We show that, with the desired failure probability, this procedure returns that $S$ was not dense during $(i, j)$. In the following, $\hS$ refers to the set sampled during the call $\GetDeg{i, j, \stari, \starj}$.

    \begin{mdframed}[backgroundcolor=gray!10, linecolor=brown!40!black, roundcorner=5pt]
    \textbf{Bad Event $\cE_5$ -- Too many uncovered elements sampled for light sets}: \\
        For a set $S$ with $\dmin(i, j', S) < \Delta / 2^{i + 3}$, we have $\dmin(\hS) \geq \dmin(i, j', S) / \Delta \cdot 2^{\stari - i - 2 + j-j'} \cdot K$
    \end{mdframed}

    The expected number of elements $e$ in $S$ with $\wt(e) < 1$ is at most $\dmin(i, j, S) / \Delta \cdot 2^{\stari-i-3 + j-j'} \cdot K$. Using \cref{lemma:chernoff} \cref{item:delta-at-least-1}, and setting $\delta = 1$, we get the desired probability is bound by, 
    $$\exp{-(\dmin(i, j, S) / \Delta \cdot 2^{\stari-i-3 + j-j'} \cdot K / 3)}$$

    Henceforth, we can assume that at least $2^{\stari-i+j-j'} \cdot K \cdot 3/4$ elements are covered in $\hS$. We next bound the probability that a third of them survive. We assume that all the uncovered elements survive the weight tests correctly, since it only affects us adversarially. Together, the estimated effective degree is less than $\Delta / 2^i$.

    Repeating the same arguments as that for $\cE_1$, the desired failure probability is 
    $$\exp{-(2^{\stari - i + \delta - 1} \cdot K \cdot 2^{j-j'}) / 3}$$
    
    \begin{mdframed}[backgroundcolor=gray!10, linecolor=brown!40!black, roundcorner=5pt]
    \textbf{Bad Event $\cE_7$ -- In a light set, too many covered elements are estimated uncovered up to iteration $(i, j)$ }: \\
        Out of the at least $2^{\stari - i - 2} \cdot 2^{j-j'} \cdot K/2$ covered elements, at least $2^{\stari - i - 2} \cdot 2^{j-j'} \cdot K/4$ of them have estimated weight less than $1/2$ at the end of iteration $(i, j)$
    \end{mdframed}
    
    \begin{lemma} \label{proof:eqn:deg-over}
        If $\cE_5$ and $\cE_7$ do not occur during the call $\GetDeg{i, j, \stari, \starj, S}$, then the output is less than $\Delta / 2^i$. Furthermore, the probability of any of these events occurring is bounded as in \cref{eqn:deg:over-estimate2}.
    \end{lemma}
    \begin{proof}
        Each of the events occur with probability $\exp{-\Theta(2^{\stari - i + b} \cdot K)}$. For large enough $x$, we have $\exp{(-x)} \leq x^{-3}$ which gives the desired inequality.
    \end{proof}

    \paragraph{Proving \cref{eqn:set:sparsification2}.}
    
    \begin{lemma} \label{proof:eqn:set-sparse}
        $\forall t \ \geq 1, \ \ \Pr\left[ \dmin(i, \log f, \stari, \log f, S) \geq t \cdot \Delta / 2^{i-1} \right] \leq \frac{1}{t^3} \cdot \frac{1}{2^{\stari-i+1}} $
    \end{lemma}
    \begin{proof}
        If $\dmin(i, \log f, \stari, \log f, S) \geq t \cdot \Delta / 2^{i-1}$, then $\dmin(i', \log f, S) \geq t \cdot \Delta / 2^{i-1}$ for every $i' \geq i - \log_2 t$. At the end of each of these iterations, the degree estimate to set $S$ fails with probability at most $1/8 \cdot 1/2^{\stari - i + 1}$ by \cref{eqn:deg:under-estimate2}. Note that otherwise, we would add a weight of $1$ to $S$. These tests are independent, and hence, the total failure probability is bound by the given expression.
    \end{proof}

    \paragraph{Proving \cref{eqn:set:sparsification2p}.}
    
    \begin{lemma}\label{proof:eqn:set:sparsification2p} 
    $\forall t \ \geq 1, \ \ \Pr\left[ \od(i, \log f, \stari, \log f, S) \geq t \cdot \Delta / 2^{i-1} \right] \leq \frac{1}{t^3} \cdot \frac{1}{2^{\stari-i+1}} $
    \end{lemma}
    \begin{proof}
        We follow the proof of \cref{proof:eqn:set-sparse}. Note that we can replace $\dmin$ in \cref{eqn:deg:over-estimate2} with $\od$, and the inequality still holds since we do not have to account for the probability of failure of the weight estimates for some of the elements. Hence, the same bound as that of \cref{eqn:set:sparsification2} holds.
    \end{proof}
    
    \paragraph{Proving \cref{eqn:ele:under-estimate2}.}

    For the discussion below, we call an element \textit{heavy} during some iteration $(i, j)$ if there exists at least $f / 2^{j-2}$ sets \textit{estimated} to be dense and containing it, after the retroactive updates until phase $\stari$. Similarly we call an element \textit{light} if there exists at most $f / 2^{j+3}$ sets that are estimated to be dense and contain $e$.
    
    \begin{mdframed}[backgroundcolor=gray!10, linecolor=brown!40!black, roundcorner=5pt]
    \textbf{Bad Event $\cE_8$ -- For a heavy element $e$, too few dense sets are sampled }: \\
        Out of the $2^j \cdot K \cdot 2^{\stari - i}$ sets that were sampled, at most $2^{\stari - i + 1} \cdot K $ sets are dense.
    \end{mdframed}

    Using \cref{lemma:chernoff} \cref{item:delta-at-most-1-le}, we get failure probability $\exp{(-\Theta(2^{\stari - i + 1} \cdot K))}$.
    
    \begin{mdframed}[backgroundcolor=gray!10, linecolor=brown!40!black, roundcorner=5pt]
    \textbf{Bad Event $\cE_9$ -- Too few dense sets survive upto iteration $(i, j)$}: \\
        Out of the $2^j \cdot K \cdot 2^{\stari - i + 1}$ dense sets that were sampled, less than $2^{\stari - i} \cdot 2^{j-j'} \cdot K $ sets pass the dense set test estimates.
    \end{mdframed}
    
    \begin{lemma}\label{proof:eqn:ele:under-estimate2}
        \cref{eqn:ele:under-estimate2} is true.
    \end{lemma}

    \paragraph{Proving \cref{eqn:ele:over-estimate2}}

    We consider an iteration $(i, j')$ during which an element $e$ is uncovered. We study the probability that the oracle $\GetWt{i, j', \stari, j}$ estimates that $\wt(e) \geq 1/2$ given that $\wt(e) < 1/4$.
    
    \begin{mdframed}[backgroundcolor=gray!10, linecolor=brown!40!black, roundcorner=5pt]
    \textbf{Bad Event $\cE_{10}$ -- For a light element $e$, too many dense sets are sampled}: \\
        Given that an element $e$ has at most $f / 2^{j'+2}$ sets that are containing it and estimated to be dense, at least $2^{j - j' - 2} \cdot 2^{\stari - i} \cdot K$  dense sets containing $e$ are sampled in $\hat{D}$ in \cref*{line:dense-est-sample}.
    \end{mdframed}

    \begin{mdframed}[backgroundcolor=gray!10, linecolor=brown!40!black, roundcorner=5pt]
    \textbf{Bad Event $\cE_{11}$ -- For a light element $e$, too many light sets containing $e$ pass dense estimate tests}: \\
        Of the at least $2^{j - j'}\cdot 2^{\stari - i} \cdot 3K/4$ light sets containing $e$, at least $2^{j - j'} \cdot 2^{\stari - i} \cdot K / 4$ are marked as dense.
    \end{mdframed}
    
    \begin{lemma}\label{proof:eqn:ele:over-estimate2}
        \cref{eqn:ele:over-estimate2} is true.
    \end{lemma}

    The proof of these inequalities completes the induction.
\end{proof}

\section{Integral Set Cover}
\label{sec:frac-to-integral}
In this section, we provide the missing details to prove \cref{thm:main}, i.e., how to convert the fractional algorithms to integral ones.
The ideas used for obtaining an integral version follow from the prior work~\cite{GMRV20} and our analysis from the preceding sections. 
Nevertheless, we provide some details for completeness and outline the algorithm performing the rounding.

To convert a fractional algorithm into an integral one, the common change is that whenever a weight of $w$ is added to a set $S$, $S$ is included in the integral set cover with probability $w$. This randomness can be sampled in advance, i.e., before an algorithm is executed. 
We will also need to modify the final naive covering step; the modification is natural. 
For example, if a set $S$ contains an element $e$ that is not covered and $S$ has the least ID among all sets that contain $e$, then $S$ will be added to the cover.

\subsection{Converting \cref{alg:get-set-weight} into an integral LCA}

\textbf{Modifying \cref{alg:Is-Ele-Cov}.} 
To convert \cref{alg:get-set-weight} into an integral LCA, we need one additional change. In the procedure for \cref{alg:Is-Ele-Cov}, we can no longer afford to say that an element is ``covered'' if its ``weight'' exceeds $1$, since it might happen that none of the sets were sampled. We will show below that, on expectation, the weight of every element is $O(1)$ when it gets covered. However, this is still insufficient to obtain LCAs with deterministic query complexity, since $\wt(e)$ might be large, albeit with small probability. 
To remedy this, we use the fact that the random sets chosen to be sampled in iteration $(i, j)$ can be pre-sampled and we can use these pre-samples to prune out redundant density tests. Essentially, if a set $S$ was not sampled during iteration $(i, j)$, then we do not need to simulate its state in this iteration. Also, with high probability, we can guarantee that $\mu_k = \Theta(2^{j-k} L^3)$ w.h.p. Then, the distribution of $\mu_k$ is almost uniform, making the analysis actually simpler.
 
The argument for the approximation ratio of the integral algorithm in \cref{alg:get-set-weight} is identical to the approach in \cite{GMRV20}. The only non-trivial step in the proof is to argue that $X_e$, the number of sets containing $e$ that get added to the cover when $e$ is covered for the first time, is $O(1)$. The remaining steps follow from the existing proof. The proof of the following lemma is also unchanged from \cite{GMRV20}. The only criteria we need is that the density test estimates for a set remain consistent across the $\log f$ iterations in a single phase. Since this is true (w.h.p.), the same analysis follows.

\begin{lemma}
\label{lem:int1}
    In the integral version of \cref{alg:get-set-weight}, we have that for every element $e$, $\xpect{X_e} \leq 12$.
\end{lemma}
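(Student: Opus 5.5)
We fix an element $e$ and a phase $i$, and let $j$ be the first iteration of phase $i$ at which $e$ becomes covered. $X_e$ counts the sets containing $e$ that are added to the integral cover during iterations $(i,1)$ through $(i,j)$. The plan follows the proof of \cref{lem:xp-wt-e}, with deterministic weights replaced by independent coin flips. Let $n_t$ be the number of sets containing $e$ that are tested dense at iteration $(i,t)$. We use the monotonicity of density tests within a phase, which is enforced by the \textbf{break} statement in \cref{alg:get-set-weight} and holds w.h.p. In the integral version each such set is added independently with probability $2^t/f$. Conditioned on the density outcomes, the number of sets added at iteration $t$ is therefore a sum of independent Bernoullis with mean $\mu_t = n_t 2^t/f$. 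The pre-sampled coins are independent of the density tests, since the density tests depend only on the sampling randomness. Hence we may condition on the sequence $n_1 \ge n_2 \ge \dots$ and analyze only the coins.

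The key step is a geometric stopping argument. Element $e$ is declared covered at iteration $t$ as soon as at least one set containing it has been added. The probability of reaching iteration $t$ uncovered is at most $\prod_{s<t}(1-2^s/f)^{n_s} \le \exp(-\sum_{s<t}\mu_s)$. Since $n_s \ge n_t$ for $s<t$, we get $\sum_{s<t}\mu_s \ge n_t(2^t-2)/f \approx \mu_t$. The expected number of sets added at iteration $t$ is $\mu_t$. Since the coin flips at iteration $t$ are independent of survival to $t$, we obtain
\[
\xpect{X_e} \le \sum_{t} \mu_t \cdot \exp\bigl(-(\mu_t - 2n_t/f)\bigr).
\]
We split this sum into two cases. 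Terms with $\mu_t \le 4$ form a geometric series in $t$ backwards from the covering iteration, because $\mu_t \le \mu_{t+1}\cdot 2$ fails only in the favorable direction. More precisely, we pair the sum with the first $t$ at which $\mu_t \ge 1$: before it, $\sum \mu_t \le 2$. Terms with $\mu_t$ large are killed by the factor $\mu_t e^{-\mu_t/2}$, and since $\mu_t$ at least doubles while $n_t$ stays fixed, these terms also sum to a constant. Tracking constants should give the bound $12$, mirroring the argument in \cite{GMRV20}.

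The main obstacle is the dependence between the decision ``$e$ is still uncovered at $t$'' and the density tests, because the tests for the sets containing $e$ in later iterations may depend on whether other elements were covered. We handle it as follows. Fix all density-test randomness, which determines the monotone sequence $(n_t)$ whenever $e$ is uncovered. Then argue only over the independent inclusion coins, which are pre-sampled per (set, iteration) pair. The low-probability event that monotonicity fails, i.e.\ a bad event $\cE_2$, contributes at most $f\cdot \exp(-\Theta(L^3)) = o(1)$ to the expectation.
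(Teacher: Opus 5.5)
\textbf{There is a genuine gap at the summation step.} Your displayed bound $\xpect{X_e}\le\sum_t\mu_t\exp(-(\mu_t-2n_t/f))$ is true but far too weak. Replacing the cumulative exponent $\sum_{s<t}\mu_s$ by $\mu_t-2n_t/f$ discards all the decay accumulated in earlier iterations.

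Take $n_t=f/2^t$, which is non-increasing, so $\mu_t=1$ for every $t$. Your right-hand side is then $\sum_t e^{-1+2^{1-t}}=\Theta(\log f)$. The bound you had one line earlier, $\sum_t\mu_t e^{-\sum_{s<t}\mu_s}=\sum_t e^{-(t-1)}$, is $O(1)$.

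The case analysis meant to rescue the weak bound rests on false claims. Monotonicity of $n_t$ only gives $\mu_{t+1}\le 2\mu_t$, which is an \emph{upper} bound on growth. Consequently:
\begin{itemize}
\item $\mu_t$ need not form a geometric series;
\item $\mu_t$ need not double, since $n_t$ does not stay fixed;
\item the $\mu_t$'s before the first $t$ with $\mu_t\ge1$ need not sum to at most $2$. Take $n_t=f/2^{t+1}$, which gives $\mu_t=1/2$ for all $t<\log f$.
\end{itemize}

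\textbf{The missing ingredient is keeping the full survival factor.} The paper does this with the backward recursion $F(j)\le\mu_j+e^{-\mu_j}F(j+1)$, in which $e^{-\mu_j}$ multiplies the entire future. It then proves $F(j)\le\max\{12,2\mu_j\}$ by downward induction, using only $\mu_{j+1}\le 2\mu_j$. This gives $F(j)\le\mu_j+e^{-\mu_j}\max\{12,4\mu_j\}$, which is at most $2\mu_j$ when $\mu_j\ge4$ and at most $12$ when $\mu_j<4$. The base case is $F(\log f)=\mu_{\log f}$, since every dense set is taken when $2^j/f=1$. Hence $F(1)\le\max\{12,4n_1/f\}\le12$.

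If you prefer your forward sum, keep $M_{t-1}=\sum_{s<t}\mu_s$ in the exponent:
\begin{itemize}
\item Terms with $\mu_t\le1$ are at most $\frac{1}{1-e^{-1}}\bigl(e^{-M_{t-1}}-e^{-M_t}\bigr)$, so they telescope.
\item The $k$-th term with $\mu_t>1$ satisfies $M_{t-1}\ge\max\{k-1,\mu_t/2\}$ for $t\ge2$, so these terms decay geometrically in $k$.
\end{itemize}

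\textbf{Your justification of independence is also incorrect.} In the integral version, the density tests at iteration $t$ depend on which elements are already covered, and hence on coins from earlier iterations. So fixing the sampling randomness does not fix $(n_t)$. What does hold is that, given the history before iteration $t$ (which determines $n_t$), the coins of iteration $t$ are fresh. The backward recursion, read as a worst case over non-increasing continuations, absorbs this adaptivity. A product formula for one fixed sequence does not do so on its own. The paper's write-up is also informal here, phrasing it as fixing a sequence, but its recursive form makes the repair immediate.
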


\begin{proof}
    The idea is to fix a non-increasing sequence $n_1, n_2, \dots n_{\log f}$ where $n_i$ represents the number of dense sets incident on $e$ during iteration $j$ (of some arbitrary phase).

    Note that there is some probability (depending on the density set tests) that this sequence is observed during a run of the algorithm. However, given such a sequence, the value of $X_e$ depends purely on the randomness in the sampling of the sets (not in the density estimates). 

    Let us define $F(j)$ for some $1 \leq j \leq \log f$ as the expected number of sets covering an element $e$ given that it is uncovered upto iteration $j-1$. We can write,
    \begin{align*}
        F(j) &\leq \Pr[e \text{ is covered at } j] \cdot n_j 2^j / f + \Pr[e \text{ is not covered at } j] \cdot F(j + 1) \\
        &\leq n_j 2^j / f + \{\prod\limits_{n_j  \ \text{dense sets}}(1 - 2^j / f) \} \cdot F(j + 1) \\
        &\leq n_j 2^j / f + \exp{(-n_j 2^j / f)} \cdot F(j + 1)
    \end{align*}
    Using the last equation, we prove by induction, on decreasing order of $j$, that $F(j) \leq \max\{D, C n_j 2^j / f\}$ for sufficiently large constants $C, D$ to be chosen later. For the base case we have $F_{\log f} \leq n_{\log f} \leq \max\{D, C n_{\log f}\}$ for $C \geq 1$.

    We consider two cases. 
    
    Case (i): $n_j \leq f / 2^{j-2}$. In this case we have $F(j) \leq 4 + e^{-n_j 2^j / f} \cdot F(j + 1)$. To complete the induction, we insist that $4 + D/e^4 \leq D$ and $4 + 4C \leq D$.

    Case (ii): $n_j \geq f / 2^{j-2}$. In this case, we have that
    \[
        n_j 2^j / f + \exp{(- 4n_j f / 2^j)} F(j+1) \leq x + \max\{D, C x\} e^{-x} \leq Cx
    \]
    whenever $x \geq 4$ and $C, D \geq 2$. 
    We set $C=2, D=12$. It follows now that $F(1) \leq \max\{12, 2 n_1 / f\} \leq 12$.
\end{proof}

The rest of the proof follows identically to the fractional algorithm.

\newcommand{\GetCv}[1]{\Call{coveredEstimate}{#1}}

\subsection{Converting \cref{alg:err2} into an integral LCA}

\paragraph{Modifications needed to the algorithm.} For every iteration $(i, j)$ and every set $S$, we sample a bit with probability $2^j / f$, denoting the coverage for that set. These bits inform us about which sets to choose in the integral cover.

Let $t_{S, i, j}$ denote the random bit that was sampled for $S$ during $(i, j)$. We execute \cref{alg:GetDeg,alg:GetWt} except that \cref{alg:GetWt} always returns either $0$ or $1$, denoting whether an element $e$ is covered or not. If the procedure returns $1$, then we will have queried a set that belongs to the integral cover. If the procedure does not return $1$, then $e$ may have been covered, but we have not queried that set yet. 
Hence $\GetWt{}$ now only has one sided error. Instead of computing weights, whenever we sample a set containing $e$ and find that it has been added to the cover, only then do we consider it covered. These changes are implemented in \cref{alg:GetWt2}.

\paragraph{\cref*{line:ele-sample}.} In the fractional version, we sampled sets containing $e$ randomly. However, in the integral version this is modified as follows: Query the entire neighborhood of $e$ and find the subset of sets $S$ that have $t_{S, i, j} = 1$. Simulate the density for only these sets. If there are more than $2^{\starj} \cdot 2^{\stari - i} \cdot K$ candidates, then pick these many uniformly at random.

\paragraph{Query complexity.} The recurrence for the query complexity remains almost unchanged. The only difference is that now, the random neighborhood of an element to explore is determined by random samples $t_{S, i, j}$. So we do the following, first query every set $S \ni e$ and its $t_{S, i, j}$ value. Then, take as many sets as possible that have $t_{S, i,j} = 1$ and consider this to be the sampled set. Note that if any of these sets were estimated to be dense, then we would add them to the Set Cover. With a similar argument to the previous time complexity analysis, we can derive $S(i, j, b) \leq f \cdot C^{t_{i,j}} \cdot 2^{\delta b}$ and $E(i, j, b) \leq f \cdot C^{t_{i,j} + 1/2} \cdot 2^{\delta \cdot b}$.

\paragraph{Approximation Ratio.} Arguing for the approximation ratio is much less straightforward. The major new challenge is in showing that when an element $e$ gets covered, the expected number of sets that contain it and are added to the cover is $O(1)$. We first show this inequality. 
\begin{algorithm}
    \begin{algorithmic}[1]
        \Procedure{coveredEstimate}{$i, j, \stari, \starj, e$} 
            \Statex \Comment{Estimate whether element $e$ is covered at the end of iteration $(i, j)$ after the retro-active updates up to phase $i^{\star}$, with failure probability boosted up to $\starj$}
            \Statex \textbf{Input:} Phases $i, i^{\star} (\geq i)$, iteration $j$ and element $e$ with $0 \leq i \leq \log \Delta$ and $1 \leq j \leq \log f$.
            \Statex \textbf{Parameters}: $K, \delta$ are sufficiently large constants.
            \Statex \textbf{Base Cases}: If $i = 0$ then return $0$.
            \State $\wt \gets \GetCv{\previ{i}, \previ{j}, i^{\star}, \starj, e}$ \Comment{$(\previ{i}, \previ{j})$ is the previous iteration to $(i, j)$ \footnote{If $j=1$ then $(\previ{i}, \previ{j}) = (i-1, \log f)$ otherwise $(\previ{i}, \previ{j}) = (i, j-1)$}} 
                \If{$\wt = 1$} \Comment{Check if $e$ is covered already.}
                    \State \Return $1$
                \EndIf 
            \State $D \gets $ The set of at most $2^{j} \cdot K \cdot 2^{\delta \cdot b}$ sets containing $e$ that have $t_{S, i, j} = 1$
            \Statex \Comment{For the $\LCA$, we query the entire neighborhood of $e$ and then their random bits for this iteration.}
            \For{$j' = 1, 2, \dots j$}
                \For{every set $S \in \hat{D}$}
                    \If{$\GetDeg{i, j', \stari, \starj, S} \geq \Delta / 2^i$}
                        \State \Return $1$
                    \EndIf 
                \EndFor
                \If{$|D| > 2^{j - j'+\delta + \delta \cdot b} \cdot K$}
                    \State \Return $0$
                \EndIf 
            \EndFor 
            \State \Return $0$
        \EndProcedure 
    \end{algorithmic}
    \caption{LCA oracle to estimate whether an element is covered.}
    \label{alg:GetWt2}
\end{algorithm}

We follow roughly the same idea. However, this time, it is not guaranteed that a set considered light during iteration $(i, j)$ is also considered light during iteration $(i, j-1)$. There is a (small) probability that this event occurs. In the previous approaches, we fixed the number of dense sets in each iteration and argued that regardless of this sequence, the expected value is $O(1)$. 
This was possible because the sequence is guaranteed to be decreasing. In the case of \cref{alg:err2}, this is not guaranteed. However, there is very low probability that the sequence increases exponentially, which turns out to be enough. 

\begin{lemma}
    It holds that $\xpect{X_e} \leq O(1)$.
\end{lemma}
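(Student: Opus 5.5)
We would adapt the proof of \cref{lem:int1} (and of \cref{lem:xp-wt-e}) to sequences of dense-set counts that need not be monotone. Fix an element $e$ and a phase $i$. For $j=1,\dots,\log f$ let $n_j$ be the number of sets containing $e$ that are estimated dense in iteration $(i,j)$, counting retroactive estimates made up to phase $\log \Delta$. As in \cref{lem:int1}, given the sequence $(n_j)$, the number of sets added to the cover at iteration $j$ depends only on the independent coins $t_{S,i,j}$. Hence $e$ remains uncovered through iteration $j$ with probability at most $\exp(-n_j 2^j/f)$, and the expected number of sets added at $j$ is $n_j 2^j/f$. We would define $F(j)$ as the expected number of sets added to the cover that contain $e$, starting at iteration $j$ and conditioned on $e$ being uncovered before $j$. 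The target is again the bound $F(j)\le \max\{D, C\,n_j2^j/f\}$, and the argument runs by backward induction on $j$.

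\textbf{Handling non-monotonicity.} The only place the old proof uses monotonicity is when it bounds $F(j+1)$ by a quantity in terms of $n_j$. We would write $n_{j+1} = n_{j+1}^{\mathrm{old}} + n_{j+1}^{\mathrm{new}}$. Here $n_{j+1}^{\mathrm{old}}\le n_j$ counts sets that were also estimated dense at $(i,j)$. The term $n_{j+1}^{\mathrm{new}}$ counts sets estimated light at $(i,j)$ but dense at $(i,j+1)$, including those added retroactively. A set counted in $n_{j+1}^{\mathrm{new}}$ must have been either misclassified as light while truly dense, or misclassified as dense while truly light. We would bound the probability of each case by \cref{eqn:deg:under-estimate2} and \cref{eqn:deg:over-estimate2}, with the geometric factors $8^{-b}2^{-(\stari-i)}$. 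Summing over all $f$ sets containing $e$, and using the fact that these errors are independent of the coins $t_{S,i,j}$ (we fix the randomness of earlier phases, as in \cref{lem:alg-main:approx}), we would show
\[
\mathbb{E}\bigl[n_{j+1}^{\mathrm{new}} \cdot 2^{j+1}/f\bigr] \le c\cdot 4^{-j}\cdot(\text{const}).
\]
Summed over $j$, this gives only $O(1)$ extra expected load. Concretely, the recursion becomes
\[
F(j)\le \frac{n_j2^j}{f} + \mathbb{E}\bigl[n^{\mathrm{new}}_{j+1}2^{j+1}/f\bigr] + e^{-n_j2^j/f}\,F^{\mathrm{old}}(j+1).
\]
Here $F^{\mathrm{old}}$ is driven by a non-increasing subsequence, so the two cases $n_j\le f/2^{j-2}$ and $n_j> f/2^{j-2}$ from \cref{lem:int1} go through with slightly larger constants $C,D$.

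\textbf{Main obstacle.} The difficulty is that the "new" sets at $j+1$ can themselves trigger further new sets, and their count interacts with the exponential decay factor. A set that is wrongly estimated light while truly dense has error probability only $\tfrac18 2^{-(\stari-i)}$, which does not decay in $j$. So a bad sequence could jump from $n_j$ small to $n_{j+1}\approx f$ with probability around a constant, costing $2^{j+1}$. To handle this, I would first argue that a truly dense set missed at $(i,j)$ either stays missed, in which case it is simply absent from all counts and costs nothing, or is caught retroactively. If it is caught retroactively, it is charged the weight $2^{j'}/f$ of the iteration it was missed in, not of the current one. That charge is exactly the $k2^{j'}/f$ accounting used in the proof of \cref{thm:eqn-to-approx}, so only over-estimates, which have $8^{-b}$ decay, genuinely create growth. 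If that charging argument fails, the fallback is to enlarge $C$ and absorb the jump probability with the Markov-type bound from \cref{lem:xp-wt-e}: a jump by a factor $2^q$ requires $\Omega(q)$ independent under-estimates, each with probability at most $1/8$, which gives a summable $2^q 8^{-q}$ series.
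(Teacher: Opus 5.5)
You correctly identify the obstacle, but none of your three mechanisms overcomes it. The bound $\mathbb{E}[n^{\mathrm{new}}_{j+1}2^{j+1}/f]\le c\,4^{-j}$ cannot be derived from the available guarantees. When $\stari=i$, \cref{eqn:deg:under-estimate2} allows each of $\Theta(f)$ truly dense sets to be missed at $(i,j)$ with probability $1/8$. Likewise \cref{eqn:deg:over-estimate2} with $b=1$ gives only $1/8$. The $4^{-j}$ decay came from \cref{cond:1}, which the retroactive algorithm does not supply, so the left side can be $\Theta(2^j)$. Your recursion also has structural problems. It puts this cost outside the survival factor $e^{-m_j}$, where $m_j=n_j2^j/f$. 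It treats $F^{\mathrm{old}}$ as driven by a non-increasing sequence, even though the new sets remain in play at $j+2,\dots$. Your charging argument rests on a false dichotomy. A set missed at $(i,j)$ can be estimated dense afresh at $(i,j+1)$. Even if \cref{alg:err2} then also retroactively pays $2^{j}/f$ for $(i,j)$, the set is counted in $n_{j+1}$, and its coin at $(i,j+1)$ has probability $2^{j+1}/f$. So under-estimates, not only over-estimates, create growth, and that is exactly the case that must be controlled.

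Your fallback points in the right direction, but the count is wrong. Knowing that a jump needs $\Omega(q)$ misses does not give probability $8^{-q}$. With $N$ candidate sets each missed with probability up to $1/8$, at least $k$ misses occur with probability close to $1$ whenever $k\ll N/8$. The Markov bound from \cref{lem:xp-wt-e} gives only a constant probability, which is too weak to pay for a cost that is a factor $n_{j+1}/n_j$ larger.

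The missing idea, which is the paper's case split, is that the number of forced misses is linear in $n_{j+1}$. If $n_{j+1}\le 3n_j$, then $m_{j+1}\le 6m_j$, and the argument of \cref{lem:int1} goes through with constant slack. If $n_{j+1}>3n_j$, then more than $2n_{j+1}/3$ of the sets estimated dense at $(i,j+1)$ were not marked dense at $(i,j)$. Each such miss happens independently with probability at most $1/8$. So the jump has probability at most $\binom{n_{j+1}}{n_j}8^{-(n_{j+1}-n_j)}\le 2^{n_{j+1}}\cdot 2^{-2n_{j+1}}=2^{-n_{j+1}}$. Since $x2^{-x}\le 1$, this absorbs the cost $\max\{D,C\,n_{j+1}2^{j+1}/f\}$ of continuing from $n_{j+1}$. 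That gives $F(j)\le m_j+O(1)$ and closes the induction. Without a bound that is exponentially small in $n_{j+1}$, your induction does not close.
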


\begin{proof}
Let $F(j, n_j)$ denote the expected value of the number of sets that are added to the cover when an element $e$ is first covered given that it remains uncovered starting from iteration $j$ with $n_j$ sets estimated to be dense.

We show that $F(j, n_j) \leq \max\{D, C \cdot 2^j n_j / f\}$ for suitably large constants $D, C$, by induction. The result then follows as $\xpect{X_e} \leq F(1, f) \leq O(1)$.

For the base case we consider $j = \log f$, in which case $F(\log f, n_{\log f}) \leq \max\{D, C \cdot n_{\log f}\}$ which is trivially true whenever $C \geq 1$.

For the induction step we consider two cases, conditioned on $n_{j+1}$.

We can write the recurrence $F(j, n_j) \leq F(j+1, n_{j+1}) \cdot \exp{-(n_j 2^j / f)} + n_j 2^j / f$.

For convenience, let $m_j = n_j 2^{j} / f$.

\begin{itemize}
    \item (Case 1). $n_{j+1} \leq 3 n_j$. In this case we can bound, $F(j, n_j) \leq m_j + \exp{(-m_j}) \cdot C m_{j + 1} \leq m_j + 3Cm_j\exp{(-m_j)} \leq 4C$
    \item (Case 2). $n_{j+1} > 3 n_j$. Out of the $n_{j+1}$ sets, at least $2n_j$ were not marked dense during iteration $(i, j)$. The probability that this can happen for a set is at most $1/8$. Going over the randomness in density estimates at iteration $j$, the probability is at most $\binom{n_{j+1}}{n_{j}} \cdot 8^{-(n_{j+1} - n_{j})} \leq 2^{-2n_{j+1} + n_j} \leq 2^{-5n_{j+1} / 3}$.

    Hence we get $F(j, n_j) \leq n_j \cdot 2^j / f + 2^{-5/3 \cdot n_{j+1}} \cdot \exp{-(n_j 2^j / f)} \cdot n_{j+1} \cdot 2^{j + 1} / f$.

    Using $2^{-5/3 x} \cdot x \leq 1$ for $x \geq 1$, we get $F(j, n_j) \leq m_j + \exp^{-m_j} \cdot 2m_j \leq 3 m_j$.
\end{itemize}

    Going over the two cases, it suffices to choose $C = 3$ and $D = 12$.

\end{proof}

Using the same arguments for analogous bad events, we can derive analogous formulae for \cref{eqn:deg:over-estimate2,eqn:deg:under-estimate2,eqn:set:sparsification2,eqn:set:sparsification2p,proof:eqn:ele:over-estimate2,proof:eqn:ele:under-estimate2}, which completes the proof of \cref{thm:main}.
\section{Proof of \cref{corollary}}
\label{sec:proof-of-corollary}

In this section we give the proof of \cref{corollary}. First we sample $100 \Delta f$ sets uniformly at random with replacement. For each set $S$, we execute the $\LCA$ \cref{alg:GetWt} to compute whether the $\LCA$ chose $S$ in the cover. Let $k$ denote the number of such sets. We output $m k / (100 \Delta f)$ as the estimate of the size of the cover. The query complexity is deterministically $O(\Delta^{O(\log f)})$. We next argue the approximation ratio.

First let us assume we fail if the $\LCA$ fails to output an $O(\log \Delta)$ approximate solution with probability at most $1/100$. Next, if the $\LCA$ selected $k'$ sets, then the expected number of them sampled is at least $k' \cdot 100 \Delta f / m \geq 100$. Hence, \cref{clm:weird:chernoff}, with probability at least $2/3$, the output is $O(\log \Delta)$-approximate.

To improve the space complexity, we first observe that \cref{alg:GetWt} can be implemented with just $O(\log \Delta \log f)$ space. The state of the recursion can be maintained by storing the current recursive call (set ID $S$, element ID $e$, and integers $i, j, \stari, \starj$) and the current iteration of the for loops that have been executed. To implement the pruning procedure, we only need to store the number of sets/elements that have passed so far. All of these information can be stores in $O(\poly(\log \Delta \log f))$ bits. When the $\LCA$ wants to a particular random bit, suppose that it is fed this as input by an oracle. As such, the total number of random bits needed by \cref{alg:GetWt} is $\Delta^{O(\log f)}$. We can then use the pseudo-random generator of \cite{Nisan92} to simulate $\Delta^{O(\log f)}$ truly random bits using only $O(\log \Delta \log f)$ bits. Finally, the output of a set $S$ depends only on the atmost $\Delta f$ other sets that share an element with $S$. Thus, the entire randomness needed by the algorithm can be generated from just $O(\Delta f \poly(\log \Delta \log f))$ truly random bits with only a $\poly(\log \Delta, \log f)$ loss in the local runtime of the algorithm.

\paragraph{Modifying the approach of \cite{YYI09} for $O(\log \Delta)$-approximate Set Cover.} The oracle of \cite{YYI09} simulates the following procedure: For $i = 1, 2, \dots \Delta$, sequentially, create a random ordering of the sets and then iterate over the sets in this order and greedily pick any that have size at least $\Delta - i + 1$. They then analyze the average case complexity of an $\LCA$ that simulates this procedure. The same analysis when modified with the following procedure yields a query complexity of $(\Delta f)^{O(\log \Delta)}$. The procedures is as follows: For $i = 1, 2, \dots \log \Delta$, sequentially, create a random ordering of the sets and then iterate over the sets in the random order and greedily pick any set that has size at least $\Delta / 2^i$. The analysis of \cite{YYI09} follows almost immediately for this version of the algorithm, at the cost of only losing $O(\log \Delta)$ factor in the approximation ratio.

\bibliographystyle{alpha}
\bibliography{references}

@string{acm = "ACM Press, New York"}

@string{eccc = "Electronic Colloquium in Computational Complexity"}

@string{ieee = "IEEE Computer Society Press, Los Alamitos, California"}

@string{jacm = "Journal of the ACM"}

@string{random = "Random Structures and Algorithms"}

@string{sicomp = "SIAM J. on Comput."}

@string{sicomp = "SIAM Journal on Computing"}

@string{springer = "Springer Verlag"}

@string{ieee = "IEEE"}

@string{acm = "ACM"}

@string{soda = "{ACM}-{SIAM} {S}ymposium on {D}iscrete {A}lgorithms"}

@string{focs = afocs}

@string{stoc = astoc}

@string{icalp = aicalp}

@string{stacs = astacs}

@string{structures = astructures}

@string{proc23 = "Proc.\ 23rd "}

@inproceedings{Ghaffari-LCA-FOCS22,
  author       = {Mohsen Ghaffari},
  title        = {Local Computation of Maximal Independent Set},
  booktitle    = {63rd {IEEE} Annual Symposium on Foundations of Computer Science, {FOCS}
                  2022, Denver, CO, USA, October 31 - November 3, 2022},
  pages        = {438--449},
  publisher    = {{IEEE}},
  year         = {2022},
  url          = {https://doi.org/10.1109/FOCS54457.2022.00049},
  doi          = {10.1109/FOCS54457.2022.00049},
  bibsource    = {dblp computer science bibliography, https://dblp.org}
}

@inproceedings{KMNT20,
  author       = {Michael Kapralov and
                  Slobodan Mitrovic and
                  Ashkan Norouzi{-}Fard and
                  Jakab Tardos},
  editor       = {Shuchi Chawla},
  title        = {Space Efficient Approximation to Maximum Matching Size from Uniform
                  Edge Samples},
  booktitle    = {Proceedings of the 2020 {ACM-SIAM} Symposium on Discrete Algorithms,
                  {SODA} 2020, Salt Lake City, UT, USA, January 5-8, 2020},
  pages        = {1753--1772},
  publisher    = {{SIAM}},
  year         = {2020},
  url          = {https://doi.org/10.1137/1.9781611975994.107},
  doi          = {10.1137/1.9781611975994.107},
  bibsource    = {dblp computer science bibliography, https://dblp.org}
}

@inproceedings{GMRV20,
  title={Improved local computation algorithm for set cover via sparsification},
  author={Grunau, Christoph and Mitrovi{\'c}, Slobodan and Rubinfeld, Ronitt and Vakilian, Ali},
  booktitle={Proceedings of the Fourteenth Annual ACM-SIAM Symposium on Discrete Algorithms},
  pages={2993--3011},
  year={2020},
  organization={SIAM}
}

@inproceedings{YYI09,
  author       = {Yuichi Yoshida and
                  Masaki Yamamoto and
                  Hiro Ito},
  editor       = {Michael Mitzenmacher},
  title        = {An improved constant-time approximation algorithm for maximum matchings},
  booktitle    = {Proceedings of the 41st Annual {ACM} Symposium on Theory of Computing,
                  {STOC} 2009, Bethesda, MD, USA, May 31 - June 2, 2009},
  pages        = {225--234},
  publisher    = {{ACM}},
  year         = {2009},
  url          = {https://doi.org/10.1145/1536414.1536447},
  doi          = {10.1145/1536414.1536447},
  bibsource    = {dblp computer science bibliography, https://dblp.org}
}

@inproceedings{LenzenL18,
  author       = {Christoph Lenzen and
                  Reut Levi},
  editor       = {Ioannis Chatzigiannakis and
                  Christos Kaklamanis and
                  D{\'{a}}niel Marx and
                  Donald Sannella},
  title        = {A Centralized Local Algorithm for the Sparse Spanning Graph Problem},
  booktitle    = {45th International Colloquium on Automata, Languages, and Programming,
                  {ICALP} 2018, July 9-13, 2018, Prague, Czech Republic},
  series       = {LIPIcs},
  volume       = {107},
  pages        = {87:1--87:14},
  publisher    = {Schloss Dagstuhl - Leibniz-Zentrum f{\"{u}}r Informatik},
  year         = {2018},
  url          = {https://doi.org/10.4230/LIPIcs.ICALP.2018.87},
  doi          = {10.4230/LIPICS.ICALP.2018.87},
  bibsource    = {dblp computer science bibliography, https://dblp.org}
}

@inproceedings{ParterRVY19,
  author       = {Merav Parter and
                  Ronitt Rubinfeld and
                  Ali Vakilian and
                  Anak Yodpinyanee},
  editor       = {Avrim Blum},
  title        = {Local Computation Algorithms for Spanners},
  booktitle    = {10th Innovations in Theoretical Computer Science Conference, {ITCS}
                  2019, January 10-12, 2019, San Diego, California, {USA}},
  series       = {LIPIcs},
  volume       = {124},
  pages        = {58:1--58:21},
  publisher    = {Schloss Dagstuhl - Leibniz-Zentrum f{\"{u}}r Informatik},
  year         = {2019},
  url          = {https://doi.org/10.4230/LIPIcs.ITCS.2019.58},
  doi          = {10.4230/LIPICS.ITCS.2019.58},
  bibsource    = {dblp computer science bibliography, https://dblp.org}
}

@inproceedings{ArvivCLP23,
  author       = {Rubi Arviv and
                  Lily Chung and
                  Reut Levi and
                  Edward Pyne},
  editor       = {Nicole Megow and
                  Adam D. Smith},
  title        = {Improved Local Computation Algorithms for Constructing Spanners},
  booktitle    = {Approximation, Randomization, and Combinatorial Optimization. Algorithms
                  and Techniques, {APPROX/RANDOM} 2023, September 11-13, 2023, Atlanta,
                  Georgia, {USA}},
  series       = {LIPIcs},
  volume       = {275},
  pages        = {42:1--42:23},
  publisher    = {Schloss Dagstuhl - Leibniz-Zentrum f{\"{u}}r Informatik},
  year         = {2023},
  url          = {https://doi.org/10.4230/LIPIcs.APPROX/RANDOM.2023.42},
  doi          = {10.4230/LIPICS.APPROX/RANDOM.2023.42},
  bibsource    = {dblp computer science bibliography, https://dblp.org}
}

@inproceedings{Behnezhad21,
  author       = {Soheil Behnezhad},
  title        = {Time-Optimal Sublinear Algorithms for Matching and Vertex Cover},
  booktitle    = {62nd {IEEE} Annual Symposium on Foundations of Computer Science, {FOCS}
                  2021, Denver, CO, USA, February 7-10, 2022},
  pages        = {873--884},
  publisher    = {{IEEE}},
  year         = {2021},
  url          = {https://doi.org/10.1109/FOCS52979.2021.00089},
  doi          = {10.1109/FOCS52979.2021.00089},
  bibsource    = {dblp computer science bibliography, https://dblp.org}
}

@inproceedings{NO08,
  author       = {Huy N. Nguyen and
                  Krzysztof Onak},
  title        = {Constant-Time Approximation Algorithms via Local Improvements},
  booktitle    = {49th Annual {IEEE} Symposium on Foundations of Computer Science, {FOCS}
                  2008, October 25-28, 2008, Philadelphia, PA, {USA}},
  pages        = {327--336},
  publisher    = {{IEEE} Computer Society},
  year         = {2008},
  url          = {https://doi.org/10.1109/FOCS.2008.81},
  doi          = {10.1109/FOCS.2008.81},
  bibsource    = {dblp computer science bibliography, https://dblp.org}
}

@inproceedings{RTVX11,
    author = {R. Rubinfeld and G. Tamir and S. Vardi and N. Xie},
    title = {Fast Local Computation Algorithms},
    booktitle = {Proc.\ Innovations in Computer Science Conference},
    year = {2011},
    location = {Beijing, China},
}

@inproceedings{ARVX12,
author={Noga Alon and Ronitt Rubinfeld and Shai Vardi and Ning Xie},
title={Space-efficient local computation algorithms},
booktitle={23rd Annual ACM-SIAM Symposium on Discrete Algorithms (SODA 2012)},
year={2012},
month={January},
address={Kyoto, Japan},
}

@inproceedings{Ghaffari16,
  author    = {Mohsen Ghaffari},
  title     = {An Improved Distributed Algorithm for Maximal Independent Set},
  booktitle = {Proceedings of the Twenty-Seventh Annual {ACM-SIAM} Symposium on Discrete
               Algorithms, {SODA} 2016, Arlington, VA, USA, January 10-12, 2016},
  pages     = {270--277},
  year      = {2016},
  url       = {http://dx.doi.org/10.1137/1.9781611974331.ch20},
}

@article{BarenboimEPS16,
  author    = {Leonid Barenboim and
               Michael Elkin and
               Seth Pettie and
               Johannes Schneider},
  title     = {The Locality of Distributed Symmetry Breaking},
  journal   = {J. {ACM}},
  volume    = {63},
  number    = {3},
  pages     = {20},
  year      = {2016},
}

@article{Nisan92,
  author       = {Noam Nisan},
  title        = {Pseudorandom generators for space-bounded computation},
  journal      = {Comb.},
  volume       = {12},
  number       = {4},
  pages        = {449--461},
  year         = {1992},
  url          = {https://doi.org/10.1007/BF01305237},
  doi          = {10.1007/BF01305237},
  bibsource    = {dblp computer science bibliography, https://dblp.org}
}

@inproceedings{EvenMR14,
  author    = {Guy Even and
               Moti Medina and
               Dana Ron},
  title     = {Deterministic Stateless Centralized Local Algorithms for Bounded Degree
               Graphs},
  booktitle = {Algorithms - {ESA} 2014 - 22th Annual European Symposium, Wroclaw,
               Poland, September 8-10, 2014. Proceedings},
  pages     = {394--405},
  year      = {2014},
}

@article{LeviRY17,
  author    = {Reut Levi and
               Ronitt Rubinfeld and
               Anak Yodpinyanee},
  title     = {Local Computation Algorithms for Graphs of Non-constant Degrees},
  journal   = {Algorithmica},
  volume    = {77},
  number    = {4},
  pages     = {971--994},
  year      = {2017},
}

@inproceedings{GhaffariU19,
  author    = {Mohsen Ghaffari and
               Jara Uitto},
  title     = {Sparsifying Distributed Algorithms with Ramifications in Massively
               Parallel Computation and Centralized Local Computation},
  booktitle = {Proceedings of the Thirtieth Annual {ACM-SIAM} Symposium on Discrete
               Algorithms, {SODA} 2019, San Diego, California, USA, January 6-9,
               2019},
  pages     = {1636--1653},
  publisher = {{SIAM}},
  year      = {2019},
  url       = {https://doi.org/10.1137/1.9781611975482.99},
  doi       = {10.1137/1.9781611975482.99},
  bibsource = {dblp computer science bibliography, https://dblp.org}
}

@inproceedings{chang2019complexity,
  title={{The complexity of ($\Delta + 1$) coloring in congested clique, massively parallel computation, and centralized local computation}},
  author={Chang, Yi-Jun and Fischer, Manuela and Ghaffari, Mohsen and Uitto, Jara and Zheng, Yufan},
  booktitle={Proceedings of the 2019 ACM Symposium on Principles of Distributed Computing},
  pages={471--480},
  year={2019}
}

@inproceedings{MRVX12,
author={Y. Mansour and A. Rubinstein and Shai Vardi and Ning Xie},
title={Converting Online Algorithms to Local Computation Algorithms},
booktitle={Unpublished manuscript},
year={2011},
}

@inproceedings{MansourV13,
  author    = {Yishay Mansour and
               Shai Vardi},
  title     = {A Local Computation Approximation Scheme to Maximum Matching},
  booktitle = {APPROX-RANDOM},
  year      = {2013},
  pages     = {260-273},
}

@article{HassidimMV16,
  author    = {Avinatan Hassidim and
               Yishay Mansour and
               Shai Vardi},
  title     = {Local Computation Mechanism Design},
  journal   = {{ACM} Trans. Economics and Comput.},
  volume    = {4},
  number    = {4},
  pages     = {21:1--21:24},
  year      = {2016},
}

@inproceedings{DLRR13,
 	author = {A. Dutta and R. Levi and D. Ron and R. Rubinfeld},
	title= {A Simple Online Competitive Adaptation of Lempel-Ziv
		Compression with Efficient Random Access Support},
	booktitle= {Proceedings of the Data Compression Conference (DCC)},
        year = {2013},
	pages = {113--122}
}

@inproceedings{CampagnaGR13,
  author    = {Andrea Campagna and
               Alan Guo and
               Ronitt Rubinfeld},
  title     = {Local Reconstructors and Tolerant Testers for Connectivity
               and Diameter},
  booktitle = {APPROX-RANDOM},
  year      = {2013},
  pages     = {411-424},
  ee        = {http://dx.doi.org/10.1007/978-3-642-40328-6_29},
  bibsource = {DBLP, http://dblp.uni-trier.de}
}

@inproceedings{EvenLMR17,
  author    = {Guy Even and
               Reut Levi and
               Moti Medina and
               Adi Ros{\'{e}}n},
  title     = {Sublinear Random Access Generators for Preferential Attachment Graphs},
  booktitle = {44th International Colloquium on Automata, Languages, and Programming,
               {ICALP} 2017, July 10-14, 2017, Warsaw, Poland},
  pages     = {6:1--6:15},
  year      = {2017},
}

@article{KalePS13,
  author    = {Satyen Kale and
               Yuval Peres and
               C. Seshadhri},
  title     = {Noise Tolerance of Expanders and Sublinear Expansion Reconstruction},
  journal   = {SIAM J. Comput.},
  volume    = {42},
  number    = {1},
  year      = {2013},
  pages     = {305-323},
  ee        = {http://dx.doi.org/10.1137/110837863},
  bibsource = {DBLP, http://dblp.uni-trier.de}
}

@inproceedings{behnezhad2022time,
  title={Time-optimal sublinear algorithms for matching and vertex cover},
  author={Behnezhad, Soheil},
  booktitle={2021 IEEE 62nd Annual Symposium on Foundations of Computer Science (FOCS)},
  pages={873--884},
  year={2022},
  organization={IEEE}
}

@inproceedings{behnezhad2022almost,
  title={Almost 3-approximate correlation clustering in constant rounds},
  author={Behnezhad, Soheil and Charikar, Moses and Ma, Weiyun and Tan, Li-Yang},
  booktitle={2022 IEEE 63rd Annual Symposium on Foundations of Computer Science (FOCS)},
  pages={720--731},
  year={2022},
  organization={IEEE}
}

@inproceedings{dalirrooyfard2024pruned,
  title={Pruned Pivot: Correlation Clustering Algorithm for Dynamic, Parallel, and Local Computation Models},
  author={Dalirrooyfard, Mina and Makarychev, Konstantin and Mitrovic, Slobodan},
  booktitle={Forty-first International Conference on Machine Learning},
    year={2024}
}

@article{parnas2007approximating,
  title={Approximating the minimum vertex cover in sublinear time and a connection to distributed algorithms},
  author={Parnas, Michal and Ron, Dana},
  journal={Theoretical Computer Science},
  volume={381},
  number={1-3},
  pages={183--196},
  year={2007},
  publisher={Elsevier}
}

@article{berger1994efficient,
  title={Efficient NC algorithms for set cover with applications to learning and geometry},
  author={Berger, Bonnie and Rompel, John and Shor, Peter W},
  journal={Journal of Computer and System Sciences},
  volume={49},
  number={3},
  pages={454--477},
  year={1994},
  publisher={Elsevier}
}

@inproceedings{kuhn2006price,
  title={The price of being near-sighted},
  author={Kuhn, Fabian and Moscibroda, Thomas and Wattenhofer, Roger},
  booktitle={SODA 2006: 17th ACM-SIAM Symposium on Discrete Algorithms, Miami, Florida, USA},
  year={2006}
}

@inproceedings{levi2015local,
  title={Local computation algorithms for graphs of non-constant degrees},
  author={Levi, Reut and Rubinfeld, Ronitt and Yodpinyanee, Anak},
  booktitle={Proceedings of the 27th ACM symposium on Parallelism in Algorithms and Architectures},
  pages={59--61},
  year={2015}
}

@inproceedings{bhattacharya2023dynamic,
  title={Dynamic $(1+ \epsilon)$-approximate matching size in truly sublinear update time},
  author={Bhattacharya, Sayan and Kiss, Peter and Saranurak, Thatchaphol},
  booktitle={2023 IEEE 64th Annual Symposium on Foundations of Computer Science (FOCS)},
  pages={1563--1588},
  year={2023},
  organization={IEEE}
}

@article{bhattacharya2024dynamic,
  title={Dynamic matching with better-than-2 approximation in polylogarithmic update time},
  author={Bhattacharya, Sayan and Kiss, Peter and Saranurak, Thatchaphol and Wajc, David},
  journal={Journal of the ACM},
  volume={71},
  number={5},
  pages={1--32},
  year={2024},
  publisher={ACM New York, NY}
}

@inproceedings{behnezhad2023dynamic,
  title={Dynamic algorithms for maximum matching size},
  author={Behnezhad, Soheil},
  booktitle={Proceedings of the 2023 Annual ACM-SIAM Symposium on Discrete Algorithms (SODA)},
  pages={129--162},
  year={2023},
  organization={SIAM}
}

@inproceedings{lange2022properly,
  title={Properly learning monotone functions via local correction},
  author={Lange, Jane and Rubinfeld, Ronitt and Vasilyan, Arsen},
  booktitle={2022 IEEE 63rd Annual Symposium on Foundations of Computer Science (FOCS)},
  pages={75--86},
  year={2022},
  organization={IEEE}
}

@inproceedings{lange2023agnostic,
  title={Agnostic proper learning of monotone functions: beyond the black-box correction barrier},
  author={Lange, Jane and Vasilyan, Arsen},
  booktitle={2023 IEEE 64th Annual Symposium on Foundations of Computer Science (FOCS)},
  pages={1149--1170},
  year={2023},
  organization={IEEE}
}

@article{lange2023local,
  title={Local Lipschitz filters for bounded-range functions},
  author={Lange, Jane and Linder, Ephraim and Raskhodnikova, Sofya and Vasilyan, Arsen},
  journal={arXiv preprint arXiv:2308.14716},
  year={2023}
}

@inproceedings{dorobisz2023local,
  title={Local Computation Algorithms for Hypergraph Coloring-Following Beck’s Approach},
  author={Dorobisz, Andrzej and Kozik, Jakub},
  booktitle={50th International Colloquium on Automata, Languages, and Programming (ICALP 2023)},
  year={2023},
  organization={Schloss Dagstuhl--Leibniz-Zentrum f{\"u}r Informatik}
}

@inproceedings{achlioptas2020simple,
  title={Simple local computation algorithms for the general Lov{\'a}sz Local Lemma},
  author={Achlioptas, Dimitris and Gouleakis, Themis and Iliopoulos, Fotis},
  booktitle={Proceedings of the 32nd ACM Symposium on Parallelism in Algorithms and Architectures},
  pages={1--10},
  year={2020}
}

@inproceedings{brandt2021randomized,
  title={The randomized local computation complexity of the Lov{\'a}sz local lemma},
  author={Brandt, Sebastian and Grunau, Christoph and Rozho{\v{n}}, V{\'a}clav},
  booktitle={Proceedings of the 2021 ACM Symposium on Principles of Distributed Computing},
  pages={307--317},
  year={2021}
}

@inproceedings{behnezhad2023local,
  title={Local Computation Algorithms for Maximum Matching: New Lower Bounds},
  author={Behnezhad, Soheil and Roghani, Mohammad and Rubinstein, Aviad},
  booktitle={2023 IEEE 64th Annual Symposium on Foundations of Computer Science (FOCS)},
  pages={2322--2335},
  year={2023},
  organization={IEEE}
}

@inproceedings{mitrovic2024locally,
  title={Locally Computing Edge Orientations},
  author={Mitrovi{\'c}, Slobodan and Rubinfeld, Ronitt and Singhal, Mihir},
  booktitle={32nd Annual European Symposium on Algorithms (ESA 2024)},
  year={2024},
  organization={Schloss Dagstuhl--Leibniz-Zentrum f{\"u}r Informatik}
}

@inproceedings{BF90,
    author = {D. Beaver J. Feigenbaum},
    title = {Hiding instances in multi-oracle queries },
    booktitle = {Proc.\ 7th Annual STACS conference},
    year = {1990},
    pages = {34-48},
}

@inproceedings{L89,
    author = {R. Lipton},
    title = {New directions in testing},
    booktitle = {Proc.\ DIMACS Workshop on Distributed Computing and Cryptography},
    year = {1989},
}

@inproceedings{GLR+91,
    author = {P. Gemmell and R. Lipton and R. Rubinfeld and M. Sudan and A. Wigderson  },
    title = {Self-testing/correcting for polynomials and for approximate functions},
    booktitle = proc23 # stoc,
    year = {1991},
	pages = {32--42}
}

@article{FF93,
    author = {J. Feigenbaum and L. Fortnow},
    title = {Random self-reducibility of complete sets},
    journal = sicomp,
    year = 1993,
    volume = 22,
    issue = {5},
    pages = {994-- 1005},
}

@inproceedings{KT00,
	author = {J. Katz and L. Trevisan},
	title = {On the efficiency of local decoding procedures for error-correcting codes},
	booktitle ={STOC},
	year = {2000},
	pages = {80--86}
}

@article{Yek10,
author={Sergey Yekhanin},
title={Private information retrieval},
journal={Commun. ACM},
volume={53},
number={4},
pages={68-73},
year={2010}
}

@article{GRS00,
author={Oded Goldreich and Ronitt Rubinfeld and Madhu Sudan},
title={Learning polynomials with queries: the highly noisy case},
journal={SIAM Journal on Discrete Mathematics},
volume={13},
number={4},
pages={535-570},
year={2000}
}

@article{AS03,
author={Sanjeev Arora and Madhu Sudan},
title={Improved low-degree testing and its applications},
journal={Combinatorica}, 
volume={23},
number={3},
pages={365-426}, 
year={2003}
}

@article{STV01,
author={Madhu Sudan and Luca Trevisan and Salil Vadhan},
title={Pseudorandom generators without the {XOR} lemma},
journal={Journal of Computer and System Sciences}, 
volume={62},
number={2},
pages={236-266}, 
year={2001}
}

@inproceedings{GKZ08,
author={Parikshit Gopalan and Adam R. Klivans and David Zuckerman},
title={List-decoding {R}eed {M}uller codes over small fields},
booktitle={Proc.\ 40th Annual ACM Symposium on Theory of Computing},pages={265-274},
year={2008}
}

@inproceedings{IW97,
author={Russell Impagliazzo and Avi Wigderson},
title={{\em P = B{PP}} if {{\em E}} requires exponential circuits: Derandomizing the {XOR} lemma},
booktitle={Proc.\ 29th Annual ACM Symposium on the Theory of Computing},
pages={220-229}, 
year={1997}
}

@misc{KS09,
author={Swastik Kopparty and Shubhangi Saraf},
title={Local list-decoding and testing of sparse random linear codes from high-error},
note={Technical Report 115, Electronic Colloquium on Computational Complexity (ECCC), 2009}
}

@misc{BET10,
author={Avraham Ben-Aroya and Klim Efremenko and Amnon Ta-Shma},
title={Local List-Decoding with a Constant Number of Queries},
note={Technical Report TR10-047, Electronic Colloquium on Computational Complexity, April 2010}
}

@techreport{MuthuSZ,
	author = { S. Muthukrishnan and M. Strauss and X. Zheng},
	title = {Workload-Optimal Histograms on Streams},
	institution = { DIMACS Technical Report},
	number = {2005-19},
	year = {2005}
}

@inproceedings{SG,
author = {K. Sadakane and R. Grossi},
title = {Squeezing Succinct Data Structures into Entropy Bounds},
booktitle = soda,
year = 2006,
pages = {1230--1239}
}

@inproceedings{FV07,
  author    = {P. Ferragina and
               R. Venturini},
  title     = {A simple storage scheme for strings achieving entropy bounds},
  booktitle = soda,
  year      = {2007},
  pages     = {690-696},
  ee        = {http://doi.acm.org/10.1145/1283383.1283457},
  bibsource = {DBLP, http://dblp.uni-trier.de}
}

@inproceedings{GN,
author = {R. Gonz\'{a}lez and G. Navarro},
title ={Statistical encoding of succint data structures},
booktitle = {Proceedings of CPM},
pages = {295--306},
year = 2006
}

@article{alon2006algorithmic,
  title={Algorithmic construction of sets for k-restrictions},
  author={Alon, Noga and Moshkovitz, Dana and Safra, Shmuel},
  journal={ACM Transactions on Algorithms (TALG)},
  volume={2},
  number={2},
  pages={153--177},
  year={2006},
  publisher={ACM New York, NY, USA}
}

@inproceedings{dinur2014analytical,
  title={Analytical approach to parallel repetition},
  author={Dinur, Irit and Steurer, David},
  booktitle={Proceedings of the forty-sixth annual ACM symposium on Theory of computing},
  pages={624--633},
  year={2014}
}

@article{feige1998threshold,
  title={A threshold of ln n for approximating set cover},
  author={Feige, Uriel},
  journal={Journal of the ACM (JACM)},
  volume={45},
  number={4},
  pages={634--652},
  year={1998},
  publisher={ACM New York, NY, USA}
}

@inproceedings{moshkovitz2012projection,
  title={The projection games conjecture and the NP-hardness of ln n-approximating set-cover},
  author={Moshkovitz, Dana},
  booktitle={International Workshop on Approximation Algorithms for Combinatorial Optimization},
  pages={276--287},
  year={2012},
  organization={Springer}
}

@inproceedings{raz1997sub,
  title={A sub-constant error-probability low-degree test, and a sub-constant error-probability PCP characterization of NP},
  author={Raz, Ran and Safra, Shmuel},
  booktitle={Proceedings of the twenty-ninth annual ACM symposium on Theory of computing},
  pages={475--484},
  year={1997}
}

@inproceedings{trevisan2001non,
  title={Non-approximability results for optimization problems on bounded degree instances},
  author={Trevisan, Luca},
  booktitle={Proceedings of the thirty-third annual ACM symposium on Theory of computing},
  pages={453--461},
  year={2001}
}

@article{biswas2024average,
  title={Average-Case Local Computation Algorithms},
  author={Biswas, Amartya Shankha and Cao, Ruidi and Pyne, Edward and Rubinfeld, Ronitt},
  journal={arXiv preprint arXiv:2403.00129},
  year={2024}
}

@article{ACC+08,
  author    = {N. Ailon and B. Chazelle and S. Comandur and D. Liu},
  title     = {Property-Preserving Data Reconstruction},
  journal   = {Algorithmica},
  volume    = {51},
   number    ={2},
  year      = {2008},
  pages     = {160--182}
}

@article{SS10,
	author    = {M. E. Saks and C. Seshadhri},
	title     = {Local monotonicity reconstruction},
	journal   = {SIAM Journal on Computing},
	volume    = {39(7)},
	year      = {2010},
	pages     = {2897--2926}
}

@article{BhattacharyyaGJJRW12,
  author    = {Arnab Bhattacharyya and
               Elena Grigorescu and
               Madhav Jha and
               Kyomin Jung and
               Sofya Raskhodnikova and
               David P. Woodruff},
  title     = {Lower Bounds for Local Monotonicity Reconstruction from
               Transitive-Closure Spanners},
  journal   = {SIAM J. Discrete Math.},
  volume    = {26},
  number    = {2},
  year      = {2012},
  pages     = {618-646},
}

@inproceedings{AwasthiJMR12,
  author    = {Pranjal Awasthi and
               Madhav Jha and
               Marco Molinaro and
               Sofya Raskhodnikova},
  title     = {Limitations of Local Filters of Lipschitz and Monotone Functions},
  booktitle = {APPROX-RANDOM},
  year      = {2012},
  pages     = {374-386},
}

@article{jha2013testing,
  title={Testing and reconstruction of Lipschitz functions with applications to data privacy},
  author={Jha, Madhav and Raskhodnikova, Sofya},
  journal={SIAM Journal on Computing},
  volume={42},
  number={2},
  pages={700--731},
  year={2013},
  publisher={SIAM}
}

@article{reingold2016new,
  title={New techniques and tighter bounds for local computation algorithms},
  author={Reingold, Omer and Vardi, Shai},
  journal={Journal of Computer and System Sciences},
  volume={82},
  number={7},
  pages={1180--1200},
  year={2016},
  publisher={Elsevier}
}

@article{even2018best,
  title={Best of two local models: Centralized local and distributed local algorithms},
  author={Even, Guy and Medina, Moti and Ron, Dana},
  journal={Information and Computation},
  volume={262},
  pages={69--89},
  year={2018},
  publisher={Elsevier}
}

@article{levi2017centralized,
  title={A (centralized) local guide},
  author={Levi, Reut and Medina, Moti and others},
  journal={Bulletin of EATCS},
  volume={2},
  number={122},
  year={2017}
}

@article{kuhn2016local,
  title={Local computation: Lower and upper bounds},
  author={Kuhn, Fabian and Moscibroda, Thomas and Wattenhofer, Roger},
  journal={Journal of the ACM (JACM)},
  volume={63},
  number={2},
  pages={1--44},
  year={2016},
  publisher={ACM New York, NY, USA}
}

@inproceedings{chang2023complexity,
  title={The complexity of distributed approximation of packing and covering integer linear programs},
  author={Chang, Yi-Jun and Li, Zeyong},
  booktitle={Proceedings of the 2023 ACM Symposium on Principles of Distributed Computing},
  pages={32--43},
  year={2023}
}

\newpage

\end{document}